\newtheorem{theorem}{Theorem}
\newtheorem{lemma}[theorem]{Lemma}
\newtheorem{claim}[theorem]{Claim}
\newtheorem{corollary}[theorem]{Corollary}
\newtheorem{proposition}[theorem]{Proposition}
\theoremstyle{definition}
\newtheorem{definition}[theorem]{Definition}
\newcommand{\poly}{\mathrm{poly}}
\newcommand{\Oh}{\ensuremath{\mathcal{O}}}
\newcommand{\NN}{\ensuremath{\mathbb{N}}}
\def\T{\ensuremath{\mathcal{T}}}
\def\F{\ensuremath{\mathcal{F}}\xspace}
\newcommand{\tw}{\ensuremath{\mathsf{tw}}}
\newcommand{\tctw}{\ensuremath{\mathsf{tctw}}}
\newcommand{\LCA}{\mathsf{lca}}
\newcommand{\OPT}{\mathsf{OPT}}
\newcommand{\splitterF}{\mathfrak{F}}
\newcommand{\adh}{\ensuremath{\mathsf{adh}}}
\newcommand{\width}{\ensuremath{\mathsf{width}}}
\newcommand{\dist}{\ensuremath{\mathsf{dist}}}
\newcommand{\polylog}{\text{\rm poly\!}\log}
\newcommand{\depth}{\ensuremath{\mathsf{depth}}}
\newcommand{\MX}{\mathsf{MAX}_{\F}}
\newcommand{\G}{\mathcal{G}}
\newcommand{\Rr}{\mathcal{R}}
\newcommand{\Ss}{\mathcal{S}}
\newcommand{\Ii}{\mathcal{I}}
\newcommand{\extnd}[1]{\mathsf{ext}(#1)}
\newcommand{\cop}[1]{\mathsf{copy}(#1)}
\def\cqedsymbol{\ifmmode$\lrcorner$\else{\unskip\nobreak\hfil
\penalty50\hskip1em\null\nobreak\hfil$\lrcorner$
\parfillskip=0pt\finalhyphendemerits=0\endgraf}\fi} 
\newcommand{\cqed}{\renewcommand{\qed}{\cqedsymbol}}
\newcommand{{\input{.pdf_tex}}}[1]{{\input{#1.pdf_tex}}}
\newcommand{\svg}[2]{\def\svgwidth{#1}{\input{#2.pdf_tex}}}
\newcommand{\defparproblem}[4]{
	\vspace{1mm}
	\noindent\fbox{
		\begin{minipage}{0.95\textwidth}
			#1 \\
			{\bf{Input:}} #2  \\
			{\bf{Parameter:}} #3  \\			
			{\bf{Question:}} #4
		\end{minipage}
	}
}
\title{Linear kernels for edge deletion problems\\ to immersion-closed graph classes}
\newcommand{\septhanks}{\ $^,$}
\let\@fnsymbol\@alph\makeatother
\author{Archontia C. Giannopoulou\thanks{Technische Universit\"{a}t Berlin, Berlin, Germany.}\septhanks\thanks{The research of this author has been supported by the European Research Council (ERC) under the European Union’s Horizon 2020 research and innovation programme (ERC consolidator grant DISTRUCT, agreement No 648527) and by the Warsaw Center of Mathematics and Computer Science.} 
\and 
Michał Pilipczuk\thanks{Institute of Informatics, University of Warsaw, Poland.}\septhanks\thanks{Supported by the Polish National Science Center grant SONATA DEC-2013/11/D/ST6/03073.}
\and 
Jean-Florent Raymond$^{\text{c},}$\thanks{AlGCo project team, CNRS, LIRMM, Montpellier,
    France.}\septhanks\thanks{Supported by the Polish
    National Science Centre grant PRELUDIUM DEC-2013/11/N/ST6/02706.}
\and Dimitrios M. Thilikos$^{\text{e,}}$\thanks{Department of Mathematics, National and Kapodistrian University of Athens, Greece.}
\and 
Marcin Wrochna$^{\text{c,d}}$
}
\begin{document}
\maketitle

\begin{abstract}
\noindent Suppose $\F$ is a finite family of graphs. 
We consider the following meta-problem, called {\sc{$\F$-Immersion Deletion}}: 
given a graph $G$ and integer $k$, decide whether the deletion of at most $k$ edges of $G$  can result in a graph that does not contain any graph from $\F$ as an immersion.
This problem is a close relative of the {\sc{$\F$-Minor Deletion}} problem studied by Fomin et al.~[FOCS 2012], where one deletes vertices in order to remove all minor models of graphs from $\F$.
We prove that whenever all graphs from $\F$ are connected  and at least one graph of $\F$ is planar and subcubic, then the  {\sc{$\F$-Immersion Deletion}} problem admits:
\begin{itemize} 
\item a constant-factor approximation algorithm running in time $\Oh(m^3  \cdot n^3 \cdot \log m)$;
\item a linear kernel that can be computed in time $\Oh(m^4 \cdot n^3 \cdot \log m)$; and
\item a $\Oh(2^{\Oh(k)} + m^4 \cdot n^3 \cdot \log m)$-time fixed-parameter algorithm,
\end{itemize}
where $n,m$ count the vertices and edges of the input graph.
These results mirror the findings of Fomin et al.~[FOCS 2012], who obtained a similar set of algorithmic results for {\sc{$\F$-Minor Deletion}},  
under the assumption that at least one graph from $\F$ is planar.
An important difference is that we are able to obtain a {\sl linear} kernel for {\sc{$\F$-Immersion Deletion}}, 
while the exponent of the kernel of Fomin et al. for {\sc{$\F$-Minor Deletion}}  depends heavily on the family $\F$.
In fact, this dependence is unavoidable under plausible complexity assumptions, as proven by Giannopoulou et al.~[ICALP 2015].
This reveals that the kernelization complexity of {\sc{$\F$-Immersion Deletion}} is quite different than that of {\sc{$\F$-Minor Deletion}}.

\end{abstract}

\section{Introduction}\label{sec:intro}
\paragraph*{On the {\sc{$\F$-Minor Deletion}} problem.} 
Given an class of graphs ${\cal G}$, we denote by ${\bf obs}_{\rm mn}({\cal G})$ the {\em minor-obstruction set} of ${\cal G}$, that is  
the set of minor-minimal graphs that do not belong in ${\cal G}$. 
Let us fix some finite family of graphs $\F$. 
A graph $G$ is called {\em{$\F$-minor-free}} if $G$ does not contain any graph from $\F$ as a minor.
The celebrated Graph Minors Theorem of Robertson and Seymour~\cite{RobertsonS04} implies that for 
every family of graphs $\Pi$ that is  closed under taking minors, the set $\F_{\cal G}={\bf obs}_{\rm mn}({\cal G})$
is finite. In other words, ${\cal G}$ is characterized by the minor-exclusion of finite set of graphs; 
that is ${\cal G}$ is exactly the class of $\F_{\cal G}$-minor-free graphs.
Hence, studying the classes of $\F$-minor-free graphs for finite families $\F$ is the same as studying general minor-closed properties of graphs.

Fomin et al.~\cite{FominLMS12} performed an in-depth study of the following parameterized\footnote{A {\em parameterized} problem can be seen as a subset of $\Sigma^*\times\mathbb{N}$ where its instances are pairs $(x,k)\in \Sigma^*\times\mathbb{N}$. For graph problems, the string $x$ usually encodes a graph $G$. 
A parameterized problem admits an {\sf FPT}-algorithm, or, equivalently, belongs in the parameterized complexity class {\sf FPT}, if it can be solved by an $f(k)\cdot {|x|}^{O(1)}$ step algorithm. See~\cite{FlumGrohebook,platypus,Niedermeierbook06} for more on parameterized algorithms and complexity.} problem, named {\sc{$\F$-Minor Deletion}}%
\footnote{Fomin et al. use the name {\sc{$\F$-Deletion}}, but we choose to use the word ``minor'' explicitly to distinguish it from immersion-related problems that we consider in this paper.}:
{\sl Given a graph $G$ and an integer parameter $k$, decide whether it is possible to remove at most $k$ vertices from $G$ to obtain an $\F$-minor-free graph}.
By considering different families $\F$, the {\sc{$\F$-Minor Deletion}} problem generalizes a number of concrete problems of prime importance in parameterized complexity, such as
{\sc{Vertex Cover}}, {\sc{Feedback Vertex Set}}, or {\sc{Planarization}}.
It is easy to see that, for every fixed $k$, the graph class ${\cal G}_{k,{\cal F}}^{\rm mn}$, consisting of the graphs in the YES-instances $(G,k)$ of {\sc{$\F$-Minor Deletion}}, is closed under taking of minors. 
We define ${\cal O}_{k}^{\rm mn}={\bf obs}_{\rm mn}({\cal G}_{k,{\cal F}}^{\rm mn})$.
By the fact that ${\cal O}_{k}^{\rm mn}$  is finite 
and the meta-algorithmic consequences 
of the Graph Minors series of Robertson and Seymour~\cite{RobertsonS04,RobertsonRXIII}, it follows (non-constructively) that  {\sc{$\F$-Minor Deletion}}  admits an {\sf FPT}-algorithm. The optimization of 
the running time of such {\sf FPT}-algorithms for several instantiations of ${\cal F}$ has been an interesting  project
in parameterized algorithm design and so far it has been focused on problems generated by 
minor-closed graph classes.

The goal of Fomin et al.~\cite{FominLMS12} was to obtain results of general nature for {\sc{$\F$-Minor Deletion}}, which would explain why many concrete problems captured as its subcases are 
efficiently solvable using parameterized algorithms and kernelization. This has been achieved under the assumption that $\F$ contains at least one planar graph. 
More precisely, for any class $\F$ that contains at least one planar graph, the work of Fomin et al.~\cite{FominLMS12} gives the following:
\begin{enumerate}[(i)]
\item a randomized constant-factor approximation running in time $\Oh(nm)$;
\item a polynomial kernel for the problem; that is, a polynomial-time algorithm that, given an instance $(G,k)$ of {\sc{$\F$-Minor Deletion}}, 
      outputs an equivalent instance $(G',k')$ with $k'\leq k$ and $|G'|\leq \Oh(k^c)$, for some constant $c$ that depends on~$\F$;
\item an {\sf FPT}-algorithm solving {\sc{$\F$-Minor Deletion}} in time $2^{\Oh(k)}\cdot n^2$.
\item a proof that every graph in ${\cal O}_{k}^{\rm mn}$  has $k^{c_{{\cal F}}}$ vertices for some constant $c_{\cal F}$ that depends (non-constructively) on ${\cal F}$.
\end{enumerate}
We remark that, for the {\sf FPT}-algorithm, the original paper of Fomin et al.~\cite{FominLMS12} needs one more technical assumption, namely that all the graphs from $\F$ are connected.
The fact that this condition can be lifted was proved in a subsequent work of Kim et al.~\cite{KimLPRRSS13}.

The assumption that $\F$ contains at least one planar graph is crucial for the approach of Fomin et al.~\cite{FominLMS12}.
Namely, from the Excluded Grid Minor Theorem of Robertson and Seymour~\cite{RobertsonS86} it follows that for such families $\F$, 
 $\F$-minor-free graphs have treewidth bounded by a constant depending only of~$\F$.
Therefore, a YES-instance of {\sc{$\F$-Minor Deletion}} roughly has to look like a constant-treewidth graph plus $k$ additional vertices that can have arbitrary connections.
Having exposed this structure, Fomin et al.~\cite{FominLMS12} apply protrusion-based techniques that originate in the work on {\em{meta-kernelization}}~\cite{BodlaenderFLPST09,FominLST10}. 
Roughly speaking, the idea is to identify large parts of the graphs that have constant treewidth and a small interface towards the rest of the graph (so-called {\em{protrusions}}), 
which can be replaced by smaller gadgets with the same combinatorial behaviour. Such preprocessing based on {\em{protrusion replacement}} is the base of all three aforementioned results for {\sc{$\F$-Minor Deletion}}.
In the absence of a constant bound on the treewidth of an $\F$-minor-free graph, the technique breaks completely. 
In fact, the kernelization complexity of {\sc{Planarization}}, that is, {\sc{$\F$-Minor Deletion}} for $\F=\{K_5,K_{3,3}\}$, is a notorious open problem. 

An interesting aspect of the work of Fomin et al.~\cite{FominLMS12} is that the exponent of the polynomial bound on the size of the kernel for {\sc{$\F$-Minor Deletion}} grows quite rapidly with the family $\F$.
Recently, it has been shown by Giannopoulou et al.~\cite{GiannopoulouJLS15} that in general this growth is probably unavoidable: 
For every constant $\eta$, the {\sc{Treewidth-$\eta$ Deletion}} problem (delete $k$ vertices to obtain a graph of treewidth at most $\eta$) has no kernel with $\Oh(k^{\eta/4-\epsilon})$ vertices for any $\epsilon>0$, 
unless $\mathsf{NP}\subseteq \mathsf{coNP}/\mathsf{poly}$. Since graphs of treewidth $\eta$ can be characterized by a finite set of forbidden minors $\F_\eta$, at least one of which is planar,
this refutes the hypothesis that all {\sc{$\F$-Minor Deletion}} problems admit polynomial kernels with a uniform bound on the degree of the polynomial. 
However, as shown by Giannopoulou et al.~\cite{GiannopoulouJLS15}, such {\em{uniform kernelization}} can be achieved for some specific problem families, like vertex deletion to graphs of constant tree-depth.

\paragraph*{Immersion problems.}
Recall that a graph $H$ can be {\em{immersed}} 
into a graph $G$ (or that $H$ is an {\em immersion} of $G$) if there is a mapping from $H$ to $G$ that maps vertices of $H$ to pairwise different vertices of $G$ and edges of $H$ to pairwise
{\sl edge-disjoint paths} connecting the images of the respective endpoints\footnote{In this paper we consider {\em{weak immersions}} only, as opposed to {\em{strong immersions}} 
where the paths are forbidden to traverse images of vertices other than the endpoints of the corresponding edge.}. 
Such a mapping is called an {\em{immersion model}}.
Just like the minor relation, the immersion relation imposes a partial order on the class of graphs.
Alongside with the minor order, Robertson and Seymour~\cite{RobertsonS10} proved that  graphs
are also well-quasi-ordered under the immersion order, i.e., every set of graphs that are 
pair-wise non-comparable with respect to the immersion relation is finite.
This implies that for every  class of graphs ${\cal G}$ that is closed under taking immersions 
the set ${\bf obs}_{\rm im}({\cal G})$, containing the immersion minimal graphs that do not belong in ${\cal G}$, is finite (we call ${\bf obs}_{\rm im}{\cal G}$ {\em immersion obstruction set} of ${\cal G}$). Therefore
${\cal G}$ can be characterized by a finite set of forbidden immersions.
The general intuition is that immersion is a containment relation on graphs that corresponds to {\sl edge} cuts, whereas the minor relation corresponds to {\sl vertex} cuts.
Also, the natural setting for immersions is the setting of {\em{multigraphs}}. Hence, from now on all the graphs considered in this paper may have parallel edges connecting the same pair of endpoints. 

Recently, there has been a growing interest in immersion-related problems~\cite{MarxW14,DvorakW15astru,Ganian0S15,KimOPST15,GiannopoulouKRT16pack,Giannopoulou2014effe,Wollan15,BoothGLR1999,Devos2014,BelmonteGLT2016thes,GovindanR01awea,DvorakY15comp} both from the combinatorial and the algorithmic point of view.
Most importantly for us, Wollan proved in~\cite{Wollan15} an analog of the Excluded Grid Minor Theorem, 
which relates the size of the largest wall graph that is contained in a graph as an immersion 
with a new graph parameter called {\em{tree-cut width}}.
By a {\em{subcubic graph}} we mean a graph of maximum degree at most $3$. 
The following theorem  follows from 
the work of Wollan~\cite{Wollan15} and 
summarizes the conclusions of this work that are important for us. 

\begin{theorem}[\!\!\cite{GiannopoulouKRT16pack}]\label{thm:woland}
For every graph $H$ that is planar and subcubic there exists a constant $a_H$, such that every graph
that does not contain  $H$  as an immersion  has tree-cut width bounded by $a_H$. 
\end{theorem}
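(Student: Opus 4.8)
The plan is to derive the statement from two black boxes: on the one hand, Wollan's tree-cut-width analogue of the Excluded Grid Minor Theorem, which asserts the existence of a function $w\colon \NN \to \NN$ such that every graph of tree-cut width more than $w(r)$ contains the $r$-wall $W_r$ as an immersion; and on the other hand, the folklore fact that every planar subcubic graph occurs as an immersion of a sufficiently large wall. Granting these, the theorem is a one-line contrapositive argument together with transitivity of the immersion relation.

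First I would settle the second ingredient. Fix a planar subcubic graph $H$. It is classical that every planar graph is a minor of some wall (equivalently, of a sufficiently large grid); let $W_{r(H)}$ be such a wall, where $r(H)$ is a constant depending only on $|V(H)|$ and $|E(H)|$. Since $H$ has maximum degree at most $3$, being a minor of $W_{r(H)}$ is equivalent to being a topological minor of $W_{r(H)}$: each branch set in the model can be contracted down to a single branch vertex while keeping only paths joining them, because no branch vertex needs degree exceeding $3$. Finally, a topological minor is in particular a (weak) immersion — a subdivision of $H$ sitting inside $W_{r(H)}$ directly yields an immersion model of $H$, taking the subdivision paths as the (necessarily edge-disjoint) edge-paths. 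Hence $H$ is an immersion of $W_{r(H)}$.

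To conclude, set $a_H := w(r(H))$, a constant depending only on $H$. Let $G$ be a graph that does not contain $H$ as an immersion, and suppose toward a contradiction that $\tctw(G) > a_H = w(r(H))$. Then by Wollan's theorem $G$ contains $W_{r(H)}$ as an immersion; but the immersion relation is transitive, so composing an immersion model of $H$ in $W_{r(H)}$ with an immersion model of $W_{r(H)}$ in $G$ produces an immersion model of $H$ in $G$, contradicting the choice of $G$. Therefore $\tctw(G) \le a_H$, as claimed.

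The main obstacle is not technical depth but bookkeeping: one must pin down the precise form of Wollan's result as it appears in~\cite{Wollan15} (that it is phrased in terms of \emph{walls} and for \emph{weak} immersions, matching the conventions fixed in this paper), and state cleanly the ``planar subcubic graph is an immersion of a wall'' fact, where the passage through topological minors of grids/walls requires a little care because a wall carries a rigid degree-$3$ structure rather than being a grid. Once these two statements are fixed, the composition is immediate and the value of $a_H$ is explicit in terms of $w$ and the size of $H$.
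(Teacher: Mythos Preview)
Your proposal is correct and follows essentially the same approach as the paper: the paper (in its proof of the quantitative restatement, Theorem~\ref{thm:Ffree_tctw}) also combines Wollan's result that large tree-cut width forces a wall immersion with the fact that every planar subcubic graph immerses into a sufficiently large wall, and then concludes by transitivity. The only difference is cosmetic: the paper records the explicit form of Wollan's bound ($4r^{10}\cdot w(r)$ with $w(r)$ the Excluded Grid bound) and plugs in Chuzhoy's polynomial estimate to obtain a concrete $\Oh(r^{30})$ dependence, whereas you leave $a_H$ abstract.
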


In other words, for any family $\F$ of graphs that contains some planar subcubic graph, the tree-cut width of $\F$-immersion-free graphs is bounded by a universal constant depending on $\F$ only.
In Section~\ref{sec:prelims} we discuss the precise definition of tree-cut width and how exactly Theorem~\ref{thm:woland} follows from the work of Wollan~\cite{Wollan15}.
Also, note that if a family of graphs $\F$ does not contain any planar subcubic graph, then there is no uniform bound on the tree-cut width of $\F$-immersion-free graphs.
Indeed, wall graphs are then $\F$-immersion-free, because all their immersions are planar and subcubic, and they have unbounded tree-cut width.

After the introduction of tree-cut width by Wollan~\cite{Wollan15}, the new parameter gathered substantial interest from the algorithmic and combinatorial community~\cite{Ganian0S15,MarxW14,GiannopoulouKRT16pack,KimOPST15}.
It seems that tree-cut width serves  the same role for immersion-related problems as treewidth serves for minor-related problems and, in a sense, it can be seen as an ``edge-analog'' of treewidth.
In particular, given the tree-cut width bound of Theorem~\ref{thm:woland} and the general approach of Fomin et al.~\cite{FominLMS12} to {\sc{$\F$-Minor Deletion}},
it is natural to ask whether the same kind of results can be obtained for immersions where
the considered modification is edge removal instead of vertex removal.
More precisely, fix a finite family of graphs $\F$ containing some planar subcubic graph and consider the following {\sc{$\F$-Immersion Deletion}} problem: given a graph $G$ and an integer $k$, 
determine whether it is possible to delete at most $k$ edges
 of $G$ in order to obtain a graph that does not admit any graph from $\F$ as an immersion.
 
Parallel to the  case of {\sc{$\F$-Minor Deletion}},
for every fixed $k$, the graph class ${\cal G}^{\rm im}_{k,{\cal F}}$ consisting of the graphs in the YES-instances $(G,k)$ of {\sc{$\F$-Immersion Deletion}} is closed under taking of immersions\footnote{Notice that if we  
consider deletion of vertices instead of edges, then the graph class ${\cal G}_{k}^{\rm im}$  is {\sl not} closed under taking immersions (for example, in a star on $7$ vertices with duplicated edges, deleting one vertex makes it $K_3$-immersion-free, but this `duplicated' star immerses $2K_3$, which has no such vertex). This is the main reason why we believe that \emph{edge} deletion gives a more suitable counterpart to {\sc{$\F$-Minor Deletion}} for the case of immersions.}, therefore ${\cal O}_{k}^{\rm im}={\bf obs}_{\rm im}({\cal G}^{\rm im}_{k,{\cal F}})$
is a finite set, by the well-quasi-ordering 
of graphs under immersions~\cite{RobertsonS10}. Together with the immersion-testing algorithm of Grohe et al.~\cite{GroheKMW11find}, this implies that {\sc{$\F$-Immersion Deletion}} admits (non-constructively) 
an {\sf FPT}-algorithm. This naturally  induces the parallel project of optimizing the performance
of such {\sf FPT}-algorithms
for several instantiations of ${\cal F}$. 
More concretely, is it possible to extend the general framework of Fomin et al.~\cite{FominLMS12} to obtain efficient approximation, kernelization, and {\sf FPT} algorithms also for {\sc{$\F$-Immersion Deletion}}?
Theorem~\ref{thm:woland} suggests that the suitable analog of the assumption from the minor setting that $\F$ contains a planar graph 
should be the assumption that at least one graph from $\F$ is planar and subcubic.

\paragraph*{Our results.}
In this work we give a definitive positive answer to this question. The following two theorems gather our main results; for a graph $G$, by $|G|$ and $\|G\|$ we denote the cardinalities of the vertex and edge sets
of $G$, respectively.

\begin{theorem}[Constant factor approximation]\label{thm:main-apx}
Suppose $\F$ is a finite family of connected graphs and at least one member of $\F$ is planar and subcubic. 
Then there exists an algorithm that, given a graph $G$, runs in time $\Oh(\|G\|^3\log \|G\| \cdot |G|^3)$ and outputs a subset of edges $F\subseteq E(G)$ such that
$G-F$ is $\F$-immersion-free and the size of $F$ is at most $c_{\mathrm{apx}}$ times larger than the optimum size of a subset of edges with this property, 
for some constant $c_{\mathrm{apx}}$ depending on $\F$ only.
\end{theorem}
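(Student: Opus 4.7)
My approach will mirror the general strategy of Fomin et al.~\cite{FominLMS12} for $\F$-Minor Deletion, using tree-cut width (via Theorem~\ref{thm:woland}) in place of treewidth and edge deletions in place of vertex deletions. The central observation is that any optimum edge solution $F^\ast\subseteq E(G)$ of size $\OPT$ leaves behind a graph of tree-cut width at most $a_\F$. Pulling this structure back to $G$ shows that $G$ decomposes into a ``core'' of $\Oh(\OPT)$ vertices (those incident to edges of $F^\ast$) together with a family of \emph{edge-protrusions}: subgraphs of bounded tree-cut width attached to the core through narrow edge-boundaries.

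The algorithm I would design proceeds iteratively. In each round it (i) computes an approximate tree-cut decomposition of the current graph, for instance with the algorithm of Kim et al.~\cite{KimOPST15}; (ii) if the width is already bounded by $a_\F$, certifies that the graph is $\F$-immersion-free and outputs the accumulated deletion set; and (iii) otherwise identifies within the decomposition either a small-sized witness — a bounded set of edges whose deletion destroys an $H$-immersion for some $H\in\F$ — which is added to the solution, or a large edge-protrusion that can be replaced by a smaller ``equivalent'' gadget via a protrusion-replacement subroutine. The running time budget of $\Oh(\|G\|^3\cdot|G|^3\log\|G\|)$ accommodates $\Oh(|G|)$ such rounds, each dominated by the cost of approximating tree-cut width and of performing the replacement.

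Correctness and the approximation ratio would be enforced by charging every edge added to the partial solution against a distinct edge of $F^\ast$, up to a constant factor depending only on $\F$. This in turn rests on two ingredients: (a) a \emph{protrusion-replacement lemma} showing that every sufficiently large edge-protrusion admits a smaller equivalent gadget changing the optimum only by a bounded additive constant; and (b) a guarantee that once no replacement is applicable, a short witness immersion can always be exhibited in a bounded portion of the graph, so that progress is always possible outside of the $\F$-immersion-free case.

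The hard part will be the protrusion-replacement lemma for edge-protrusions in tree-cut decompositions. In the minor/vertex-deletion setting, finiteness of the equivalence classes of protrusions follows from MSO-definability over tree decompositions together with Myhill--Nerode-style arguments. Here the analogue must be developed over tree-cut decompositions with an equivalence that respects both edge-deletion costs and immersion models whose edge-paths may cross the protrusion boundary; an additional subtlety is that in the immersion setting the natural notion of boundary is an edge-cut rather than a vertex separator, so the index of the Myhill--Nerode equivalence has to be bounded in terms of the adhesion size rather than the bag size. Establishing this finiteness and making the replacement constructive in polynomial time is, I expect, the technically most demanding step, and it dictates the precise form of the stated running time.
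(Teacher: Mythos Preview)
Your plan has two genuine gaps. The first is a logical slip in step~(ii): Theorem~\ref{thm:woland} says that $\F$-immersion-free graphs have tree-cut width at most $a_\F$, not the converse. A graph of tree-cut width $\le a_\F$ may well contain members of $\F$ as immersions (e.g.\ any $H\in\F$ itself). What bounded tree-cut width buys you is the ability to solve the problem exactly by dynamic programming, which is a different statement from ``certifies $\F$-immersion-freeness''. Relatedly, Kim et al.'s algorithm, when $\tctw(G)>r$, returns no decomposition at all, so in step~(iii) there is nothing to work ``within''.

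The second and more fundamental gap is your dichotomy in~(iii)/(b): ``either a large replaceable protrusion exists, or a short witness immersion can be exhibited in a bounded portion of the graph''. This is false, and the paper's main technical point in the approximation section is precisely to identify and handle the counterexamples. After all protrusions have been exhausted, the graph can still have $\|G\|$ arbitrarily large compared to $\OPT$, because of \emph{thetas} (large multiplicities of parallel edges) and more generally \emph{bouquets} (many isomorphic constant-size $2$-protrusions attached to the same one or two vertices). Neither structure is itself a protrusion of bounded boundary, and an immersion model of some $H\in\F$ need touch only $O(\|H\|)$ of its elements, so deleting a ``short witness'' makes no measurable progress against $\OPT$. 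The paper's algorithm therefore does something different: after exhaustive protrusion replacement it prunes every bouquet and theta to constant size, takes the \emph{entire} remaining edge set $\Delta$, proves the telescoping inequality $|\Delta|\le c\cdot(\OPT(G)-\OPT(G-\Delta))$ via a structural lemma (a connected graph with no excessive protrusions, no bouquets, and no thetas satisfies $\|G\|\le c\cdot\OPT(G)$), deletes $\Delta$, and iterates. Your per-edge charging against $F^\ast$ does not survive the bouquet/theta obstruction, and your outline gives no mechanism to detect or reduce these structures.
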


In Section~\ref{sec:conc} (Conclusions) we comment on how the constant-factor approximation can be generalized to work for $\F$ containing disconnected graphs as well, 
using the approach of Fomin et al.~\cite{FominLMS12,abs-1204-4230}.

\begin{theorem}[Linear kernelization and obstructions] \label{thm:main-ker}
Suppose $\F$ is a finite family of connected graphs and at least one member of $\F$ is planar and subcubic. 
Then there exists an algorithm that, given an instance $(G,k)$ of {\sc{$\F$-Immersion Deletion}}, runs in time $\Oh(\|G\|^4\log \|G\| \cdot |G|^3)$ and outputs an equivalent instance $(G',k)$ with $\|G'\|\leq c_{\mathrm{ker}}\cdot k$,
for some constant $c_{\mathrm{ker}}$ depending on $\F$ only. Moreover, there exists a constant $c_{\cal F}$ (non-constructively depending on ${\cal F}$) such that every graph
$H$ in ${\cal O}_{k}^{\rm im}$ has at most $c_{\cal F}\cdot k$  edges.
\end{theorem}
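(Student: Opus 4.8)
\textbf{Setting up a near-optimal decomposition.} I would follow the protrusion-replacement paradigm of Fomin et al.~\cite{FominLMS12}, with tree-cut width in the role of treewidth, immersions in the role of minors, and Theorem~\ref{thm:woland} as the structural anchor. First I would run the approximation algorithm of Theorem~\ref{thm:main-apx} on $G$ to obtain $F\subseteq E(G)$ such that $G-F$ is $\F$-immersion-free and $|F|\le c_{\mathrm{apx}}\cdot\OPT_{\F}(G)$. If $|F|>c_{\mathrm{apx}}\cdot k$ then $\OPT_{\F}(G)>k$ and I would output a fixed constant-size \textsc{No}-instance; so assume $|F|\le c_{\mathrm{apx}}\cdot k$. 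By Theorem~\ref{thm:woland}, $G-F$ has tree-cut width bounded by a constant $a=a_{\F}$, so (a constant-factor approximation of) a tree-cut decomposition of $G-F$ can be computed within the allotted time; let $X\subseteq V(G)$ be the set of endpoints of edges of $F$, so $|X|=\Oh(k)$.

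\textbf{Locating protrusions.} Away from the bounded part of the decomposition marked by the $\Oh(k)$ edges of $F$ and vertices of $X$, the decomposition describes $G$ itself rather than merely $G-F$. A counting/flow argument on the tree-cut decomposition then shows that the total ``nontrivial'' structure present---large torsos, high-degree adhesions, branchings---has size $\Oh(k)$, and consequently, if $\|G\|>c_{\mathrm{ker}}\cdot k$ for a suitably large constant $c_{\mathrm{ker}}$ depending on $\F$ only, then $G$ must contain a \emph{protrusion}: a subgraph $P$ with a constant-size boundary towards the rest of $G$, with $\tctw(P)=\Oh(a)$, and with $\|P\|$ above a threshold depending only on $\F$; and such a $P$ can be found in polynomial time. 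This is the step I expect to be the main obstacle: one has to pin down quantitatively how the tree-cut decomposition of a graph that is $\F$-immersion-free after deleting $\Oh(k)$ edges is controlled, so that a super-linear edge count genuinely forces a large flat region.

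\textbf{Replacing protrusions.} Next I would establish that \textsc{$\F$-Immersion Deletion} has \emph{finite integer index} with respect to gluing boundaried graphs along a tree-cut interface: there is a finite-index equivalence on graphs of bounded adhesion and bounded tree-cut width so that equivalent pieces can be exchanged inside any host graph while $\OPT_{\F}$ shifts by a fixed offset (which is folded into the reduction so that the parameter remains $k$). I would obtain this either through a counting monadic second-order characterisation of immersion containment evaluated over tree-cut decompositions combined with a Myhill--Nerode argument, or through an explicit finite-state dynamic program over tree-cut decompositions; either way one extracts, for each class, a representative of constant size (depending on $\F$ only), computable within the stated bound. Replacing the protrusion $P$ by the representative of its class gives an equivalent instance with the same parameter and strictly fewer edges; iterating find-and-replace---which contributes the single extra factor of $\|G\|$ over Theorem~\ref{thm:main-apx}---terminates at an instance $(G',k)$ with no large protrusion, hence with $\|G'\|\le c_{\mathrm{ker}}\cdot k$.

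\textbf{Bounding the obstructions.} For the last statement, let $H\in{\cal O}_{k}^{\rm im}$; we may assume every member of $\F$ has an edge (otherwise the problem is degenerate), so $H$ has an edge. Since deleting an edge is an immersion operation and lowers $\OPT_{\F}$ by at most $1$, immersion-minimality of $H$ as a \textsc{No}-instance forces $\OPT_{\F}(H)=k+1$; thus $H$ is again ``bounded tree-cut width plus $\Oh(k)$ edges'', and the protrusion-finding lemma applies to $H$ with parameter $k$. If $\|H\|>c_{\F}\cdot k$, then $H$ contains a large protrusion $P$; by the well-quasi-ordering of graphs under immersions, the index class of $P$ has only finitely many immersion-minimal members, all of size bounded by some (non-constructive) constant depending on $\F$, so a large enough $P$ is not immersion-minimal within its class and therefore has a strictly smaller immersion $P'$ in the same class. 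Since the class determines the internal value of $\OPT_{\F}$, replacing $P$ by $P'$ inside $H$---an operation that yields a proper immersion $H'$ of $H$ because it respects the boundary of $P$---gives $\OPT_{\F}(H')=\OPT_{\F}(H)>k$, contradicting the minimality of $H$. Hence $\|H\|\le c_{\F}\cdot k$, with $c_{\F}$ depending non-constructively on $\F$ through the immersion well-quasi-ordering, exactly as in the statement.
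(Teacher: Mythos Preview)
Your protrusion-replacement framework and the obstruction argument are broadly on target, but the step you yourself flag as the main obstacle---``if $\|G\|>c_{\mathrm{ker}}\cdot k$ then $G$ must contain a large protrusion''---is not merely hard, it is false as stated. The concrete counterexample is a large group of parallel edges between two vertices $u,v$ (a \emph{theta}), or more generally a \emph{bouquet}: many isomorphic constant-size $2$-protrusions all attached to the same pair $U$. Each individual element of a bouquet is too small to be replaceable, while the union of all elements has unbounded boundary (either because the boundary into $U$ scales with the number of elements, or because $U$ itself may have unbounded degree into the rest of $G$). So a graph can have no excessive protrusions whatsoever and still have $\|G\|$ arbitrarily large compared to $\OPT$; protrusion replacement alone yields only a quadratic bound at best, not a linear one.

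The paper's fix is an amortization argument that your outline is missing entirely. After exhaustively reducing protrusions and computing the approximate solution $F$, one works in the neat tree-cut decomposition of $G-F$ and shows (Lemma~\ref{lem:nobouquetsLinear}) that bouquets and thetas are the \emph{only} remaining obstacles to a linear bound. Each bouquet or theta attached at $U\subseteq X_t$ for some marked node $t$ can be reduced (Lemmas~\ref{lem:bouquetBound},~\ref{lem:thetaBound}) not to constant size, but to size $d_\F+|\Delta|$ where $\Delta$ is a local cut consisting of $F(t)$ plus the few bold adhesions at $t$. The crucial point is then global: each edge of $F$ contributes to $\Delta$ for only a constant number of bouquets/thetas, so summing over all of them gives total size $\Oh(|F|)=\Oh(k)$ even though individual bouquets may remain large. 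This amortization is the genuinely new idea that separates the linear kernel here from the polynomial kernel of Fomin et al.\ in the minor setting, and your proposal does not supply it.
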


Thus, Theorems~\ref{thm:main-apx} and~\ref{thm:main-ker} mirror the approximation and kernelization results and the obstruction bounds of Fomin et al.~\cite{FominLMS12}.
However, this mirroring is not exact as we
even show that, in the immersion setting, a stronger
kernelization procedure can be designed. Namely, the size of the kernel given by Theorem~\ref{thm:main-ker} is {\em{linear}}, with only the multiplicative constant depending on the family $\F$, whereas
in the minor setting, the exponent of the polynomial bound on the kernel size provably must depend on $\F$ (under plausible complexity assumptions). This shows that the immersion and minor settings
behave quite differently and in fact stronger results can be obtained in the immersion setting. Observe that using Theorem~\ref{thm:main-ker} it is trivial to obtain a decision algorithm for
{\sc{$\F$-Immersion Deletion}} working in time $\Oh(c_{\mathrm{fpt}}^k + \|G\|^4\log \|G\| \cdot |G|^3)$ for some constant $c_{\mathrm{fpt}}$ depending on $\F$ only: one simply computes the kernel with a linear number of edges and
checks all the subsets of edges of size $k$.

%

\paragraph*{Our techniques.} 
Our approach to proving Theorems~\ref{thm:main-apx} and~\ref{thm:main-ker} roughly follows the general framework of protrusion replacement of Fomin et al.~\cite{FominLMS12}  (see also~\cite{BodlaenderFLPST09,BodlaenderFLPST09meta}).
We first define protrusions suited for the problem of our interest. In fact, our protrusions 
can be seen as the edge-analog of those introduced in~\cite{FominLMS12} (as in~\cite{Chatzidimitriou2015logopt}).
A protrusion for us is simply a vertex subset $X$ that induces an $\F$-immersion-free subgraph (which hence has constant tree-cut width, by Theorem~\ref{thm:woland}), and has a constant number of edges to the rest of the graph. When a large protrusion is localized, it can be replaced by a smaller gadget similarly as  in the work of Fomin et al.~\cite{FominLMS12}.
However, we need to design a new algorithm for searching for large protrusions, mostly in order to meet the condition that the exponent of the polynomial running time of the algorithm 
{\sl does not depend} on $\F$. For this, we employ the important cuts technique of Marx~\cite{Marx06} and the randomized contractions technique of Chitnis et al.~\cite{ChitnisCHPP12}. 
All of these yield an algorithm that exhaustively reduces all large protrusions.

Unfortunately, exhaustive protrusion replacement is still not sufficient for a linear kernel.
However, we prove that in the absence of large reducible protrusions, the only remaining obstacles are large groups of parallel edges between the same two endpoints (called {\em{thetas}}), 
and, more generally, large ``bouquets'' of constant-size graphs attached to the same pair of vertices.
Without these, the graph is already bounded linearly in terms of the optimum solution size.
The approximation algorithm can thus delete \emph{all} edges except for the copies included in bouquets and thetas, reducing the optimum solution size by a constant fraction of the deleted set.
It then exhaustively reduces protrusions in the remaining edges, and repeats the process until the graph is $\F$-immersion-free.

To obtain a linear kernel we need more work, as we do not know how to reduce bouquets and thetas directly.
Instead, we apply the following strategy based on the idea of {\em{amortization}}.
After reducing exhaustively all larger protrusions, we compute a constant-factor approximate solution $F_{\mathrm{apx}}$. 
Then we analyze the structure of the graph $G-F_{\mathrm{apx}}$, which has constant tree-cut width.
It appears that every bouquet and theta in $G$ can be reduced up to size bounded linearly in the number of solution edges $F_{\mathrm{apx}}$ that ``affect'' it.
After applying this reduction, we can still have large bouquets and thetas in the graph, but this happens only when they are affected by a large number of edges of $F_{\mathrm{apx}}$.
However, every edge of $F_{\mathrm{apx}}$ can affect only a constant number of bouquets and thetas and hence a simple amortization arguments shows that the total size of bouquets and theta is linear
in $|F_{\mathrm{apx}}|$, so also linear in terms of the optimum.

We remark that this part of the reasoning and in particular the amortization argument explained above, are fully new contributions of this work. 
These arguments deviate significantly from those needed by Fomin et al.~\cite{FominLMS12}, because they were aiming at a weaker goal of obtaining a polynomial kernel, instead of linear.
Also, we remark that, contrary to the work of Fomin et al.~\cite{FominLMS12}, all our algorithms are deterministic.

For the second part of Theorem~\ref{thm:main-ker}, we show that 
protrusions replacements can be done in a way that the resulting graph is an immersion of the original one. This implies 
that, in the equivalent instance  $(G',k)$ produced by our kernelization algorithm, the graph $G'$ is an immersion of $G$. Therefore
if $G$ is an immersion-obstruction of ${\cal G}^{\rm im}_{k-1,{\cal F}}$, then it should already have a linear, on $k$, number if edges (see Section~\ref{obstructions}).

\begin{wrapfigure}{r}{2.7cm}
  \begin{center}
\def\svgwidth{0.134\textwidth}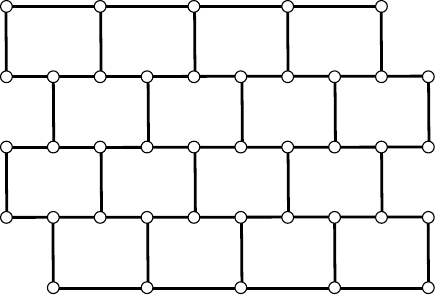
  \end{center}
 \caption{$W_{4,4}$.}\label{fig:wall}
\end{wrapfigure}
%

\paragraph*{Application: immersion-closed parameters.} 
Before we proceed to the proofs of Theorems~\ref{thm:main-apx} and~\ref{thm:main-ker}, we would like to highlight one particular meta-algorithmic application of our results which was our original motivation.
Suppose ${\bf p}$ is a graph parameter, that is, a function that maps graphs to nonnegative integers.
We shall say that ${\bf p}$ is {\em{closed under immersion}} if whenever a graph $H$ is an immersion of  another graph $G$, then ${\bf p}(H)\leq {\bf p}(G)$.
Furthermore, ${\bf p}$ is {\em{closed under disjoint union}} if ${\bf p}(G_1\uplus G_2)=\max({\bf p}(G_1),{\bf p}(G_2))$, for any two graphs $G_1$ and $G_2$; here, $\uplus$ denotes the disjoint union of two graphs.
Finally, ${\bf p}$ is {\em{large on walls}} if the set of integers $\{{\bf p}(W_{n,n})\}_{n\in \mathbb{N}}$ is infinite, where $W_{n,n}$ is the $n\times n$ wall, depicted in Figure~\ref{fig:wall}, for $n=4$.
The following proposition follows easily from Theorem~\ref{thm:woland} and the fact that the immersion order is a well-quasi-order.

\begin{proposition}\label{prop:par-obstr}
Let ${\bf p}$ be a graph parameter that is closed under immersion and under disjoint union and moreover is large on walls.
Then for every $r\in \mathbb{N}$ there exists a finite family of graphs $\F_{{\bf p},r}$ with the following properties: 
\begin{enumerate}[(a)]
\item every graph from $\F_{{\bf p},r}$ is connected;
\item $\F_{{\bf p},r}$ contains at least one planar subcubic graph; and
\item for every graph $G$, we have that ${\bf p}(G)\leq r$ if and only if $G$ is $\F_{{\bf p},r}$-immersion-free.
\end{enumerate}
\end{proposition}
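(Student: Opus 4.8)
The plan is to set $\G_r := \{G : {\bf p}(G)\le r\}$ and take $\F_{{\bf p},r}$ to be the immersion-obstruction set ${\bf obs}_{\rm im}(\G_r)$, then verify the three conditions. Since ${\bf p}$ is closed under immersion, the class $\G_r$ is closed under taking immersions, so by the well-quasi-ordering of graphs under the immersion relation~\cite{RobertsonS10} the set $\F_{{\bf p},r}={\bf obs}_{\rm im}(\G_r)$ is finite; and by the very definition of an obstruction set, a graph $G$ satisfies ${\bf p}(G)\le r$ if and only if $G$ is $\F_{{\bf p},r}$-immersion-free, which is condition~(c). Finally, since ${\bf p}$ is large on walls, the set $\{{\bf p}(W_{n,n}):n\in\NN\}$ is an infinite set of nonnegative integers, hence unbounded, so there is some $n_0$ with ${\bf p}(W_{n_0,n_0})>r$, i.e. $W_{n_0,n_0}\notin\G_r$; this wall will be used to produce a planar subcubic obstruction.

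To establish condition~(a), I would argue by contradiction. Suppose some $H\in\F_{{\bf p},r}$ is disconnected, $H=H_1\uplus H_2$ with $H_1,H_2$ both nonempty. Each $H_i$ is a subgraph of $H$, hence an immersion of $H$, and since the other part is nonempty and an immersion model maps vertices injectively, $H_i$ has strictly fewer vertices than $H$; in particular $H_i$ lies strictly below $H$ in the immersion order. As $H$ is an immersion-minimal graph outside $\G_r$, both $H_1$ and $H_2$ lie in $\G_r$, so ${\bf p}(H_1)\le r$ and ${\bf p}(H_2)\le r$. But then closure of ${\bf p}$ under disjoint union gives ${\bf p}(H)={\bf p}(H_1\uplus H_2)=\max({\bf p}(H_1),{\bf p}(H_2))\le r$, contradicting $H\notin\G_r$. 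Hence every member of $\F_{{\bf p},r}$ is connected.

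For condition~(b), I would start from the wall $W_{n_0,n_0}\notin\G_r$ identified above. Among the finitely many immersions of $W_{n_0,n_0}$, choose one, say $H$, that is minimal in the immersion order subject to $H\notin\G_r$. By transitivity of the immersion relation, every proper immersion of $H$ is an immersion of $W_{n_0,n_0}$, so by minimality it lies in $\G_r$; since $H\notin\G_r$, this means $H\in{\bf obs}_{\rm im}(\G_r)=\F_{{\bf p},r}$. Moreover $H$ is an immersion of the wall $W_{n_0,n_0}$, and every immersion of a wall is planar and subcubic (as observed in the introduction), so $H$ is a planar subcubic member of $\F_{{\bf p},r}$, which gives condition~(b). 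The argument is largely routine bookkeeping around the well-quasi-ordering theorem; the only steps that need a little care are the use of the disjoint-union axiom in~(a) --- in particular checking that each $H_i$ is \emph{strictly} below $H$ in the immersion order --- and the invocation in~(b) of the fact that immersions of planar subcubic graphs (here, of walls) remain planar and subcubic.
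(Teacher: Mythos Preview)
Your proof is correct and follows essentially the same approach as the paper's: define $\G_r=\{G:{\bf p}(G)\le r\}$, take $\F_{{\bf p},r}={\bf obs}_{\rm im}(\G_r)$, use WQO for finiteness and~(c), use closure under disjoint union for~(a), and use a wall with ${\bf p}(W_{n_0,n_0})>r$ together with closure of planar subcubic graphs under immersion for~(b). The only cosmetic difference is that for~(b) the paper argues directly that $W_{n_0,n_0}\notin\G_r$ forces it to contain some obstruction as an immersion, whereas you explicitly pick an immersion-minimal $H$ among immersions of the wall outside $\G_r$ and verify $H\in{\bf obs}_{\rm im}(\G_r)$; these are the same argument, yours being the unrolled version.
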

\begin{proof}
Denote by $\G_{{\bf p},r}$ the class of all graphs $G$ for which ${\bf p}(G)\leq r$. 
Since ${\bf p}$ is immersion-closed, $\G_{{\bf p},r}$ is closed under taking immersions.
Since the immersion order is a well-quasi-order on graphs, we infer that there is a finite family $\F_{{\bf p},r}$ of graphs such that a graph $G$ belongs to $\G_{{\bf p},r}$ if and only if $G$ is $\F_{{\bf p},r}$-immersion-free.
Moreover, we can assume that $\F_{{\bf p},r}$ is minimal in the following sense: for each $H\in \F_{{\bf p},r}$ and each $H'$ that can be immersed in $H$ and is not isomorphic to $H$, we have that $H'\in \G_{{\bf p},r}$; 
equivalently, ${\bf p}(H')\leq r$.
We need to argue that every member of $\F_{{\bf p},r}$ is connected and that $\F_{{\bf p},r}$ contains a planar subcubic graph. 

For the first check, suppose that there is some disconnected graph $H$ in $\F_{{\bf p},r}$.
Then $H$ has two proper subgraphs $H_1$ and $H_2$ such that $H=H_1\uplus H_2$.
Since $H_1,H_2$ are subgraphs of $H$, they can, in particular, be immersed in $H$.
Both of them are strictly smaller than $H$, so we infer that ${\bf p}(H_1)\leq r$ and ${\bf p}(H_2)\leq r$.
As ${\bf p}$ is closed under disjoint union, we infer that 
${\bf p}(H)={\bf p}(H_1\uplus H_2)=\max({\bf p}(H_1),{\bf p}(H_2))\leq r.$
This is a contradiction to the fact that $H\notin \G_{{\bf p},r}$.

For the second check, since ${\bf p}$ is unbounded on walls, there is some integer $n$ such that ${\bf p}(W_{n,n})>r$.
Consequently, $W_{n,n}\notin \G_{{\bf p},r}$, so $W_{n,n}$ contains some graph $H$ from $\F_{{\bf p},r}$ as an immersion.
It can be easily seen that planar subcubic graphs are closed under taking immersions, so since $W_{n,n}$ is planar and subcubic, we infer that $H$ is also planar and subcubic.
\end{proof}

For a parameter ${\bf p}$ and a constant $r$, define the {\sc{${\bf p}$-at-most-$r$ Edge Deletion}} problem as follows: given a graph $G$ and an integer $k$, determine whether at most $k$ edges can be deleted from $G$
to obtain a graph with the value of ${\bf p}$ at most $r$. We also define the associated parameter ${\bf p}_{r}$ such that  $${\bf p}_{r}(G)=\min\{k\mid \exists S\subseteq E(G): |S|\leq k \wedge {\bf p}(G\setminus S)\leq r\}.$$
We also define ${\cal G}_{k,{\bf p}_{r}}=\{G\mid  {\bf p}_{r}(G)\leq k\}$
By combining Proposition~\ref{prop:par-obstr} with Theorems~\ref{thm:main-apx} and~\ref{thm:main-ker} we obtain the following corollary.

\begin{corollary}\label{cor:par-meta}
Let ${\bf p}$ be a graph parameter that is closed under immersion and under disjoint union and moreover is large on the class of walls\footnote{A graph parameter $\mathbf{p}$ 
is {\em large on a graph class $\mathcal{C}$} if $\{{\bf p}(G)\mid G\in{\cal C}\}$ is not a bounded set.}.
Then, for every constant $r$, the {\sc{${\bf p}$-at-most-$r$ Edge Deletion}} problem admits a constant-factor approximation and a linear kernel. Moreover, there is a constant $c_{r}$, depending (non-constructively) on $r$, such that for every $k$,
every graph
$H$ in ${\bf obs}_{\rm im}({\cal G}_{k,{\bf p}_{r}})$ has at most $c_{r}\cdot k$  edges.
\end{corollary}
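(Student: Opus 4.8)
The plan is to show that, once Proposition~\ref{prop:par-obstr} is in hand, the \textsc{${\bf p}$-at-most-$r$ Edge Deletion} problem \emph{is literally} the \textsc{$\F_{{\bf p},r}$-Immersion Deletion} problem for a suitable finite family, so that Theorems~\ref{thm:main-apx} and~\ref{thm:main-ker} apply by quotation. First I would fix the constant $r$ and check that the corollary's hypothesis ``${\bf p}$ is large on the class of walls'' matches the hypothesis of Proposition~\ref{prop:par-obstr} that $\{{\bf p}(W_{n,n})\}_{n\in\NN}$ is infinite: for sets of non-negative integers ``infinite'' and ``unbounded'' coincide, and every wall is an immersion of a square wall of sufficiently large side, so by immersion-closedness of ${\bf p}$ unboundedness on all walls is equivalent to unboundedness on the walls $W_{n,n}$. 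Together with closedness under disjoint union, this lets me invoke Proposition~\ref{prop:par-obstr}, obtaining a finite family $\F_{{\bf p},r}$ of connected graphs, at least one of which is planar and subcubic, such that ${\bf p}(G)\le r$ if and only if $G$ is $\F_{{\bf p},r}$-immersion-free, for every graph $G$.

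Next I would record the resulting translation of problems. For every $G$ and every $S\subseteq E(G)$, the graph $G\setminus S$ has ${\bf p}$-value at most $r$ exactly when $G\setminus S$ is $\F_{{\bf p},r}$-immersion-free; hence $(G,k)$ is a YES-instance of \textsc{${\bf p}$-at-most-$r$ Edge Deletion} if and only if it is a YES-instance of \textsc{$\F_{{\bf p},r}$-Immersion Deletion}, and moreover ${\bf p}_{r}(G)$ equals the minimum size of an edge set whose removal makes $G$ $\F_{{\bf p},r}$-immersion-free. In other words the two problems are the same. Since $\F_{{\bf p},r}$ is a finite family of connected graphs with a planar subcubic member, Theorem~\ref{thm:main-apx} applied with $\F=\F_{{\bf p},r}$ gives the constant-factor approximation and Theorem~\ref{thm:main-ker} gives the linear kernel; the multiplicative constants depend only on $\F_{{\bf p},r}$, hence only on $r$, although non-constructively, as $\F_{{\bf p},r}$ is extracted from the well-quasi-ordering of the immersion order.

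For the obstruction bound I would note that, by the same translation, ${\cal G}_{k,{\bf p}_{r}}=\{G\mid {\bf p}_{r}(G)\le k\}$ is exactly the class ${\cal G}^{\rm im}_{k,\F_{{\bf p},r}}$ of graphs occurring in YES-instances $(G,k)$ of \textsc{$\F_{{\bf p},r}$-Immersion Deletion}; consequently their immersion-obstruction sets coincide, ${\bf obs}_{\rm im}({\cal G}_{k,{\bf p}_{r}})={\cal O}_{k}^{\rm im}$, the latter taken with respect to $\F_{{\bf p},r}$. The second part of Theorem~\ref{thm:main-ker} then supplies a constant $c_{\F_{{\bf p},r}}$ such that every $H\in{\cal O}_{k}^{\rm im}$ has at most $c_{\F_{{\bf p},r}}\cdot k$ edges, and I would set $c_{r}:=c_{\F_{{\bf p},r}}$. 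I do not expect a genuine obstacle anywhere in this argument, which is purely a reduction; the only point deserving care is the bookkeeping — verifying that Proposition~\ref{prop:par-obstr} yields a \emph{literal} identity of problems (and of graph classes, and hence of obstruction sets) that leaves the parameter $k$ untouched. Once this is observed, all three conclusions follow by quotation from Theorems~\ref{thm:main-apx} and~\ref{thm:main-ker}, the sole non-constructive ingredient being the dependence of $\F_{{\bf p},r}$, and thus of every constant, on $r$.
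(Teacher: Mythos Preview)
Your proposal is correct and follows exactly the approach the paper intends: the paper does not give a separate proof of this corollary but simply states that it follows by combining Proposition~\ref{prop:par-obstr} with Theorems~\ref{thm:main-apx} and~\ref{thm:main-ker}, which is precisely the reduction you spell out. Your extra care in reconciling the two phrasings of ``large on walls'' and in noting that the identification of problems preserves $k$ and the obstruction sets is welcome bookkeeping, but there is no divergence from the paper's route.
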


Natural immersion-closed parameters that satisfy the prerequisites of Corollary~\ref{cor:par-meta} include cutwidth, carving width, tree-cut width, and edge ranking; 
see e.g.~\cite{SeymourT94,ThilikosSB05,Wollan15,Lam1998edge,IyerRV91onan} for more details on these parameters.
Corollary~\ref{cor:par-meta} mirrors the corollary given by Fomin et al.~\cite{FominLMS12} for the {\sc{Treewidth-$\eta$ Deletion}} problem, for which their results imply the existence of a constant-factor
approximation, a polynomial-kernel, a polynomial bound for the corresponding minor-obstruction set ${\bf obs}_{\rm mn}({\cal G}_{k,{\bf tw}_{\eta}})$, and a single-exponential FPT algorithm, for every constant $\eta$.

\paragraph*{Organization of the paper.} In Section~\ref{sec:prelims} we introduce notation, recall known definitions and facts, and prove some easy observations of general usage.
In Section~\ref{sec:nice} we provide several adjustments of the notions of tree-cut decompositions and tree-cut width.
In particular, we provide a simpler definition of tree-cut width that we use throughout the paper, and show that an optimum-width tree-cut decomposition may be assumed to have some additional, useful properties.
In Section~\ref{sec:protrusions} we discuss protrusions: finding them and replacing them.
Section~\ref{sec:approx} contains the proof of Theorem~\ref{thm:main-apx} (constant factor approximation), while Section~\ref{sec:kernel} contains the proof of Theorem~\ref{thm:main-ker} (linear kernelization). Section~\ref{obstructions} is dedicated to the linear bound on the size of obstructions.
We conclude with some finishing remarks in Section~\ref{sec:conc}.

\section{Preliminaries}\label{sec:prelims}

For a positive integer $p$, we denote $[p]=\{1,2,\ldots,p\}$.

\paragraph*{Graphs.}
In this work, all graphs are multigraphs without loops.
That is, a graph $G$ is a pair $(V(G),E(G))$, where $V(G)$ is the \emph{vertex set} and $E(G)$ is a multiset of \emph{edges}.
An \emph{edge} connects a pair of different vertices, called \emph{endpoints};
we write $uv \in E(G)$ for an edge with endpoints $u, v \in V(G)$.
Note that there might be several edges (called \emph{parallel edges}) between two vertices.
An edge is \emph{incident} to a vertex if that vertex is one of its two endpoints.

We write $|G|$ for $|V(G)|$ and $\|G\|$ for $|E(G)|$ (counting edges with multiplicities).
For a subset of vertices $X\subseteq V(G)$, $G[X]$ is the subgraph induced by $X$.
For a subset $X\subseteq V(G)$ of vertices, we write $G-X$ for the induced subgraph $G[V(G) \setminus X]$.
For a subset $F\subseteq E(G)$ of edges, we write $G-F$ for the graph obtained from $G$ by removing all edges of $G$, with $V(G-F)=V(G)$ and $E(G-F)=E(G)\setminus F$.

For two subsets $X,Y\subseteq V(G)$, not necessarily disjoint, $E_G(X,Y)$ denotes the set of all edges $xy\in E(G)$ for which $x\in X$ and $y\in Y$.
The \emph{boundary} of $X$ is $\delta_G(X) = E_G(X,V(G)\setminus X)$,
while the set of edges \emph{incident} to $X$ is $E_G(X, V(G))$.
For $v\in V(G)$, the \emph{degree} of $v$ is $\deg_G(v)=|\delta_G(\{v\})|$. 
We also define the set of neighbors of $v$: $N_G(v) = \{u \in V(G) : uv \in E(G)\}$.
By $N_G(X)$ we denote the open neighborhood of $X$, that is, the set of all vertices outside $X$ that have a neighbor in $X$.
We drop the subscript $G$ when it is clear from the context.
A graph is \emph{subcubic} if $\deg_G(v) \leq 3$ for every~$v\in V(G)$.

\paragraph*{Trees.}
A forest is a graph where every connected component is a tree.
For a forest $T$ and an edge $uv \in E(T)$, we denote by $T_{uv}$ and $T_{vu}$ the components of $T-uv$ containing $u$ and $v$, respectively.
Let $T$ be a rooted tree.
For every node $t\in V(T)$, we denote by $\pi(t)$ its unique parent on the tree~$T$. 
A node $t'$ is a {\em sibling} of $t$ if $t\neq t'$ and $t$ and $t'$ have the same parent.

\paragraph*{Tree-cut width.}
A \emph{near-partition} of a set $X$ is a family of (possibly empty) subsets $X_{1},\dots,X_{k}$ of $X$ such that 
$\bigcup_{i=1}^{k} X_{i} = X$ and $X_{i}\cap X_{j} =\emptyset$ for every $i\neq j$.

A {\em tree-cut decomposition} of a graph $G$ is a pair $\T=(T,{\cal X})$ such that $T$ is a forest and ${\cal X} = \{X_{t} :t \in V(T)\}$ is 
a near-partition of the vertices of $V(G)$.
Furthermore, we require that if $T_1,\dots, T_r$ are the connected components of $T$,
then $\bigcup_{t \in V(T_i)} X_t$ for $i\in[r]$ are exactly the vertex sets of connected components of $G$.
In other words, the forest $T$ has exactly one tree per each connected component of $G$, with this tree being a tree-cut decomposition of the connected component. 
We call the elements of $V(T)$ \emph{nodes} and the elements of $V(G)$ \emph{vertices} for clarity.
The set $X_{t}$ is called the {\em bag} of the decomposition corresponding to the node $t$, or just the {\em{bag at $t$}}.
By choosing a root in each tree of $T$, thus making $T$ into a rooted forest, we can talk about a \emph{rooted tree-cut decomposition}.

Let $G$ be a graph with a tree-cut decomposition $\T=(T, {\cal X} = \{X_{t} :t \in V(T)\})$.
For a subset $W\subseteq V(T)$, define $X_W$ as $\bigcup_{t\in W} X_t$.
For a subgraph $T'$ of $T$ we write $X_{T'}$ for $X_{V(T')}$.
For an edge $uv \in E(T)$ we write $X^T_{\ uv}$ for $X_{T_{uv}}$ to avoid multiple subscripts.
Notice that, since ${\cal X}$ is a near-partition, $\{X^T_{\ uv} , X^T_{\ vu}\}$ is a near-partition of the vertex set of a connected component of $G$.
We will call the decomposition \emph{connected} if for every edge $uv\in E(T)$, the graphs $G[X^T_{\ uv}]$ and $G[X^T_{\ vu}]$ are connected.



The {\em adhesion} of an edge $e = uv$ of $T$, denoted $\adh_{\T}(e)$, is defined as the set $E_G(X^{T}_{\ uv},X^{T}_{\ vu})$.
An adhesion is \emph{thin} if it has at most $2$ edges, and is \emph{bold} otherwise.
The {\em torso} at a node $t$ of $T$ is the graph $H^{\T}_{t}$ defined as follows. 
Let $T'$ be the connected component of $T$ that contains $t$, and let $T_{1},T_{2},\dots,T_{p}$ be the components of $T'-t$ (note there might be no such components if $t$ was an isolated node). 
Observe that $\{X_t , X_{T_1}, \dots, X_{T_p}\}$ is a near-partition of $X_{T'}$, whereas $X_{T'}$ induces a connected component of $G$.
Then the torso $H^{\T}_t$ is the graph obtained from $G[X_{T'}]$ by identifying the vertices of $X_{T_{i}}$ into a single vertex $z_{i}$, for each $i\in [p]$, and removing all the loops created in this manner.
Note that, thus, every edge between a vertex of $X_{T_{i}}$ and a vertex of $X_{T_{j}}$, for some $i\neq j$, becomes an edge between $z_i$ and $z_j$; similarly for edges between $X_{T_{i}}$ and $X_t$.
The vertices of $X_{t}$ are called the {\em core} vertices of the
torso, while the vertices $z_{i}$ are called the {\em peripheral}
vertices of the torso.
Finally, the \emph{3-center} of a node $t$ of $T$, denoted by $\overline{H^{\T}_t}$, is the graph obtained from the torso $H^{\T}_{t}$ 
by repeatedly suppressing peripheral vertices 
of degree at most two and deleting any resulting loops.
That is, any peripheral vertex of degree zero or one is deleted, while a peripheral vertex of degree two is replaced by an edge connecting its two neighbors;
if the two neighbors are equal, the resulting loop is deleted, potentially allowing further suppressions.
As Wollan~\cite{Wollan15} shows, any maximal sequence of suppressions leads to the same graph $\overline{H^{\T}_t}$.
We omit the subscripts and superscripts $\T$ when they are clear from the context.

The {\em width} of the decomposition $\T=(T,{\cal X})$, denoted $\width(\T)$, is 
\[
\max\{\max_{e\in E(T)} |\adh(e)| ,\max_{t\in
  V(T)} |V(\overline{H^\T_{t}})|\}.
\]
The {\em tree-cut width} of $G$, denoted by $\tctw(G)$, is the minimum width of a
tree-cut decomposition of~$G$.
\medskip

Ganian et al.~\cite{Ganian0S15} showed that bounded tree-cut width
implies bounded treewidth. Besides, Kim et al.~\cite{KimOPST15} showed that the dependency cannot be improved to subquadratic.

\begin{lemma}[see \cite{Ganian0S15}]\label{lem:twBound}
	For any graph $G$, $\tw(G)\leq 2 \tctw(G)^2 + 3 \tctw(G)$.
\end{lemma}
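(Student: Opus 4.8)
The goal is to take a tree-cut decomposition $\T=(T,\mathcal X)$ of $G$ of width $w:=\tctw(G)$ and build from it a tree decomposition of $G$ of width at most $2w^2+3w$. First I would reduce to the case where $G$ is connected and $T$ is a single rooted tree: treewidth equals the maximum over connected components, and the components' tree decompositions can be combined with no increase in width.

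The plan exploits the fact that each torso has small treewidth. For a node $t$, the $3$-center $\overline{H^{\T}_t}$ has at most $w$ vertices, and the torso $H^{\T}_t$ arises from $\overline{H^{\T}_t}$ by reinserting peripheral vertices of degree at most two — that is, by adding pendant vertices and subdividing edges — neither of which raises treewidth; hence $\tw(H^{\T}_t)\le w$, and I fix a tree decomposition $\mathcal D_t$ of $H^{\T}_t$ of width at most $w$. I would then glue the $\mathcal D_t$ together along $T$. The torso $H^{\T}_t$ has exactly one peripheral vertex $z_{t'}$ per neighbour $t'$ of $t$ (the children and the parent), every edge of $\adh_{\T}(tt')$ is incident to $z_{t'}$ in $H^{\T}_t$, and so $\deg_{H^{\T}_t}(z_{t'})=|\adh_{\T}(tt')|\le w$. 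Before gluing I would refine each $\mathcal D_t$: enlarge every bag $B$ to $B\cup\bigcup\{I_{t'}:z_{t'}\text{ a peripheral vertex of }B\}$, where $I_{t'}$ is the set of endpoints of $\adh_{\T}(tt')$ lying on $t'$'s side; this stays a valid tree decomposition because each added vertex is joined by an edge of $\adh_{\T}(tt')$ to the peripheral vertex $z_{t'}$ already in $B$, so its occurrences remain connected. Since a bag of $\mathcal D_t$ has at most $w+1$ vertices, at most $w+1$ of them peripheral, and $|I_{t'}|\le w$, the refined bags have at most $(w+1)^2$ vertices. Then I splice: for each tree edge $tt'$ attach a connector bag, equal to the at most $2|\adh_{\T}(tt')|\le 2w$ endpoints of $\adh_{\T}(tt')$, to a bag of refined $\mathcal D_t$ containing $z_{t'}$ and to a bag of refined $\mathcal D_{t'}$ containing $z_t$, following the shape of $T$; finally delete all peripheral vertices from all bags. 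Every edge of $G$ with both endpoints in a single bag $X_s$ is an edge of $H^{\T}_s$ not incident to a peripheral vertex, hence covered inside $\mathcal D_s$; every other edge of $G$ crosses some tree edge $e$ and so has both endpoints in the connector bag $C_e$. A routine accounting of the bag sizes then gives at most $2w^2+3w+1$ vertices per bag, so $\tw(G)\le 2w^2+3w$; the disconnected case follows componentwise.

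The step I expect to be the main obstacle is checking that the spliced object really is a tree decomposition, i.e.\ that after the peripheral vertices are deleted every vertex of $G$ still occupies a connected union of bags. The delicate cases are an edge of $G$ running between two subtrees hanging off the same node (it becomes an edge between two peripheral vertices of one torso) and a vertex of $G$ referenced by edges reaching several levels up the tree; the refinement by the interface sets $I_{t'}$ is precisely what keeps such a vertex visible along the relevant path of the decomposition, but verifying this in full requires care. A convenient way to organize it is to build the decomposition bottom-up over $T$ while maintaining the invariant that the piece produced for the subtree below a node $s$ has a distinguished bag containing the (at most $w$) endpoints in that subtree of $\adh_{\T}(s\pi(s))$; with the construction pinned down in this way, the width estimate is the easy part and essentially all the work lies in the bookkeeping.
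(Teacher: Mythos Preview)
The paper does not prove this lemma at all; it is quoted verbatim from Ganian, Kim and Szeider~\cite{Ganian0S15} and used as a black box. So there is no in-paper proof to compare against, and your sketch must stand on its own merits. The overall architecture --- bound the treewidth of each torso via the $3$-center, take a width-$w$ tree decomposition $\mathcal D_t$ of every torso, and splice the $\mathcal D_t$'s along $T$ --- is the natural one (and is indeed the approach of~\cite{Ganian0S15}); your observation that reversing suppressions amounts to adding pendants, subdividing edges, or attaching a parallel-edge pendant, none of which raises treewidth, is correct.

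There is, however, one concrete gap in your refinement step, and it is exactly the one that forces the constant $2w^2+3w$ rather than the $(w+1)^2-1$ your intermediate count suggests. You only add to a bag $B\ni z_{t'}$ the endpoints $I_{t'}$ of $\adh(tt')$ lying on $t'$'s side. Take $v\in X_s$ with an edge $vw$ to some far-away $w\in X_r$, and let $t$ be an internal node on the $s$--$r$ path with neighbours $t'$ (towards $s$) and $t''$ (towards $r$). Then $v$ lies on $t'$'s side of $\adh(tt')$ but on $t$'s side of $\adh(tt'')$, so after your refinement $v$ sits only in bags of $\mathcal D_t$ containing $z_{t'}$. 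Yet the connector for $tt''$ contains $v$ and is attached at a bag containing $z_{t''}$, which need not contain $z_{t'}$; the occurrences of $v$ are then disconnected. The clean fix is to let $I_{t'}$ be \emph{all} endpoints of $\adh(tt')$, on both sides (at most $2w$ of them): then $v$ is also placed wherever $z_{t''}$ is, and since $z_{t'}z_{t''}$ is an edge of $H^{\T}_t$ (it comes from $vw$), the two subtrees of bags meet. The refined bag size becomes at most $(w+1)+(w+1)\cdot 2w=(w+1)(2w+1)=2w^2+3w+1$, which is precisely the stated bound. Your bottom-up invariant is another way to organise the same correction.
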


Finally, Kim et al.~\cite{KimOPST15} proposed a $2$-approximation FPT algorithm for computing the tree-cut width of a graph.

\begin{theorem}[see~\cite{KimOPST15}]\label{thm:apx}
There is an algorithm that, given a graph $G$ and an integer $r$, runs in time $2^{\Oh(r^2\log r)}\cdot |G|^2$ and either concludes that $\tctw(w)>r$, or returns a tree-cut decomposition of $G$ of width at most $2r$.
\end{theorem}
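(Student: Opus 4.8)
The plan is to build the tree-cut decomposition top-down by recursively splitting the graph along small edge cuts, in the spirit of the classical constant-factor approximations for treewidth (Robertson--Seymour, Reed) and branchwidth (Oum--Seymour), adapted to the edge-cut and torso structure of tree-cut width. Concretely, I would design a recursive routine \textsc{Decompose}$(H,z_0)$ whose input is a graph $H$ together with a distinguished \emph{root vertex} $z_0$ with $|\delta_H(z_0)|\le 2r$ (representing the already-processed part of $G$, contracted to a single vertex), and which either returns a rooted tree-cut decomposition of $H$ of width $\le 2r$ in which $z_0$ sits in the root bag, or reports \textsc{fail}. The top-level call is \textsc{Decompose}$(G',z_0)$, where $G'$ is $G$ with a fresh isolated vertex $z_0$ added; if it fails we output ``$\tctw(G)>r$''. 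As a cheap sanity filter one may first invoke a treewidth approximation and reject immediately when it exceeds $2r^2+3r$, using Lemma~\ref{lem:twBound}, but the real work is in \textsc{Decompose}.

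The heart of the matter is a structural dichotomy proved from the definition of an optimal tree-cut decomposition $T^\star$ of width $\le r$: if $\tctw(G)\le r$ then for every subproblem $(H,z_0)$ arising in the recursion, either (i) $H$ is ``small'', i.e.\ $|V(H)|\le 2r+1$ and can be output as a single bag (note an optimal decomposition forces every bag to have at most $r$ vertices, since the bag is contained in its $3$-center), or (ii) $H$ admits an edge separation $(A,B)$ with $z_0\in A$, with $|E_H(A\setminus B,B\setminus A)|\le 2r$, that makes progress toward terminating the recursion. The existence of such a separation follows from a centroid-type argument on $T^\star$: since all bags of $T^\star$ are small, $T^\star$ is large whenever $H$ is, so a suitably chosen edge of $T^\star$ yields a balanced edge cut of order $\le r$. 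Algorithmically we do not have $T^\star$, so we locate \emph{some} separation at least as good by enumerating candidate cuts --- using the important-cuts framework of Marx and the randomized-contractions framework of Chitnis et al.\ that are also used elsewhere in this paper --- so that the number of candidates is bounded by a function of $r$ and each is tested in linear time via a bounded-capacity flow/connectivity computation. This is where both the $2^{\Oh(r^2\log r)}$ factor (enumerating bounded-size interfaces and candidate $3$-centers) and the per-step linear cost come from, yielding $\Oh(|G|^2)$ overall since there are $\Oh(|G|)$ recursive calls.

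The main obstacle --- the place where tree-cut width genuinely differs from treewidth --- is respecting the $3$-center condition rather than merely the adhesion bound. Small bags together with small adhesions do \emph{not} suffice: a node with many children joined by \emph{bold} ($\ge 3$-edge) adhesions has a large $3$-center, whereas in an optimal decomposition such bold children are spread out over the tree while thin ($\le 2$-edge) adhesions may be attached in unbounded number (their peripheral vertices get suppressed). Hence, when \textsc{Decompose} finds that $H$ separates into many pieces, it must not attach them all to one new node; it has to build a path or caterpillar of new nodes, attaching only a bounded number of bold pieces at each, again guided by the structure of $T^\star$. Making this precise --- choosing separations that simultaneously (a) have order $\le 2r$, (b) are compatible with the partial decomposition already built, and (c) leave each newly created node with a $3$-center of at most $2r$ vertices, all while guaranteeing termination in $\Oh(|G|)$ calls --- is the technical core; the factor $2$ in the width guarantee is exactly the slack one pays for approximating the separations of the unknown $T^\star$, as in the classical width approximations. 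Finally I would verify the output directly: by construction each adhesion created has $\le 2r$ edges and each $3$-center has $\le 2r$ vertices, so the returned decomposition has width $\le 2r$; and if any call fails, unrolling the structural dichotomy shows that $G$ admits no tree-cut decomposition of width $\le r$, so the ``$\tctw(G)>r$'' verdict is correct.
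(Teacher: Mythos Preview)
The paper does not prove this theorem: it is quoted verbatim as an external result of Kim, Oum, Paul, Sau, and Thilikos~\cite{KimOPST15}, with no argument given beyond the citation. There is therefore nothing in the present paper to compare your proposal against.

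As for the proposal itself, what you have written is a plausible high-level outline in the spirit of the classical Robertson--Seymour / Reed approximation schemes, and indeed the algorithm of~\cite{KimOPST15} does follow a recursive paradigm of this flavor. However, your sketch leaves the genuinely hard part --- controlling the $3$-center sizes while maintaining the adhesion bound and guaranteeing progress --- entirely unspecified (``making this precise \ldots\ is the technical core''). The invocation of important cuts and randomized contractions is speculative: those techniques are used elsewhere in this paper for protrusion-finding, not for the tree-cut width approximation, and~\cite{KimOPST15} does not rely on them. If your goal is to supply a self-contained proof, you would need to actually state and prove the structural dichotomy you allude to and give a concrete separation-finding subroutine with the claimed running time; as written, the proposal is a reasonable research plan but not a proof.
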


\paragraph*{Immersions.}
For two graphs $G$ and $H$ we say that \emph{$G$ contains $H$ as an immersion}, or $H$ is \emph{immersed} in $G$,
if there exist an injective mapping $\mu_V\colon V(H) \to V(G)$, and a mapping $\mu_E$ from edges of $H$ to paths in $G$ such that:
\begin{itemize}
	\item for any edge $uu'\in E(H)$, $\mu_E(uu')$ is a path in $G$ with endpoints $\mu_V(u)$ and $\mu_v(u')$; and
	\item for any pair of different edges $e,e' \in E(H)$, the paths $\mu_E(e)$ and $\mu_E(e')$ do not have common edges.
\end{itemize}

For a family of graphs $\F$, we say a graph $G$ is \emph{$\F$-immersion-free}, or \emph{$\F$-free} for short, if for every $H\in\F$, $G$ does not contain $H$ as an immersion.
We define the following parameterized problem:\medskip\medskip

\defparproblem{\textsc{\F-Immersion Deletion}}%
{A graph $G$ and a positive integer $k$.}{$k$}%
{Is there a set $F\subseteq E(G)$, such that $|F|\leq k$ and $G-F$ is $\F$-immersion-free?}
\medskip

By $\OPT_\F(G)$ we denote the minimum size of a set $F\subseteq E(G)$ such that $G-F$ is $\F$-free. If the family $\F$ is clear from the context, we omit the subscript.

\bigskip

The following result follows from the work of Wollan~\cite{Wollan15}; the improved bound is obtained using the polynomial Excluded Grid Minor Theorem by Chekuri and Chuzhoy~\cite{ChekuriC14,Chuzhoy15,Chuzhoy16}. 
In particular, note that Theorem~\ref{thm:woland} stated in the introduction follows from it.

\begin{theorem}\label{thm:Ffree_tctw}
	Let $\F$ be a family of graphs that contains at least one planar subcubic graph.
	Then if $G$ is an $\F$-free graph, then $\tctw(G)\leq a_\F$, where $a_\F=\Oh(\max_{H\in \F} (|H|+\|H\|)^{30})$ is a constant depending on $\F$ only.
\end{theorem}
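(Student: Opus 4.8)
The plan is to deduce the statement from Wollan's excluded-wall theorem for tree-cut width together with the polynomial bound on the grid/wall number in terms of treewidth due to Chekuri and Chuzhoy. First I would recall the precise statement obtained by Wollan~\cite{Wollan15}: there is a function $f$ such that every graph of tree-cut width at least $f(r)$ contains the wall $W_{r,r}$ as an immersion, and moreover $f(r)$ can be taken polynomial in $r$ once one plugs in the polynomial excluded-grid bound. Concretely, the chain of implications is: large tree-cut width $\Rightarrow$ large treewidth (this direction is the easy one, since cutwidth/tree-cut width upper-bounds treewidth up to a constant factor, or one argues directly) $\Rightarrow$ a large wall minor by Chekuri--Chuzhoy~\cite{ChekuriC14,Chuzhoy15,Chuzhoy16} $\Rightarrow$ a large wall \emph{topological} minor, because walls are subcubic and for subcubic target graphs the minor and topological-minor relations coincide $\Rightarrow$ a large wall immersion, since a topological minor is in particular an immersion. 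This already gives that graphs of large tree-cut width immerse arbitrarily large walls, with a polynomial dependence; the remaining work is bookkeeping the exponent.

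The key steps, in order, are: (1) Let $H_0\in\F$ be planar and subcubic, and let $r$ be the smallest integer such that $H_0$ is an immersion of $W_{r,r}$; such an $r$ exists and satisfies $r=\Oh(|H_0|+\|H_0\|)$, since any planar subcubic graph on $q$ vertices embeds in the $\Oh(q)\times\Oh(q)$ wall (route the planar drawing on a grid, then replace grid vertices by the brick structure of a wall, using subcubicity to keep degrees at most three). (2) Invoke the polynomial excluded-grid theorem and Wollan's wall/tree-cut-width correspondence to fix a polynomial $g$ with the property: if $\tctw(G)\ge g(r)$ then $G$ contains $W_{r,r}$ as an immersion. Tracking the current best exponents — Chekuri--Chuzhoy give treewidth $\Oh(r^{9}\polylog r)$ forces an $r\times r$ grid minor (and the exponent has since been reduced, but a safe bound suffices here), plus Wollan's translation between tree-cut width and treewidth/wall-immersions — one gets $g(r)=\Oh(r^{c})$ for an explicit small constant $c$; combined with step (1) this yields the bound $a_\F=\Oh(\max_{H\in\F}(|H|+\|H\|)^{30})$, where the exponent $30$ is a comfortably generous choice that absorbs all the intermediate polynomial losses. (3) Finish by contraposition: if $\tctw(G)>a_\F$, then by steps (1)--(2) $G$ immerses $W_{r,r}$, which immerses $H_0\in\F$; by transitivity of the immersion relation $G$ contains $H_0$ as an immersion, so $G$ is not $\F$-free. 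Hence every $\F$-free graph has $\tctw(G)\le a_\F$.

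The main obstacle is step (2): cleanly extracting a \emph{polynomial} bound $g(r)$ for the wall-immersion/tree-cut-width correspondence from the literature, since Wollan's original statement~\cite{Wollan15} is phrased via the (non-polynomial) excluded grid theorem and goes through several intermediate parameters (the ``$3$-center'' sizes, adhesion sizes, and an auxiliary notion of a wall ``controlling'' tree-cut width). One must verify that replacing the excluded-grid input by the Chekuri--Chuzhoy polynomial bound indeed propagates polynomially through all of Wollan's reductions, and that the subcubic-wall target does not incur any extra non-polynomial blow-up when passing from ``contains a large wall minor'' to ``contains a large wall immersion'' (this last implication is in fact clean because walls are subcubic, so no real loss occurs there). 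A secondary, more routine obstacle is making the wall-embedding estimate in step (1) explicit enough to justify the linear bound $r=\Oh(|H_0|+\|H_0\|)$; this is standard planar-graphs-on-grids folklore but should be stated carefully so the final exponent computation is honest. Once these two quantitative points are pinned down, the statement follows, and the choice of $30$ as the exponent is deliberately loose so that no delicate optimization of constants is needed.
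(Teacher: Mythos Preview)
Your overall structure---embed a planar subcubic $H_0\in\F$ in a wall of linear size, invoke a wall-immersion/tree-cut-width bound, and contrapose---matches the paper. Steps (1) and (3) are fine. The problem is the chain you propose in step (2).

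You write ``large tree-cut width $\Rightarrow$ large treewidth (this direction is the easy one, since \ldots{} tree-cut width upper-bounds treewidth).'' This is the wrong direction. The inequality $\tw(G)\le 2\tctw(G)^2+3\tctw(G)$ (Lemma~\ref{lem:twBound}) gives \emph{small} tree-cut width $\Rightarrow$ \emph{small} treewidth, equivalently large treewidth $\Rightarrow$ large tree-cut width. It does \emph{not} give large $\tctw$ $\Rightarrow$ large $\tw$, and indeed that implication is false: two vertices joined by $n$ parallel edges have treewidth~$1$ but tree-cut width growing with $n$. So your proposed chain ``large $\tctw$ $\Rightarrow$ large $\tw$ $\Rightarrow$ wall minor $\Rightarrow$ wall immersion'' breaks at the first arrow.

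The paper avoids this by citing Wollan's result as a black box: Theorem~17 of~\cite{Wollan15} states directly that $\tctw(G)\ge 4r^{10}\cdot w(r)$ forces a $W_{r,r}$ immersion, where $w(r)$ is the excluded-grid bound. This is a nontrivial structural theorem about tree-cut decompositions, not a consequence of the easy $\tw\le f(\tctw)$ inequality. One then plugs in Chuzhoy's bound $w(r)=\Oh(r^{19}\polylog r)$~\cite{Chuzhoy16} to get $g(r)=\Oh(r^{29}\polylog r)$, and combined with your step~(1) the exponent $30$ follows. You seem aware of Wollan's theorem in your obstacle paragraph; the fix is simply to invoke it directly in step~(2) and drop the incorrect treewidth detour.
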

\begin{proof}
	Recall that the $r\times r$ wall $W_{r,r}$ is a grid-like graph with maximum degree three (see Figure~\ref{fig:wall}).
	Theorem 17 of~\cite{Wollan15} states that if $G$ is a graph with tree-cut width at least $4r^{10} \cdot w(r)$, then $G$ admits an immersion of the $r\times r$ wall $W_{r,r}$.
	Here, $w(r)$ is the upper bound in the Excluded Grid Minor Theorem, that is, the maximum treewidth of a graph that excludes the $r\times r$ grid as a minor.
	The currently best upper bound for $w(r)$, given by Chuzhoy in~\cite{Chuzhoy16}, is $w(r) \leq \Oh(r^{19} \polylog(r))$.
	It is easy to see (see e.g.,~\cite{GiannopoulouKRT16pack}) that there is a constant $d$ such that 
	every planar subcubic graph $H$ with at most $r$ vertices and edges can be immersed into the $(dr)\times (dr)$-wall $W_{dr,dr}$.
	Hence, by the results above it follows that excluding such a graph $H$ as an immersion imposes an upper bound of $\Oh(r^{30})$ on the tree-cut width of a graph.
\end{proof}

The above theorem is a starting point for our algorithms.
In particular, it implies that we can test whether a graph is $\F$-free in linear time.

\begin{lemma}\label{lem:ffreeFpt}
	Let $\F$ be a family of graphs that contains at least one planar subcubic graph.		
	There is a linear-time algorithm that checks whether a given graph is $\F$-free.
\end{lemma}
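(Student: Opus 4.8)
The plan is to combine Theorem~\ref{thm:Ffree_tctw} with the fixed-parameter tractability of immersion testing and the algorithm of Kim et al.~(Theorem~\ref{thm:apx}) for approximating tree-cut width, observing that all the relevant constants depend on $\F$ only and hence the whole procedure runs in linear time for a fixed $\F$. First I would set $r = a_\F$, the constant from Theorem~\ref{thm:Ffree_tctw}, and invoke Theorem~\ref{thm:apx} with this value of $r$ (or rather with a value slightly larger than $a_\F$ so that the approximation guarantee suffices): in time $2^{\Oh(a_\F^2 \log a_\F)}\cdot |G|^2$ — which is $\Oh(|G|^2)$ for fixed $\F$, but we want linear, so this needs to be handled — the algorithm either reports that $\tctw(G) > a_\F$, or returns a tree-cut decomposition of $G$ of width $\Oh(a_\F)$. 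In the first case, by the contrapositive of Theorem~\ref{thm:Ffree_tctw} the graph $G$ is \emph{not} $\F$-free (some member of $\F$ is immersed in $G$), so the algorithm may safely answer ``no''. In the second case we have an explicit tree-cut decomposition of bounded width, hence also (by Lemma~\ref{lem:twBound}) a bounded-treewidth certificate, on which immersion testing for each fixed $H\in\F$ becomes decidable in time linear in $|G|$.

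For the actual $\F$-freeness test in the bounded-width case, I would rely on the standard fact that testing whether a fixed graph $H$ is immersed in $G$ is expressible in monadic second-order logic (immersion models can be encoded by a bounded number of vertex/edge set quantifiers, since $|V(H)|$ and $|E(H)|$ are constants and the edge-disjoint paths can be described by their edge sets), and then apply Courcelle's theorem on the bounded-treewidth structure obtained from the tree-cut decomposition via Lemma~\ref{lem:twBound}. Running this test for every $H\in\F$ — a constant number of tests — and answering ``yes'' iff all of them fail, gives the correct answer, and each test runs in time $f(\F)\cdot|G|$. Alternatively one can cite directly the immersion-testing algorithm of Grohe et al.~\cite{GroheKMW11find}, which is FPT in $|H|$ with a linear dependence on $|G|$; since $\F$ is finite and fixed, running it once per member of $\F$ is still linear in $|G|$.

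The main obstacle is squeezing the running time down from the quadratic $|G|^2$ of Theorem~\ref{thm:apx} to genuinely \emph{linear} time. To do this I would not compute an optimum-width decomposition at all, but instead use a constant-width approximation computable in linear time: either invoke Bodlaender-type linear-time FPT algorithms for bounded treewidth (Lemma~\ref{lem:twBound} tells us treewidth is bounded whenever tree-cut width is, so $G$ is either guaranteed to have treewidth $\Oh(a_\F^2)$ or we have exhibited that it is not $\F$-free), compute a tree decomposition of width $\Oh(a_\F^2)$ in linear time by Bodlaender's algorithm, and run the MSO model-checking on that decomposition directly. Concretely: run Bodlaender's linear-time algorithm to either produce a tree decomposition of width at most $2a_\F^2+3a_\F$ (consistent with Lemma~\ref{lem:twBound}), or certify that the treewidth exceeds this bound — in the latter case, again by Lemma~\ref{lem:twBound}, $\tctw(G)>a_\F$, so by Theorem~\ref{thm:Ffree_tctw} $G$ is not $\F$-free and we answer ``no''; in the former case run the constant number of MSO immersion tests on the obtained tree decomposition in linear time. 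This avoids Theorem~\ref{thm:apx} entirely and yields the claimed linear running time, with all hidden constants depending on $\F$ only.
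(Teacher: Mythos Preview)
Your proposal is correct and, after the initial detour through Theorem~\ref{thm:apx}, arrives at essentially the same argument as the paper: run Bodlaender's linear-time algorithm with target width $2a_\F^2+3a_\F$, declare $G$ not $\F$-free if it fails (via Lemma~\ref{lem:twBound} and Theorem~\ref{thm:Ffree_tctw}), and otherwise use Courcelle's theorem on the resulting tree decomposition to test immersion of each $H\in\F$. The paper's proof is exactly this final paragraph of yours, without the preliminary exploration.
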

\begin{proof}
	Let $a_\F$ be the constant given by Theorem~\ref{thm:Ffree_tctw} for the family $\F$, and let
	$G$ be the input graph. Using Bodlaender's algorithm~\cite{Bodlaender96} for computing treewidth, 
	for $k=2a_{\F}^{2}+3a_{\F}$ we either conclude that $\tw(G)>k$, or compute a tree decomposition of $G$ of width at most $k$.
	This takes time $f(k)\cdot |G|$ for some function $f$, hence linear time since $k$ is a constant.
	In the first case, when $\tw(G)>2a_{\F}^{2}+3a_{\F}$,
	we may directly conclude that $G$ is not $\F$-free, because from Lemma~\ref{lem:twBound} it follows that $\tctw(G)>a_{F}$, and then Theorem~\ref{thm:Ffree_tctw} implies that $G$ is not 
	$\F$-free. Thus, we may now assume that we have constructed a tree decomposition of $G$ of width at most $k=2a_{\F}^{2}+3a_{\F}$.
	
	Observe now that for any fixed graph $H$, the property of admitting $H$ as an immersion can be expressed in $\mathbf{MSO}_2$ (Monadic Second-Order logic on graphs with quantification over edge subsets). See, for example,~\cite{Giannopoulou2014effe}.
	Therefore, by applying Courcelle's Theorem~\cite{Courcelle90}, for every graph $H\in \F$ we may decide whether $G$ is $H$-free in time $g(k)\cdot \|G\|$ for some function $g$; that is, in linear time
	since $k$ is a constant. By verifying this for every graph $H\in \F$ we decide whether $G$ is $\F$-free.
\end{proof}
\bigskip

From now on, throughout the whole paper, we assume that $\F$ is a fixed family containing only connected graphs, of which at least one is planar and subcubic.
We define the constant $\MX=\max_{H\in \F} \|H\|$ and let $a_\F$ be the bound from Theorem~\ref{thm:Ffree_tctw}.

\section{Adjusting tree-cut decompositions}\label{sec:nice}
\subsection{Alternative definition of tree-cut width}

To simplify many arguments, we give a simpler definition of tree-cut width and show it to be equivalent.
Let $G$ be a graph with a  tree-cut decomposition $\T=(T,{\cal X} = \{X_t : t \in V(T) \})$. 
For every $t\in V(T)$, we define
\[
w_{\T}(t)=|X_{t}|+|\{t'\in N_{T}(t):\adh_\T(tt')~\text{is bold}\}|.
\]
We drop the subscript $\T$ when it is clear from the context.
We then set
\[
\width'(\T)=\max\{\max_{e\in E(T)}|\adh(e)|,\max_{t\in V(T)}w(t)\}
\] and define $\tctw'(G)$ as the minimum of $\width'(\T)$ over all tree-cut decompositions $\T$ of~$G$.

\begin{theorem}\label{thm:tcweqstcw'}
For every graph $G$ it holds that $\tctw(G)=\tctw'(G)$.
Moreover, given a tree-cut decomposition $\T=(T,{\cal X})$ of $G$, it always holds that $\width(\T)\leq \width'(\T)$, 
and a tree-cut decomposition $\T'$ such that $\width'(\T') \leq \width(\T)$ can be computed in time $\Oh(\|G\| \cdot |G|^2 \cdot \width(\T))$.
\end{theorem}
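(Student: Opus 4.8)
The plan is to prove the two inequalities $\tctw(G)\le\tctw'(G)$ and $\tctw'(G)\le\tctw(G)$ separately. The first will follow immediately from the pointwise bound $\width(\T)\le\width'(\T)$, so it remains to establish that bound and the algorithmic construction of $\T'$ with $\width'(\T')\le\width(\T)$; applying the latter to a width-optimal $\T$ then gives $\tctw'(G)\le\tctw(G)$.

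For the pointwise bound, note that the term $\max_{e\in E(T)}|\adh(e)|$ is literally the same in both definitions of width, so it suffices to show $|V(\overline{H^{\T}_{t}})|\le w_{\T}(t)$ for every node $t$. Core vertices are never suppressed, so $V(\overline{H^{\T}_{t}})$ consists of $X_t$ together with the peripheral vertices that survive the $3$-center process; hence it is enough to argue that every surviving peripheral vertex corresponds to a bold adhesion. Three facts suffice: (i) in $H^{\T}_{t}$, the peripheral vertex obtained by contracting the subtree hanging at a neighbour $t'$ of $t$ has degree exactly $|\adh(tt')|$; (ii) suppressing a peripheral vertex of degree at most two never raises the degree of any other vertex — deleting a degree-$\le 1$ vertex lowers at most one neighbour's degree, and smoothing a degree-$2$ vertex keeps all degrees unchanged except when its two edges are parallel, in which case the common endpoint drops by two; (iii) in the $3$-center every peripheral vertex has degree at least three, since otherwise a further suppression would apply. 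Combining these, a surviving peripheral vertex had initial degree $|\adh(tt')|\ge 3$, so $\adh(tt')$ is bold; the number of survivors is thus at most the number of bold adhesions at $t$, giving $|V(\overline{H^{\T}_{t}})|\le|X_t|+|\{t'\in N_{T}(t):\adh(tt')\text{ bold}\}|=w_{\T}(t)$, and therefore $\width(\T)\le\width'(\T)$ and $\tctw(G)\le\tctw'(G)$.

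For the transformation, given $\T$ with $\width(\T)=w$ we first \emph{clean up}: delete empty-bag leaves and suppress empty-bag degree-$2$ nodes (identifying their two incident tree-edges). These operations pass adhesions through unchanged and do not affect any $3$-center, and afterwards every empty-bag node has degree at least three, which gives $|V(T)|=\Oh(|G|)$. Then we exhaustively apply the following \emph{re-rooting move}. Consider a node $t$ and a maximal set $Y$ of subtrees hanging at $t$ whose peripheral vertices in $H^{\T}_{t}$ form a connected thin subgraph; since thin peripheral vertices have torso-degree at most two, this subgraph is a path, and since a cycle would make $X_Y$ a union of components of $G$, the set $X_Y$ has at most two edges of $G$ leaving it. If all these leaving edges have their endpoint outside $X_Y$ inside a single other subtree $T_j$ of $t$, with $t_j$ the node of $T_j$ adjacent to $t$, let $t_c$ be the least common ancestor in $T_j$ (rooted at $t_j$) of the bags containing those endpoints, and re-attach each subtree in $Y$ to $t_c$ (so each node of $Y$ that was adjacent to $t$ becomes adjacent to $t_c$). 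The crucial verification is that no adhesion grows: each re-attached subtree keeps its old adhesion, the adhesion of $tt_j$ and of every edge on the tree-path from $t_j$ down to $t_c$ lose exactly the leaving edges of $Y$ and hence shrink, and all other adhesions are unchanged because $X_Y$ and the endpoints of its leaving edges stay on the same side of every other edge of $T$; moreover the only new peripheral vertices, at $t_c$, are thin and are suppressed in the $3$-center, so no $w_{\T}$-value grows. Thus $\width$ does not increase while the potential $\Phi(\T):=\sum_{e\in E(T)}|\adh(e)|$ strictly decreases. As $\Phi\le w\cdot|E(T)|=\Oh(w|G|)$ after cleanup, the process terminates after $\Oh(w|G|)$ moves; locating an applicable move and performing it together with a fresh cleanup costs $\Oh(\|G\|\cdot|G|)$ with suitable bookkeeping, for total time $\Oh(\|G\|\cdot|G|^2\cdot w)$. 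I expect the main obstacle to be exactly this non-increase of adhesions when the two leaving edges of a thin path-blob land in two \emph{different} bags of $T_j$; re-attaching the blob at their least common ancestor $t_c$ rather than directly at $t_j$ is what forces every crossing of every tree-edge to be preserved or destroyed but never created.

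Finally, one argues that when neither move applies, the resulting $\T'$ satisfies $\width'(\T')=\width(\T')$, which combined with the second paragraph finishes the proof. Suppose some bold adhesion $tt_j$ does not survive in $\overline{H^{\T'}_{t}}$; then along the $3$-center process the degree of the peripheral vertex $z_{t_j}$ drops below three. But suppressing a thin path-blob either smooths it into a single edge between its two stub-targets, leaving every peripheral vertex's degree unchanged, or — precisely when all its leaving edges go into one subtree — collapses onto that subtree's peripheral vertex (a ``dead end'' lowering it by one, or a ``double path'' lowering it by two). Hence the degree of $z_{t_j}$ can only drop via a blob of the latter kind, and such a blob triggers a re-rooting move, contradicting that no move applies. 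Therefore every bold adhesion of $\T'$ survives, so $|V(\overline{H^{\T'}_{t}})|=|X_t|+|\{t'\in N_{T}(t):\adh(tt')\text{ bold}\}|=w_{\T'}(t)$ for every $t$, and with $\width(\T')\le\width'(\T')$ this gives $\width'(\T')=\width(\T')\le\width(\T)$. Applying the transformation to a width-optimal $\T$ yields $\tctw'(G)\le\tctw(G)$, hence $\tctw(G)=\tctw'(G)$.
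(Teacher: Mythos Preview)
Your approach is essentially the paper's: both identify, at a node $t$, a maximal connected set $Y$ of thin peripheral vertices whose only external neighbour in the torso is a single peripheral vertex $z_{t_j}$, and re-attach the corresponding subtrees into $T_j$, strictly decreasing $\sum_e |\adh(e)|$. The paper re-attaches directly at $t_j$; you go to the least common ancestor $t_c$ of the landing bags. Your worry that re-attaching at $t_j$ could create new adhesion-crossings is unfounded: since $X_Y$ hung at $t$, it was already on the $t_j$-side of every tree-edge inside $T_j$, so moving it to $t_j$ leaves every adhesion inside $T_j$ untouched and only shrinks $\adh(tt_j)$. The deeper re-attachment buys nothing and forces you to verify more nodes.

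There is a real gap in your invariant. You write ``no $w_{\T}$-value grows, thus $\width$ does not increase'', but $w$-values define $\width'$, not $\width$. Showing they do not grow only yields that $\width'(\T_i)$ is non-increasing, and at termination you obtain $\width'(\T')=\width(\T')\le\width'(\T_0)$, which may exceed $\width(\T_0)$. What is needed is that each $3$-center size $|V(\overline{H^{\T'}_s})|$ does not exceed $|V(\overline{H^{\T}_s})|$. At $t$ this is fine: the new torso is literally the old one with the $Y$-vertices already suppressed, so the $3$-centers coincide. At nodes strictly between $t$ and $t_c$ the torso only loses edges, and one can check (by following any suppression sequence for the larger graph inside the smaller one) that edge-deletion never enlarges a $3$-center. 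The delicate node is $t_c$: after suppressing the new thin $Y$-vertices, the torso differs from the old one by a \emph{lift} at the peripheral vertex $z^{up}$ (the pair $z^{up}\alpha_1,\,z^{up}\alpha_2$ is replaced by $\alpha_1\alpha_2$), and you must argue that such a lift at a peripheral vertex cannot enlarge the $3$-center. The paper faces the same issue at its single affected node $t_p=t_j$ and dispatches it with a brief ``hence''; by re-attaching at $t_j$ rather than deeper, it at least avoids the intermediate-node verification.
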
 

\begin{proof} 
Let $G$ be a graph and $\T=(T,{\cal X} = \{X_t:t \in V(T)\}))$ be a tree-cut decomposition.
We first show that $\width(\T)\leq \width'(\T)$.
Consider any $t\in V(T)$ and let $t_{1},t_{2},\dots,t_{\ell}$ be the
neighbors of $t$ in $T$. 
For $i\in [\ell]$, denote by $z_{i}$ the peripheral vertex of the torso of $t$, $H^{\T}_{t}$, obtained after consolidating (i.e. identifying into one vertex) the set~$X^T_{\ t_{i} t}$. 

We claim that $|V(\overline{H^{\T}_{t}})|\leq w(t)$.
Recall that $\overline{H^{\T}_{t}}$ is the 3-center at $t$: its vertices are core vertices $X_t$ and peripheral vertices $z_1,\dots,z_\ell$ that were not suppressed.
Since $\deg_{H^{\T}_{t}}(z_{i})=|\adh(t t_{i})|$ for $i\in [\ell]$, 
the vertex $z_i$ might not suppressed (and thus, belong to the 3-center at $t$) only when $|\adh(t t_i)|\geq 3$, which implies
\begin{equation}\label{eq:eq1}
|V(\overline{H^{\T}_{t}})| \leq |X_{t}|+|\{t'\in N_{T}(t):\adh(tt')~\text{is bold}\}| = w(t)
\end{equation}
This holds for every $t\in V(T)$, hence $\width(\T)\leq \width'(\T)$.

\bigskip

In particular, $\tctw(G)\leq \tctw'(G)$. We now proceed to showing that $\tctw(G)=\tctw'(G)$. 
Note that without loss of generality we may assume that $G$ is connected, as we may consider  each connected component separately.
Hence, all the tree-cut decompositions considered in the sequel will consist of just one tree.

Let us choose a tree-cut decomposition $\T=(T,{\cal X})$ of $G$ as follows: 
$\T$ has the optimum width (i.e. $\width(\T)=\tctw(G)$) and, among such optimum decompositions, $\sum_{e\in E(T)}|\adh(e)|$ is minimum possible.
We will now prove that $|V(\overline{H^{\T}_{t}})|= w(t)$ for every $t\in V(T)$, 
which implies that $\width'(\T) = \width(\T)=\tctw(G)$ by definition, concluding the claim.

Towards a contradiction, let us assume that there exists $t\in V(T)$ with $|V(\overline{H^{\T}_{t}})|\neq w(t)$, and thus by \eqref{eq:eq1}, $|V(\overline{H^{\T}_{t}})|< w(t)$.
We denote by $t_{1},t_{2},\dots,t_{\ell}$ the
neighbors of $t$ in $T$ and define $z_i$ for $i\in [\ell]$ as above.

Without loss of generality, let $(z_{1},z_{2},\dots,z_{k})$ be a maximal sequence of vertices whose suppression leads to $\overline{H^{\T}_{t}}$; see Figure~\ref{fig:widthprime}. 
Observe that, since the inequality $|V(\overline{H^{\T}_{t}})|< w(t)$ is strict, at least one of the suppressed vertices originally has degree 3 or more in $H^{\T}_{t}$.
Let $z_{p}$ be the first such vertex, that is, $\deg_{H^{\T}_{t}}(z_{p})\geq 3$ and  $\deg_{H^{\T}_{t}}(z_{i}) \leq 2$ for $i<p$.
Note that $p\neq 1$ as, by definition, the first vertex to get suppressed has degree at most 2 in $H^{\T}_{t}$. 
Let $C_{1},\dots,C_{p'}$ be the connected components of the graph induced by the vertices $z_{1},\dots,z_{p-1}$ in $H^{\T}_{t}$. 
Since their degrees are at most two each component $C_i$ is either an induced cycle or an induced path in $H^{\T}_t$.
Let ${\cal C}=\{C_{i}:N_{H^{\T}_{t}}(z_{p})\cap V(C_{i})\neq \emptyset\}$, that is, let ${\cal C}$ be the subset of the graphs 
$C_{1},\dots, C_{p'}$ in which $z_{p}$ has a neighbor. 

First notice that the set ${\cal C}$ is not empty.
Indeed, if $z_{p}$ would not have any neighbor in at least one the components $C_{1},\dots,C_{p'}$ 
then it still would have degree at least 3 in $H^{\T}_{t}$ after suppressing the vertices of these graphs, $z_1,\dots,z_{p-1}$.

Second, for $C\in {\cal C}$, observe that if $z$ is a neighbor of $z_p$ in $C$, then $z$ has degree at most 2 in $H^{\T}_t$, including its neighbor $z_p$, and hence $z$ has degree at most 1 in $C$. 
Therefore, every $C\in {\cal C}$ is an induced path in $H^{\T}_t$, whose endpoints we henceforth denote by $z_C$ and $z_C'$.
We have $N_{H^{\T}_t}(z_{p})\cap V(C)\subseteq \{z_{C},z_{C}'\}$,
and since by definition $N_{H^{\T}_t}(z_{p})\cap V(C) \neq \emptyset$, without loss of generality we will assume that
always $z_{C}z_{p}\in E(H^{\T}_{t})$.

\begin{claim}
	There exists $C\in {\cal C}$ such that $N_{H^{\T}_{t}}(V(C)) \subseteq \{z_{p}\}$.
\end{claim}
\begin{proof}
	We prove the claim by contradiction. That is, suppose that for every $C\in {\cal C}$, $C$ has a neighbor in $V(H^{\T}_{t})\setminus \{z_p\}$.
	This must be a neighbor outside $\{z_i: i\in [p-1]\}$ (since $C$ is a connected component of the subgraph induced by these vertices), 
	thus for every $C\in {\cal C}$, there is a vertex $z\in C$ such that $z$ has a neighbor $z'$ in $V(H^{\T}_{t})\setminus \{z_{i}: i\in [p]\}$.
	
	Let $C\in {\cal C}$.
	Since the internal vertices of the path $C$ have degree exactly 2 both in $C$ and in $H^{\T}_t$, they have no neighbors outside $C$. Consider now two cases depending on $|V(C)|$.
        \begin{itemize}
        \item If $|V(C)|=1$, then $z_C=z_C'$ has an edge to a neighbor $z'$ in $V(H^{\T}_{t})\setminus \{z_{i}: i\in [p]\}$ and an edge to $z_p$; since $z_C=z_C'$ has degree 2 in $H^{\T}_t$, 
	it has no other incident edges in $H^{\T}_t$.
      \item If $|V(C)|\geq 2$, then $z_C$ has an edge to a neighbor in $C$ and an edge to $z_p$, and again, no other incident edges.
	Thus it must be that $z_C'$ has a neighbor $z'$ in $V(H^{\T}_{t})\setminus \{z_{i}: i\in [p]\}$.
	Then $z_C'$ has an edge to $z'$, to a neighbor in $C$, and no other incident edges. 
        \end{itemize}
	
	We conclude that in both cases, for every $C\in{\cal C}$, we have $|E(C,z_p)|=1$;
	moreover, after suppressing the vertices $z_1,\dots,z_{p-1}$, including those of $C$, $z_p$ has an edge to $V(H^{\T}_{t})\setminus \{z_{i}: i\in [p]\}$, a different one for every $C \in {\cal C}$.
	Therefore, the degree of $z_p$ in $H^{\T}_t$ does not drop after suppressing $z_1,\dots,z_{p-1}$.
	However, initially $\deg_{H^{\T}_t}(z_p)\geq 3$ by choice of $z_p$, and after suppressing $z_1,\dots,z_{p-1}$, the vertex $z_p$ must have degree at most 2 to be itself suppressed, a contradiction.	
\cqed\end{proof}

	\begin{figure}[H]
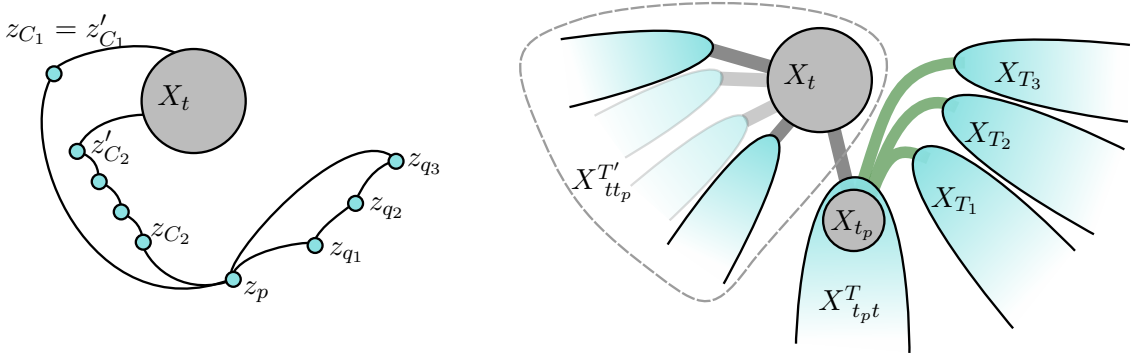

		\centering
		\svg{0.95\textwidth}{widthprime}
		\caption{
			On the left, the torso $H^{\T}_t$, including the peripheral vertices $z_1, \dots, z_p$ suppressed in $\overline{H^{\T}_{t}}$.
			On the right, the modified decomposition, with subtrees $T_1, \dots, T_r$ attached to $t_p$ instead of $t$.
			This improves the decomposition by making $\adh(t t_p)$ strictly smaller (because of $z_p z_{q_1}$, $z_p z_{q_3}$).}
		\label{fig:widthprime}
	\end{figure}

Let now $C \in {\cal C}$ be such that $N_{H^{\T}_{t}}(C)\subseteq \{z_p\}$.
Let $z_{q_{1}},z_{q_{2}},\dots,z_{q_{r}}$ be the vertices of the path $C$. We construct a new tree-cut decomposition $\T'=(T',{\cal X})$ of $G$ by removing the edges
$tt_{q_{i}}$ from $T$ and adding the edges $t_{p}t_{q_{i}}$, for all $i\in [r]$.
That is, $$V(T')=V(T) \qquad \textrm{and} \qquad E(T')=(E(T)\setminus \{t t_{q_i} :i\in [r]\})\cup\{t_p t_{q_i}:i\in [r]\}.$$
Note that the bags in $\T'$ are exactly the same as in $\T$.
We will show that $\T'=(T',{\cal X})$ has the minimum possible width, and that
$\sum_{e\in E(T')}|\adh_{T'}(e)|<\sum_{e\in E(T)}|\adh_{T}(e)|$, a contradiction to the choice of $\T$.
Notice that $T_{t_{q_i} t} = T'_{t_{q_i} t_p}$ for each $i \in [r]$; we will denote this subtree of $T$ and $T'$ by $T_i$ from now on.
The bags assigned to this subtree do not change either, thus for every $i\in [r]$ we have
\begin{equation}\label{eq:eq2}
\adh_{\T'}(t_{q_i} t_p)=\adh_{\T}(t_{q_i} t).
\end{equation}
Similarly, notice that for every edge $e\in (E(T)\cap E(T'))\setminus \{t t_{p}\}$ we have
\begin{equation}\label{eq:eq3}
\adh_{\T'}(e)=\adh_{\T}(e).
\end{equation}
Finally, let us consider the edge $t t_{p}$. By construction 
$V(T_{t t_p}') =  V(T_{t t_p})\setminus \bigcup_{i\in [r]} V(T_i)$.
In other words, $V(G)$ is partitioned into $X^T_{\ t_p t}$, $X^{T'}_{\ t t_p}$, and $\bigcup_{i\in [r]} X_{T_i}$.
Observe that $\adh_{\T'}(tt_p)$ is obtained from $\adh_{\T}(tt_p)$ by deleting $E_G(X_{T_i}, X^T_{\ t_p t})$ and adding $E_G(X_{T_i}, X^{T'}_{\ t t_p})$, for all $i\in [r]$.
\begin{claim}
	For every $i\in [r]$, $E_G(X_{T_i},X^{T'}_{\ t t_p})=\emptyset$.
	Moreover, there exists $i\in [r]$, $E_G(X_{T_i},X^{T}_{\ t_p t})\neq\emptyset$.
\end{claim}
\begin{proof}
	By the choice of $C$, the only neighbor of $V(C)=\{z_{q_1},\dots,z_{q_r}\}$ in $H^{\T}_t$ is $z_p$.
	That is, for each $i\in[r]$, the vertex $z_{q_i}$ has neighbors only in $z_{q_1},\dots,z_{q_r}$ and $z_p$ in $H^{\T}_t$.
	Since $H^{\T}_t$ is constructed from $G$ by consolidating $X_{T_i}$ into $z_{q_i}$ (for $i\in [r]$) and $X^T_{\ t_p t}$ into $z_p$ (among others),
	this means that the only edges in $G$ leaving $X_{T_i}$ go to $X_{T_1} \cup \dots \cup X_{T_r} \cup X^T_{\ t_p t}$.
	Since $X^{T'}_{\ t t_p}$ is obtained from $X^T_{\ t t_p} = V(G) \setminus X^T_{\ t_p t}$ by removing $X_{T_1} \cup \dots \cup X_{T_r}$, the first claim follows.
	
	Similarly, by the choice of $C$, $z_p$ has an edge to $V(C)=\{z_{q_1},\dots,z_{q_r}\}$ in $H^{\T}_t$.
	This means $G$ has an edge between $X^T_{\ t_p t}$ and $X_{T_i}$ for  some $i\in[r]$.
\cqed\end{proof}

\noindent
It follows that $\adh_{\T'}(tt_{p}) \subsetneq \adh_{\T}(tt_{p})$.
Together with~\eqref{eq:eq2}, \eqref{eq:eq3}, this implies
\begin{eqnarray}
\max_{e\in E(T)}|\adh_{\T'}(e)| & \leq & \max_{e\in E(T)}|\adh_{\T}(e)|\label{eq:eq5}\\ 
\sum_{e\in E(T')}|\adh_{\T'}(e)| & < & \sum_{e\in E(T)}|\adh_{\T}(e)|.\label{eq:eq6}
\end{eqnarray}

To prove our claim it remains to show that $\max_{t\in V(T)}|V(\overline{H^{\T}_{t}})|\leq \max_{t\in V(T)}|V(\overline{H^{\T'}_{t}})|$.
Recall first that $V(T)=V(T')$.
For every vertex $s\in V(T)\setminus \{t,t_{p}\}$ the torso at $s$ in decomposition $\T$ is the same as the torso at $s$ in decomposition $\T'$. This implies that $\overline{H^{\T}_{s}}=\overline{H^{\T'}_{s}}$.

Consider now the torso at $t$ in decomposition $\T'$, i.e. $H^{\T'}_t$. From the way $T'$ was constructed, we have $N_{T'}(t)\subseteq N_{T}(t)$. 
Moreover, recall that for every vertex $s\in N_{T'}(t)$, $\adh_{T'}(st)\subseteq \adh_{T}(st)$. Therefore $H^{\T'}_{t}$ is a subgraph of $H^{\T}_{t}$, and thus
$|V(\overline{H^{\T'}_{t}})|$ cannot be larger than $|V(\overline{H^{\T}_{t}})|$.

Consider finally the torso at the node $t_{p}$ in decomposition $\T'$, i.e. $H^{\T'}_{t_{p}}$. Notice that $V(H^{\T'}_{t_{p}})=V(H^{\T}_{t_p})\cup \{z_{q_{i}}:i\in [r]\}$. Recall, however, that 
$\adh_{\T'}(t_p t_{q_i})=\adh_{\T}(t t_{q_i})$ and therefore, $\deg_{H^{\T'}_{t_p}}(z_{q_i}) = \deg_{H^{\T}_t}(z_{q_i})\leq 2$. 
This implies that the vertices $z_{q_{i}}$, for all $i\in [r]$, get suppressed in $\overline{H^{\T'}_{t_{p}}}$ (more precisely, we can start the procedure of obtaining $\overline{H^{\T'}_{t_{p}}}$ by suppressing them).
For other vertices $s$ in $N_{T'}(t_p)$ we have $\adh_{\T'}(st_p)\subseteq \adh_{\T}(st_p)$.
Hence $|V(\overline{H^{\T'}_{t_p}})|$ cannot be larger than $|V(\overline{H^{\T}_{t_p}})|$.

In any case we obtain that $|V(\overline{H^{\T'}_{t}})|\leq |V(\overline{H^{\T}_{t}})|$ for every $t\in V(T)$.
Together with~\eqref{eq:eq5}, we obtain that $\width(\T')\leq \width(\T)$, that is,
$\T'$ has the minimum possible width (because we assumed $\T$ has).
But then~\eqref{eq:eq6} contradicts our choice of $\T$.
Hence, $|V(\overline{H^{\T}_{t}})|= w(t)$, for every $t\in V(T)$, that is, 
$\width(\T) = \width'(\T)$, which concludes the proof that $\tctw(G)=\tctw'(G)$.

For the algorithmic statement, note that in the proof, either we show that $\width(\T) = \width'(\T)$, 
or we construct a decomposition $\T'$ such that $\width(\T') \leq \width(\T)$ and the sum of all adhesion sizes in $\T'$ is strictly smaller than in $\T$ (Equation~\eqref{eq:eq6}).
Since the construction can be easily performed in time $\Oh(\|G\|\cdot |G|)$ by computing all torsos and 3-centers, 
and since the initial sum of all adhesion sizes can be at most $\Oh(|G|\cdot \width(\T))$, 
the construction can be performed repeatedly until $\width(\T) = \width'(\T)$, concluding the algorithm.
\end{proof}

\subsection{Making the decomposition connected}

The goal of this section is to prove that one can make a tree-cut decomposition connected without increasing its width (more precisely, $\width'$) by much.
For this, we will temporarily need a slight variation on $\tctw'$.
Let $G$ be a graph and $\T=(T,{\cal X})$ be a tree-cut decomposition of $G$.
For every $t\in V(T)$, recall that $\delta_T(t)$ denotes the set of edges of $T$ incident to $t$. We define 
\begin{align*}
z(t)&=|X_{t}|+\sum_{\substack{e\in \delta(t)\\ |\adh(e)|\geq 3}}|\adh(e)|\quad,\ \text{and}\\
\width''(\T)&=\max_{t\in V(T)}z(t).
\end{align*}

\begin{lemma}\label{lem:boundondoubleprimewdth}
For every graph $G$ and every tree-cut decomposition $\T$ of $G$, 
$$\width'(\T)-1\ \leq\ \width''(\T)\ \leq\ (\width'(\T))^{2}.$$
\end{lemma}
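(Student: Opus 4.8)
The plan is to prove the two inequalities separately, comparing the three quantities node by node. Fix a graph $G$, a tree-cut decomposition $\T=(T,{\cal X})$, and a node $t\in V(T)$. Recall
\[
w(t)=|X_t|+|\{t'\in N_T(t):\adh(tt')\text{ is bold}\}|,\qquad z(t)=|X_t|+\sum_{\substack{e\in\delta(t)\\|\adh(e)|\ge 3}}|\adh(e)|.
\]
For the lower bound $\width'(\T)-1\le\width''(\T)$, I would argue that at every $t$ either $z(t)\ge w(t)$, or $z(t)\ge|\adh(e)|-1$ for the largest adhesion $e$ incident to $t$; summing over the two sources of $\width'$ this yields the claim. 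Concretely: each bold neighbor $t'$ of $t$ contributes $|\adh(tt')|\ge 3$ to the sum in $z(t)$ but only $1$ to the count in $w(t)$, so $z(t)\ge w(t)$ whenever $t$ has at least one bold incident adhesion, and trivially $z(t)=|X_t|=w(t)$ when it has none. Hence $\max_t w(t)\le\max_t z(t)=\width''(\T)$. It remains to handle the adhesion term of $\width'$: if $e=uv$ is an edge of $T$ with $|\adh(e)|$ maximum, then $e\in\delta(u)$, and if $|\adh(e)|\ge 3$ then $z(u)\ge|\adh(e)|$, while if $|\adh(e)|\le 2$ then $|\adh(e)|\le 2\le z(t)+1$ for any $t$ (using $|X_t|\ge 0$; one has to be slightly careful if all bags are empty, but then one can note $|\adh(e)|\le 2$ forces $\width'(\T)\le 2$ and the bound $\width'(\T)-1\le\width''(\T)$ is easy, or simply that $\width''(\T)\ge 0$ and $\width'(\T)\le 2$). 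Combining, $\width'(\T)=\max\{\max_e|\adh(e)|,\max_t w(t)\}\le\width''(\T)+1$.

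For the upper bound $\width''(\T)\le(\width'(\T))^2$, fix $t$ and let $k=\width'(\T)$. Every edge $e\in\delta(t)$ satisfies $|\adh(e)|\le k$ by definition of $\width'$, so each term in the sum defining $z(t)$ is at most $k$. The number of terms is the number of bold adhesions incident to $t$, which is exactly $|\{t'\in N_T(t):\adh(tt')\text{ is bold}\}|\le w(t)-|X_t|\le w(t)\le k$. Therefore
\[
z(t)=|X_t|+\sum_{\substack{e\in\delta(t)\\|\adh(e)|\ge 3}}|\adh(e)|\ \le\ |X_t|+k\cdot(w(t)-|X_t|)\ \le\ k\cdot w(t)\ \le\ k^2,
\]
where the middle inequality uses $k\ge 1$ (so that replacing $|X_t|$ by $k|X_t|$ only increases the bound). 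Taking the maximum over $t$ gives $\width''(\T)\le k^2=(\width'(\T))^2$.

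The routine part is the counting; the only genuinely fiddly point is the lower-bound argument's treatment of the raw adhesion term $\max_e|\adh(e)|$ when it is small (at most $2$) and bags may be empty, so that one cannot naively absorb it into some $z(t)$. I expect this corner case — ensuring the ``$-1$'' slack is correctly accounted for — to be the main (minor) obstacle; everything else is a direct term-by-term comparison using $|\adh(e)|\le\width'(\T)$ and the definitions.
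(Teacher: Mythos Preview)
Your proof is correct and follows essentially the same approach as the paper's: a term-by-term comparison of $w(t)$ and $z(t)$, with the adhesion part of $\width'$ handled separately for the lower bound. The corner case you flag (maximum adhesion at most $2$ with possibly empty bags) is resolved exactly as in the paper, by noting that a nonempty adhesion forces $V(G)\neq\emptyset$ and hence $|X_t|\ge 1$ for some $t$, which gives $z(t)\ge 1$.
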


\begin{proof}
Let $G$ be a graph and $\T=(T,{\cal X}=\{X_t : t \in V(T)\})$ be a tree-cut decomposition of $G$.
We first prove that $\width'(\T)\leq \width''(\T)+1$, which is equivalent to the left inequality.
Recall that $\width'(\T)=\max\{\max_{e\in E(T)}|\adh(e)|,\max_{t\in V(T)}w(t)\}$.
Clearly for each $t\in V(T)$, from definitions of $w(t)$ and $z(t)$ we have
$$ w(t) = |X_t| + |\{e \in \delta_T(t) :~|\adh(e)|\geq 3\}|\ \leq\ |X_t| + \sum_{\substack{e\in \delta(t)\\|\adh(e)|\geq 3}}|\adh(e)| = z(t).$$
In particular, $\max_{t \in V(T)} w(t)\leq \max_{t \in V(T)} z(t)$.

Let $e^*\in E(T)$ be an edge of $T$ that maximizes $|\adh(e^*)|$.
If $|\adh(e^*)| \leq 2$, then trivially $|\adh(e^*)| \leq |X_t|+1 \leq z(t)+1$ for some $t \in V(T)$.
Otherwise, if $|\adh(e^*)| \geq 3$, let $t$ be an endpoint of $e^*$;
then $z(t) \geq |X_t| + |\adh(e^*)|$, so in particular $|\adh(e^*)| \leq z(t)$.
Thus in any case, we conclude that
$$\width'(\T) =\max\{|\adh(e^*)|, \max_{t \in V(T)} w(t)\} \leq \max_{t \in V(T)} z(t) +1 = \width''(\T) +1.$$

We now prove that $\width''(\T)\leq (\width'(\T))^{2}$.
Notice  that, by definition, 
$$\max_{e\in E(T)}|\adh(e)|\leq \width'(\T).$$
Moreover, for every $t\in V(T)$,  
$$|X_{t}| + |\{e \in \delta_T(t) : |\adh(e)|\geq 3\}|\ =\ w(t)\ \leq\ \width'(\T).$$
Therefore, 
\begin{align*}
z(t) &= |X_t|+\sum_{\substack{e\in \delta(t)\\ |\adh(e)|\geq 3}} |\adh(e)|\\
     &\leq |X_t|+\sum_{\substack{e\in \delta(t)\\ |\adh(e)|\geq 3}} \width'(\T)\\
     &= |X_t| + |\{e\in \delta(t) :|\adh(e)|\geq 3\}| \cdot \width'(\T)\\
     &\leq\ (\width'(\T))^2.
\end{align*}
Thus, we conclude that $\width''(\T) = \max_{t\in V(T)} z(t) \leq  (\width'(\T))^2$.
\end{proof}

Recall that a tree-cut decomposition $\T=(T,{\cal X}=\{X_t : t \in V(T)\})$ of $G$ is \emph{connected} if for every edge $uv\in E(T)$, the graphs $G[X^T_{\ uv}]$ and $G[X^T_{\ vu}]$ are connected.
We now show that we may always find such
a tree-cut decomposition.

\begin{lemma}\label{lem:conctvt}
Given a tree-cut decomposition of a graph $G$ with $\width'$ at most $k$, 
a connected tree-cut decomposition of $G$ with $\width'$ at most $k^2+1$ can be constructed in time $\Oh(\|G\| \cdot |G|^2 \cdot k^2)$.
\end{lemma}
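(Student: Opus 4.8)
The plan is to take a tree-cut decomposition $\T=(T,{\cal X})$ of $G$ with $\width'(\T)\leq k$ and turn it into a connected one by processing each edge of $T$ and ``cleaning up'' the bags so that both sides induce connected subgraphs, while controlling the blow-up of the width parameter. Since $\width'$ behaves awkwardly under surgery on the tree (splitting a bag can create a new bold adhesion at a node and thus increase $w(t)$ unpredictably), I would first switch to the intermediate parameter $\width''$. By Lemma~\ref{lem:boundondoubleprimewdth}, from $\width'(\T)\leq k$ we get $\width''(\T)\leq k^2$; it then suffices to build a connected decomposition $\T'$ with $\width''(\T')\leq k^2$, because then $\width'(\T')\leq \width''(\T')+1\leq k^2+1$, as required. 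The advantage of $\width''$ is that it is a purely ``local'' quantity $z(t)=|X_t|+\sum_{e\in\delta(t),\,|\adh(e)|\ge 3}|\adh(e)|$, which is monotone under the kind of operations we will perform: moving an isolated piece of a bag into a neighbouring node along a thin adhesion, or subdividing an edge, does not increase any $z(t)$.

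The core step is a standard connectification argument adapted to tree-cut decompositions. First I would handle disconnected bags: if some $G[X_t]$ has several connected components, I can split $X_t$ among several copies of $t$ arranged as children of the old neighbours, or — more cleanly — repeatedly pick a connected component $K$ of $G$ and a leafmost node $t$ whose bag meets $K$; if $X_t\cap K$ is not all of the ``$K$-part below $t$'', one pushes the stray vertices up towards the unique neighbour $t'$ for which $X^T_{t' t}$ still meets $K$. More concretely, for each edge $uv\in E(T)$ for which, say, $G[X^T_{uv}]$ is disconnected, let $D$ be the union of the connected components of $G[X^T_{uv}]$ that contain no vertex incident to $\adh(uv)$; then $D$ is separated from the rest of $G$ only through $X^T_{vu}$, so the edges leaving $D$ all go to $X^T_{vu}$ — hence moving all of $D$ across the edge (reattaching the corresponding subtrees, or just merging those bags into the $v$-side) does not create any new edge crossing $uv$, can only shrink $\adh(uv)$, and merely merges some bags/relabels some vertices, so no $z(t)$ grows. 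Iterating this over all edges in a suitable order (processing edges from the leaves inward, and using that each application strictly decreases a potential such as $\sum_{uv}(\text{number of components of }G[X^T_{uv}])$ or $\sum_e|\adh(e)|$) terminates and yields a decomposition in which every $G[X^T_{uv}]$, and symmetrically every $G[X^T_{vu}]$, is connected — i.e. a connected decomposition — with $\width''$ not increased.

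For the running time I would bound the number of iterations by a polynomial in $|G|$ (each step reduces an integer potential that is at most $\Oh(|G|\cdot k^2)$ or $\Oh(|G|)$), and bound the cost of a single step by $\Oh(\|G\|\cdot|G|)$: computing the connected components of all the $G[X^T_{uv}]$, identifying which components are adhesion-free, and performing the reattachment can all be done by straightforward graph searches, and recomputing adhesions after the move is again linear. Multiplying gives the claimed $\Oh(\|G\|\cdot|G|^2\cdot k^2)$ bound (the factor $k^2$ coming from the $\width''$-potential being $\Oh(|G|\cdot k^2)$). At the end one translates back: $\width'(\T')\leq \width''(\T')+1\leq k^2+1$ by Lemma~\ref{lem:boundondoubleprimewdth}, which is exactly the statement.

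The main obstacle I anticipate is the bookkeeping needed to guarantee that the reattachment operation genuinely never increases $z(t)$ at \emph{any} node — in particular one must make sure that merging the stray component $D$ into the $v$-side does not turn some previously thin adhesion incident to $v$ (or to nodes below $v$) into a bold one, and that edges newly routed between $D$ and $X^T_{vu}$ are already counted. This is where choosing $\width''$ rather than $\width'$ pays off, since $\width''$ only ever sees edge \emph{counts} on bold adhesions and bag sizes, both of which are invariant or decreasing under ``move a whole component together with its subtrees across one tree-edge''; still, the precise case analysis (components touching one, two, or $\ge 3$ edges of the adhesion; what happens when $|\adh(uv)|$ drops from $3$ to $2$ and a formerly bold edge becomes thin) has to be carried out carefully, and organizing the order of processing so the procedure both terminates and stays within the time budget is the delicate part of the argument.
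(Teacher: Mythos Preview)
Your high-level plan is the same as the paper's: pass to $\width''$ via Lemma~\ref{lem:boundondoubleprimewdth}, repair connectivity without increasing $\width''$, and pass back. That framing is correct. The gap is in your ``core step''.

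You propose, for an edge $uv$ with $G[X^T_{\ uv}]$ disconnected, to take $D$ to be the union of those components of $G[X^T_{\ uv}]$ that contain no vertex incident to $\adh(uv)$, and then move $D$ across to the $v$-side. But $\adh(uv)=E_G(X^T_{\ uv},X^T_{\ vu})$ consists exactly of the edges from $X^T_{\ uv}$ to the rest of the graph; a component of $G[X^T_{\ uv}]$ with no vertex incident to $\adh(uv)$ therefore has no edges leaving $X^T_{\ uv}$ at all, so it is a full connected component of $G$. When $G$ is connected (the only interesting case after handling components separately), your set $D$ is always empty and the operation is vacuous. The real difficulty is precisely the situation where \emph{every} component of $G[X^T_{\ uv}]$ touches $\adh(uv)$: there is nothing to ``push across for free'', and your proposal does not say what to do then. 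The paper's fix is different: it makes $r$ disjoint copies of the whole subtree $T_{uv}$, one per component $G_i$, restricts each copy's bags to $V(G_i)$, and attaches all copies to $v$. The key observation is that for every edge $f$ of $T_{uv}$ (and for $uv$ itself), the adhesions of its $r$ copies form a near-partition of $\adh_\T(f)$; this is what guarantees $z(t)$ never increases.

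A second, related issue is your potential. You suggest $\sum_e |\adh(e)|$, but under the duplication operation above this sum is preserved exactly (since the copies' adhesions partition the original), so it cannot serve as a strictly decreasing potential. The paper uses $\sum_e |\adh(e)|^2$: because each $\adh_{\T'}(e_i)$ is nonempty (here connectivity of $G$ is used) and $r\geq 2$, the inequality $\sum_i |\adh_{\T'}(e_i)|^2 < (\sum_i |\adh_{\T'}(e_i)|)^2 = |\adh_\T(uv)|^2$ is strict for the edge $uv$, giving termination. This potential is bounded by $\Oh(|G|\cdot k^2)$, which is where the $k^2$ factor in the running time comes from.
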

\begin{proof}
If a graph is disconnected, we may consider its connected components separately, find a connected tree-cut decomposition for each component and conclude the claim by taking the disjoint union of the decompositions.
We will thus henceforth assume that $G$ is a connected graph.

Suppose $G$ has a tree-cut decomposition of $\width'$ at most $k$.
Then by Lemma~\ref{lem:boundondoubleprimewdth}, the same decomposition has $\width''$ at most $k^2$.
Let $\T=(T,{\cal X}=\{X_t : t \in V(T)\})$ be a tree-cut decomposition of $G$ such that $\width''(\T)\leq k^2$ and, subject to that, $\sum_{e\in E(T)}|\adh_\T(e)|^{2}$ is minimum possible.

We claim that $\T$ is a connected decomposition.
Towards a contradiction, assume that $G[X^T_{\ uv}]$ is not connected, for some $uv \in E(T)$.
Let $G_{1},G_{2},\dots, G_{r}$ be its connected components. 
Let $\T'=(T',{\cal X}')$ be the tree-cut decomposition of $G$ obtained from $(T,{\cal X})$ in the following way. Let $T_{1},T_{2},\dots, T_{r}$ be $r$ distinct copies of
$T_{uv}$ where, for every $i\in [r]$,
\begin{eqnarray*}
V(T_{i}) & = & \{z_{i}:z\in V(T_{uv})\}\\ 
E(T_{i}) & = & \{f_{i}: f\in E(T_{uv})\}.
\end{eqnarray*} 
To obtain $T'$, we remove $T_{uv}$ from $T$, add the trees $T_{1},T_{2},\dots,T_{r}$ instead, and for each $i=1,2,\ldots,r$ we add a new edge 
$e_{i}$ between $v$ and $u_{i}\in V(T_{i})$.
Therefore,
\begin{eqnarray}
V(T') & = & V(T_{vu})\cup \bigcup_{i\in [r]}V(T_{i})\\
E(T') & = & E(T_{vu})\cup \bigcup_{i\in [r]}E(T_{i})\cup \bigcup_{i\in [r]} \{e_{i}\}.\label{eq:eqedges} 
\end{eqnarray}
Notice then that $T_{i}=T'_{u_{i}v}$, for each $i\in [r]$.
We define ${\cal X}'$ in the following way. 
\begin{equation}\label{eq:eq22}
X_{s}'=\begin{cases}
X_{s} & \text{if } s\in V(T_{vu})\\
X_{z}\cap V(G_{i}) & \text{if } s=z_{i} \text{ for some } z\in V(T_{uv})\text{ and }i\in [r].
\end{cases}
\end{equation}
%
%
%
We will obtain a contradiction by proving that 
$$\width''(\T')\leq \width''(\T)\qquad \textrm{and}\qquad \sum_{e\in E(T')}|\adh_{\T'}(e)|^{2}<\sum_{e\in E(T)}|\adh_{\T}(e)|^{2}.$$
However, towards our goal, we have
to first show how adhesions in $\T'$ correspond to adhesions in $\T$.


Notice that for every $f\in E(T)\cap E(T')$,
\begin{equation}\label{eq:eq23}
\adh_{\T'}(f)=\adh_{\T}(f).
\end{equation}
The remaining edges of $T'$ are of the form $p_{j}q_{j}\in E(T_{j})$ or $e_{j}$, for some $j\in [r]$. In the latter case, let us write $p_{j}=u_{j}$ and $q_{j}=v$.
We claim that $\{ \adh_{\T'}(p_j q_j)   : j \in [r] \}$ is a near-partition of $\adh_{\T}(pq)$ (here, if $p_jq_j=e_j=u_jv$, then $p=u$ and $q=v$).
Indeed, $\adh_{\T'}(p_j q_j)$ is by construction equal to the set of edges between $X^T_{\ pq}\cap V(G_{j})$ and $X^T_{\ qp}\cup \bigcup_{i\neq j}V(G_{i})$.
Since, $E(V(G_{j}), V(G_{i}))=\emptyset$, for $i\neq j$, it follows that
$$\adh_{\T'}(p_{j}q_j)=E_G(X^T_{\ pq}\cap V(G_{j}), X^T_{\ qp}).$$
Since $X^T_{\ qp}, X^T_{\ pq}$ is a near-partition of $V(G)$ and $\{X^T_{\ pq} \cap V(G_{i}) : i \in [r]\}$ is a near-partition of $X^T_{\ pq}$, we infer that
 $\{ \adh_{\T'}(p_j q_j)   : j \in [r] \}$ is a near-partition of $E_G(X^T_{\ pq}, X^T_{\ qp})=\adh_{\T}(pq)$, as claimed.
 Therefore, for all edges $f \in E(T_{vu})$, as well as for $f=uv$ (in which case $f_i=e_i$), we have
\begin{equation}\label{eq:eq29}
\sum_{i\in [r]}|\adh_{\T'}(f_{i})| = |\adh_{\T}(f)|.
\end{equation}

We are now able to prove that $\width''(\T')\leq \width''(\T)$.
That is, we want to show that 
 $\max_{t \in V(T')} z_{\T'}(t)\leq \max_{t \in V(T)} z_{\T}(t)$. 
 Here $z_{\T}(\cdot)$ and $z_{\T'}(\cdot)$ are the $z(\cdot)$-functions as in the definition of $\tctw''$, applied respectively in decompositions $\T$ and $\T'$.

Let first $t\in V(T_{vu})\setminus \{v\}$. From~\eqref{eq:eq22} we obtain that $|X_{t}'|=|X_{t}|$ and, from~\eqref{eq:eq23} 
we obtain that for every edge $y\in \delta_T(t)$, $|\adh_{\T'}(y)|=|\adh_{\T}(y)|$.
Thus, $z_{\T'}(t)=z_{\T}(t)$.

Let now $t_{i}\in V(T_{i})$, for some $i\in [r]$. From~\eqref{eq:eq22} we obtain that $|X_{t_{i}}'|\leq |X_{t}|$. 
Moreover, for every edge $f_i$ incident to $t_i$, from~\eqref{eq:eq29} we obtain that $|\adh_{\T'}(f_i)|\leq |\adh_\T(f)|$. Thus, $z_{\T'}(t_{i})\leq z_{\T}(t)$.

Finally, let $t=v$. From~\eqref{eq:eq22}, we obtain that $|X_{v}'|=|X_{v}|$.
Observe that $\delta_{T'}(v) = E_1 \uplus E_2$, where
$E_{1}=\delta_T(v)\setminus \{uv\}$ and $E_{2}=\{e_{i}\mid i\in [r]\}$.
Then from Equation~\eqref{eq:eq23}, for every edge $y\in E_{1}$ we have $|\adh_{\T'}(y)|=|\adh_{\T}(y)|$, and from Equation~\eqref{eq:eq29}, we also have $\sum_{i\in[r]}|\adh_{\T'}(e_{i})| = |\adh_{\T}(e)|$.
From this it follows that $z_{\T'}(v) \leq z_{\T}(v)$.

\smallskip
Thus for each $t'\in V(T')$, we have $z_{\T'}(t')\leq \max_{t\in V(T)} z_{\T}(t) = \width''(\T)$,
and hence  $\width''(\T')\leq \width''(\T)$.
Furthermore, from~\eqref{eq:eq23} and~\eqref{eq:eq29}, we have 
$$\sum_{e \in E(T')} |\adh_{\T'}(e)| = \sum_{e \in E(T)} |\adh_{\T}(e)|$$

Notice now that from~\eqref{eq:eq29}, for each $f \in E(T_{vu})\cup \{uv\}$  we have
\begin{equation}\label{eq:eq34}
\sum_{i\in [r]}|\adh_{\T'}(f_{i})|^{2} \leq \bigl(\sum_{i\in [r]}|\adh_{\T'}(f_{i})|\bigr)^{2} = |\adh_{\T}(f)|^{2}.
\end{equation}
Here, for $f=uv$ we consider $f_i=e_i$.

Since $X^T_{\ uv},X^T_{\ vu}$ is a near-partition of $V(G)$ and $V(G_1),\dots,V(G_r)$ is a partition of $X^T_{\ uv}$, we have that $X^T_{\ uv},V(G_1),\dots,V(G_r)$ is a near-partition of all of $V(G)$.
Furthermore, since $G$ is connected and $E_G(V(G_i),V(G_j))=\emptyset$ for $i\neq j$, it must be that $E_G(V(G_i),X^T_{\ uv})$ is non-empty for each $i\in [r]$. 
This means $\adh_{\T'}(e_{i})$ is non-empty, and since $r\geq 2$, we infer that the inequality in~\eqref{eq:eq34} is strict for $f=uv$.
%
We conclude that 
\begin{equation}\label{eq:connectivityPotential}
\sum_{e\in E(T')}|\adh_{\T'}(e)|^{2}<\sum_{e\in E(T)}|\adh_{\T}(e)|^{2},
\end{equation}
a contradiction to the choice of $\T$.

This concludes the proof that $G$ has a connected tree-cut decomposition of $\width''$ at most $k^2$ and hence, by Lemma~\ref{lem:boundondoubleprimewdth}, of $\width'$ at most $k^2+1$.
Note that in the proof, we either proved that $\T$ is already connected, or constructed a decomposition $\T'$ with $\width''(\T') \leq \width''(\T)$ 
and with a strictly smaller sum of squares of adhesion sizes (Equation~\eqref{eq:connectivityPotential}).
Since the construction can be easily performed in $\Oh(\|G\|\cdot |G|)$ time, and since the  sum of squares of adhesion sizes is initially bounded by $\Oh(|G|\cdot k^2)$ 
(as each adhesion has size bounded by $\width'$ of the decomposition, which is at most $k$), the construction can be performed repeatedly until $\T$ is connected, concluding the algorithm.
\end{proof}

\subsection{Neat tree-cut decompositions} 
Recall that a tree-cut decomposition can be rooted by selecting a root in every its tree, which naturally imposes child-parent relation on the nodes, as well as the sibling relation.
As already mentioned, the parent of a node $t$ is denoted by $\pi(t)$.
We now define additional properties of rooted tree-cut decompositions, and show that these properties can be achieved by simple modifications of the decomposition.
This will help us in the next sections, where we will handle tree-cut decompositions combinatorially and algorithmically.
The main notion that we will be interested in is called {\em{neatness}}.

\begin{definition}
Let $G$ be a graph and $\T=(T,{\cal X}=\{X_t:t\in V(T)\})$ be a rooted tree-cut decomposition of $G$. 
We say that $\T$ is {\em neat} if it is connected and, furthermore, for every non-root node $t\in V(T)$ such that $\adh(t\pi(t))$ is thin, and every sibling $t'$ of $t$, 
there are no edges between $X^T_{\ t\pi(t)}$ and $X^T_{\ t'\pi(t)}$ in $G$.
\end{definition}

The second condition was used by Ganian et al.~\cite{Ganian0S15} under the name {\em{niceness}}.
We now show that every connected tree-cut decomposition can be made neat without increasing its width.
The proof of this result follows closely the lines of the proof of~\cite[Lemma~1]{Ganian0S15}.

\begin{theorem}\label{thm:neat}
Given a connected tree-cut decomposition of a graph $G$ with $\width'$ at most $k$, a neat tree-cut decomposition of $G$ with $\width'$ at most $k$ can be computed in time $\Oh(|G|^3)$.
\end{theorem}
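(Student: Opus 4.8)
The plan is to mimic the proof of \cite[Lemma~1]{Ganian0S15}, transported to our setting with the $\width'$ measure. Start from a connected tree-cut decomposition $\T=(T,{\cal X})$ of $G$ with $\width'(\T)\le k$, root it arbitrarily, and then repeatedly look for a violation of neatness: a non-root node $t$ with $\adh(t\pi(t))$ thin, a sibling $t'$ of $t$, and an edge of $G$ between $X^T_{\ t\pi(t)}$ and $X^T_{\ t'\pi(t)}$. The fix is local: since $|\adh(t\pi(t))|\le 2$ and $\T$ is connected, the whole subtree $T_{t\pi(t)}$ communicates with the rest of $G$ through at most two edges; I would detach the subtree rooted at $t$ from $\pi(t)$ and re-attach it as a child of the sibling $t'$ whose part it has an edge to (or, more carefully, slide it down to the appropriate descendant so that the adhesion toward the parent remains controlled). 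One must check that this re-attachment does not create a new thin adhesion with a fresh violation that could make the process loop forever; this is handled by a potential argument.

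Concretely, the key steps in order are: (1) define the potential $\Phi(\T)=\sum_{t\in V(T)} \depth(t)$ or, following Ganian et al., a lexicographically decreasing measure such as the multiset of depths of nodes with thin parent-adhesion, and show that each re-attachment step strictly decreases it — moving $t$ to be a child of $t'$ increases the depth of $t$ and its subtree, so one must instead argue that the \emph{number of thin adhesions that are "bad"} strictly drops, or bound depths by $|V(T)|$ and re-root; I would follow exactly the quantity used in \cite{Ganian0S15}. (2) Verify that the move does not increase $\width'$: the bags are unchanged, so each $|X_t|$ is the same; the adhesion $\adh(t\pi(t))$ is replaced by an adhesion from $t$ toward $t'$, and because the only edges leaving $X^T_{\ t\pi(t)}$ that are affected go to $X^T_{\ t'\pi(t)}$, this new adhesion has size at most $|\adh(t\pi(t))|\le 2$, hence stays thin and contributes nothing to $w(\cdot)$; for the common parent $\pi(t)=\pi(t')$ the adhesion toward $t$ is removed from its incident-bold count (it was thin anyway, so $w(\pi(t))$ is unchanged), and for $t'$ a thin adhesion is added (again no change to $w(t')$), so $w$ is nowhere increased and all adhesions stay of size $\le k$. (3) Check that connectedness is preserved: $G[X^T_{\ t\pi(t)}]$ is still connected (its vertex set is unchanged), and $G[X^{T'}_{\ *}]$ for the new edges is a disjoint union of previously-connected pieces joined by the very edge that caused the violation, hence connected; one has to be mildly careful that $G[X^{T'}_{\ t'\,\mathrm{side}}]$ remains connected, which follows because we only attach inside a part of $G$ that $X^T_{\ t\pi(t)}$ already touches. (4) Bound the running time: each step is $\Oh(\|G\|+|G|)$ to recompute the relevant adhesions and locate the next violation, and the potential is polynomially bounded (at most $|V(T)|^2\le\Oh(|G|^2)$ if measured by a sum of depths), giving $\Oh(|G|^3)$ overall; here one uses that a decomposition of bounded $\width'$ has $\Oh(|G|)$ nodes after contracting degree-$\le 2$ stuff, or one simply bounds $|V(T)|=\Oh(|G|)$ for a reasonable decomposition.

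The main obstacle I anticipate is the termination/potential argument: a naive "re-attach as a child" operation increases depths and could cycle, so the correct potential has to be chosen with care — one needs a quantity that provably strictly decreases while $\width'$ and connectedness are maintained. This is precisely the delicate point in \cite[Lemma~1]{Ganian0S15}, and I would import their choice of potential (which counts, roughly, pairs $(t,t')$ of siblings that still violate the niceness condition, or orders nodes so that the "problematic" subtree always moves strictly closer to the root in an auxiliary sense) rather than reinvent it. A secondary subtlety is making sure that after the move no \emph{new} thin adhesion becomes bad, i.e.\ that the subtree we moved does not immediately have an edge to a sibling under its new parent $t'$; here one uses that, by the choice of the violating edge and the fact that all edges leaving $X^T_{\ t\pi(t)}$ into $X^T_{\ \pi(t)\text{-side}}$ now land inside $X_{t'}$-side, after the move this adhesion has become part of an \emph{internal} cut of $G[X^{T'}_{\ t'\pi(t)}]$, so it cannot reappear as a sibling-to-sibling edge at the same level. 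Once termination is secured, correctness (that a decomposition with no violations is exactly a neat one) is immediate from the definition, and the width and connectedness invariants have already been verified step by step.
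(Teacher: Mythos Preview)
Your proposal follows the paper's approach, and most of the invariant-preservation checks are correct in spirit. However, you have the termination argument exactly backwards, and this is the very point you flag as the ``main obstacle.''

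The potential $\Phi(\T)=\sum_{t}\depth(t)$ that you write down first is the correct one, and it \emph{increases} strictly at every rerouting step: when the subtree rooted at $t$ is re-attached strictly deeper, every node of $T_{t\pi(t)}$ gains at least one in depth, while all other nodes keep their depth. Since $\Phi(\T)\le|V(T)|^2$, this immediately gives termination after $\Oh(|V(T)|^2)$ steps and hence the $\Oh(|G|^3)$ bound. There is no need to look for a decreasing potential, to count bad sibling pairs, or to worry that ``a naive re-attach could cycle'': an increasing-but-bounded potential rules out cycling just as well as a decreasing one. Your sentence ``I would follow exactly the quantity used in \cite{Ganian0S15}'' is right---just note that their quantity is precisely $\sum_t\depth(t)$, used as an increasing potential bounded above.

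One smaller point: the paper reattaches $t$ not to the sibling $t'$ itself but to a \emph{bad neighbor} $b$ of maximum depth inside the sibling's subtree, i.e.\ a node whose bag contains the other endpoint of the violating $G$-edge. This is what your parenthetical ``slide it down to the appropriate descendant'' gestures at. With that choice, the adhesion of every tree edge on the path from $\pi(t)$ down to $b$ loses the violating $G$-edge and gains at most one (the only other edge of the thin cut $\delta_G(X^T_{\ t\pi(t)})$), so no adhesion grows, and the connectedness check becomes uniform along that path. Your version (attach to $t'$) can also be made to work, but then you must in addition verify that $\adh(t'\pi(t'))$ does not grow and that $G[X^{T'}_{\ \pi(t')t'}]$ stays connected---both of which follow from the same ``one edge out, at most one edge in'' count, but you omitted them. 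Finally, the secondary worry you raise---that the moved subtree might immediately be bad under its new parent---needs no separate argument: even if it is, $\Phi$ has strictly increased, so progress is made regardless.
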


\begin{proof}
Similarly as before, if a graph is disconnected, we may consider its connected components separately, find a neat tree-cut decomposition for each component and conclude the claim by joining the decompositions into one forest. We will thus henceforth assume that $G$ is a connected graph.

Let $\T=(T,{\cal X})$ be a connected tree-cut decomposition of $G$ with $\width'(\T) \leq k$.
Let us arbitrarily choose a node $r\in V(T)$ to be the root of $T$. 
We will show that, after this rooting, $\T$ can be transformed into a neat tree-cut decomposition of $G$.

Similarly to~\cite{Ganian0S15}, we will call a node $t\in V(T)\setminus \{r\}$ {\em bad}
if $|\adh(t\pi(t))|\leq 2$ and there exists a sibling $t'$ of $t$, such that there is an edge in $G$
between $X^T_{\ t\pi(t)}$ and $X^T_{\ t'\pi(t)}$.
Moreover,
for a bad vertex $t$ we say that a node $b$ is a {\em bad neighbor} of $t$ if
$b \in V(T_{t'\pi(t)})$ for some sibling $t'$ of $t$, and there is an edge between $X_b$ and $X^T_{\ t\pi(t)}$ in $G$.

We define the following two procedures, similarly to~\cite{Ganian0S15}, see Figure~\ref{fig:rerouting}:
\begin{quote}
{\sc Rerouting}$(t)$: let $t$ be a bad node and let $b$ be a bad neighbor of $t$ of maximum depth. Then remove the edge $t\pi(t)$ from $T$ and add a 
new edge $bt$, thus making $t$ a child of $b$.
\end{quote}

\begin{quote}
{\sc Top-down Rerouting}: as long as $(T,{\cal X})$ is not a neat tree-cut decomposition, pick a bad node $t$ of minimum depth and perform {\sc Rerouting}$(t)$.
\end{quote}

	\begin{figure}[H]
		\centering
		\vspace*{-10pt}
		{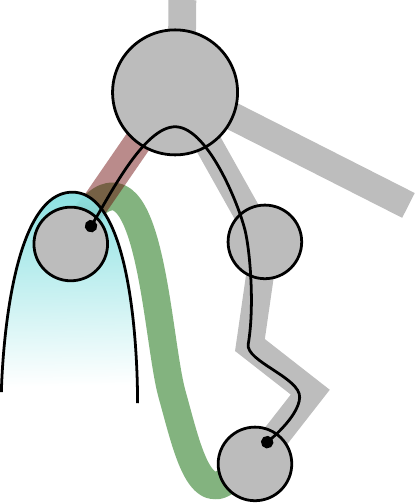}
		\caption{
			A decomposition with a bad node $t$ and a bad neighbor $b$ -- the rerouting procedure will reattach the subtree $T_{t\pi(t)}$ below $b$. Two edges of $G$ in $\adh(t t_p)$ are shown.}
		\label{fig:rerouting}
	\end{figure}

We first make sure that rerouting does not spoil the connectivity of a decomposition.

\begin{claim}
Let $\T=(T,{\cal X} = \{X_t:t \in V(T)\})$ be a connected rooted tree-cut decomposition and $t\in V(T)$ be a bad vertex of $T$.
If $\T'=(T',{\cal X})$ is the
rooted tree-cut decomposition obtained from $\T$ after running {\sc Rerouting}$(t)$, then $\T'$ is also connected.
\end{claim}

\begin{proof}
Notice first that, by construction, we have $E(T')=(E(T)\setminus \{t\pi(t)\})\cup\{tb\}$. Let $P$ denote the path in $T'$ (and in $T$) that leads from $\pi(t)$ to $b$.  

Observe that if $uv$ is an edge of $E(T')\setminus (E(P)\cup \{tb\})$,
then $V(T'_{uv})=V(T_{uv})$ and $V(T'_{vu})=V(T_{vu})$. Thus $G[X^T_{\ uv}]$ and $G[X^T_{\ vu}]$ are connected.
Consider now the edge $tb$ of $T'$. Notice that $V({T'}_{tb})=V(T_{t\pi(t)})$ and $V({T'}_{bt}) = V(T_{\pi(t)t})$.
This implies that the graphs $G[X^{T'}_{\ tb}]$ and $G[X^{T'}_{\ bt}]$ are connected as well.

Finally, consider an edge $e=uv$ of $P$ and, without loss of generality, we assume that $u$ is the parent of $v$. Notice then that
$$X^{T'}_{\ vu}=X^T_{\ vu}\cup X^T_{\ t\pi(t)}\qquad \text{and}\qquad X^{T'}_{\ uv}=X^T_{\ uv}\setminus X^T_{\ t\pi(t)}.$$ 
Recall that the subgraphs of $G$ induced by $X^T_{\ vu}$ and $X^T_{\ t\pi(t)}$ are connected.
Moreover, since $t$ is a bad vertex and $b$ is a bad neighbor of $t$, there exists an edge between $X^T_{\ t\pi(t)}$ and $X_{b}$ in $G$. 
Therefore, as $X_{b}\subseteq X^T_{\ vu}$, it follows that the graph $G[X^{T'}_{\ vu}]$ is connected.

To show that $X^{T'}_{\ uv}=X^T_{\ uv}\setminus X^T_{\ t\pi(t)}$ also induces a connected subgraph in $G$, recall that $\adh_{\T}(t\pi(t))$ is thin.
That is, $|\delta_{G}(X^T_{\ t\pi(t)})|\leq 2$, and recall that there is an edge between $X^T_{\ t\pi(t)}$ and $X_{b}$ in $G$. 
Observe then that $X^T_{\ uv}=X^{T'}_{\ uv}\cup X^T_{\ t\pi(t)}$, where there is at most one edge between $X^{T'}_{\ uv}$ and $X^T_{\ t\pi(t)}$.
Since $G[X^{T}_{\ uv}]$ is connected, this implies that after removing $X^T_{\ t\pi(t)}$ it remains connected, because any path that connected two vertices of $X^{T'}_{\ uv}$ and went through $X^T_{\ t\pi(t)}$ would imply at least two edges between these parts.
Thus, $G[X^{T'}_{\ uv}]$ is also connected.
\cqed\end{proof}

Next, we verify that rerouting does not increase the width.

\begin{claim}
Let $\T=(T,{\cal X})$ be a rooted tree-cut decomposition and $t\in V(T)$ be a bad vertex of $T$. If $\T'=(T',{\cal X})$ is the
rooted tree-cut decomposition obtained from $\T$ after running {\sc Rerouting}$(t)$, then $\width'(\T')\leq\width'(\T)$.
\end{claim}
\begin{proof}
As before, notice first that we have $E(T')=(E(T)\setminus \{t\pi(t)\})\cup\{tb\}$. Let $P$ denote the path in $T'$ (and in $T$) that leads from $\pi(t)$ to $b$.  

Observe that if $e$ is an edge of $E(T')\setminus (E(P)\cup \{tb\})$ 
then $\adh_{\T'}(e)=\adh_{\T}(e)$.
Notice also that $\adh_{\T'}(tb)=\adh_{\T}(t\pi(t))$.

Finally, let $e$ be an edge of $P$. Notice that the edge between $X^T_{\ t\pi(t)}$ and 
$X_{b}$ belongs to $\adh_{\T}(e)$ but not to $\adh_{\T'}(e)$. Moreover, since
$|\adh_{\T}(t\pi(t))|\leq 2$, there exists at most one edge that belongs to $\adh_{\T'}(e)$
but not to $\adh_{\T}(e)$. We conclude that $|\adh_{\T'}(e)|\leq |\adh_{\T}(e)|$, for every $e\in E(P)$.

In particular, $\max_{e\in E(T')} |\adh(e)| \leq \max_{e\in E(T)} |\adh(e)|$.
Furthermore, since the bag at every node is unchanged, $w_{\T'}(v)$ can be larger than $w_{\T}(v)$ only for $v=b$.
(Here, $w_{\T}(\cdot)$ and $w_{\T'}(\cdot)$ are functions $w(\cdot)$ as in the definition of $\width'$, applied to decompositions $\T$ and $\T'$, respectively.)
However, even in this case the only additional edge incident to $b$ is $tb$, whose adhesion has size at most~$2$.
We conclude that $w_{\T'}(v)\leq w_{\T}(v)$ for each $v\in V(T)$, and hence $\width'(\T')\leq \width'(\T).$
\cqed\end{proof}

The above two claims show that it is safe to apply the {\sc Rerouting} procedure. 
We now show that applying it exhastively, as described in procedure {\sc Top-down Rerouting}, always terminates within a polynomial number of steps.

\begin{claim} {\sc Top-down Rerouting} terminates after $\Oh(|T|^2)$ invocations of {\sc Rerouting}$(t)$.\end{claim}
\begin{proof}
We note  that the proof is again similar to the one in~\cite{Ganian0S15}.
However, we include it for the sake of completeness. 
For a tree-cut decomposition $\T=(T,{\cal X})$ and a node $v\in V(T)$
let $\depth(v,T)=\dist_{T}(v,r)$, where $r$ is the root of $T$. Notice, that for every $v\in V(T)$ we have $\depth(v,T)\leq |T|$,
and hence $$\sum_{v\in V(T)}\depth(v,T)\leq |T|^{2}.$$
Let $\T=(T,{\cal X})$ be a rooted
tree-cut decomposition of $G$ and $t$ be a bad node of $\T$ such that the distance of $t$ from $r$ is minimum. 
Let $\T'=(T',{\cal X})$ be the tree-cut decomposition of $G$ obtained after
performing {\sc Rerouting}$(t)$. Notice that,
for every $v\in V(T_{t\pi(t)})$, $\depth(v,T')\geq \depth(v,T)+1$.
Moreover, for every $v\in V(T_{\pi(t)t})$, $\depth(v,T')=\depth(v,T)$. This implies that
$$\sum_{v\in V(T)}\depth(v,T)<\sum_{v\in V(T')}\depth(v,T')\leq |T|^{2}.$$
Therefore, since the sum of depths of nodes increases at each step, and it is always upper bounded by $|T|^2$, the {\sc Top-down Rerouting} procedure terminates after at most $|T|^2$ steps.
\cqed\end{proof}

This concludes the proof that {\sc Top-down Rerouting} produces a neat tree-cut decomposition of $\width'$ bounded by $k^2+1$.
Since we always assume $|T|=\Oh(|G|)$, {\sc Rerouting} is invoked $\Oh(|G|^2)$ times. 
Finding a bad node and a bad neighbor can be done in $\Oh(|G|)$ time by inspecting edges of thin adhesions and computing the least-common-ancestor in $T$ of the two bags containing their endpoints, using e.g. Gabow and Tarjan's classical algorithm~\cite{GabowT83}.
Since {\sc Rerouting} can be performed in $\Oh(|G|)$ time, the algorithm runs in $\Oh(|G|^3)$ total time.
\end{proof}

We can now combine all tools developed so far to prove the following statement, which will later serve as an abstraction for getting tree-cut decompositions of $\F$-free graphs with good properties.

\begin{corollary}\label{corol:neat}
	Given an $\F$-free graph $G$, 
	a neat tree-cut decomposition of $\width'$ at most $b_\F$ of $G$
	can be computed in time $\Oh(\|G\| \cdot |G|^2)$,
	where $b_\F = 4(a_\F)^2 + 1$ is a constant depending on $\F$ only.
	Here, $a_\F$ is the constant given by Theorem~\ref{thm:Ffree_tctw}.
\end{corollary}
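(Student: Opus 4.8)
The plan is to chain together the three main results of this section, carefully tracking the width bounds and the running times. First I would apply Theorem~\ref{thm:Ffree_tctw}: since $G$ is $\F$-free and $\F$ contains a planar subcubic graph, we have $\tctw(G) \leq a_\F$, so there exists a tree-cut decomposition $\T_0$ of $G$ with $\width(\T_0) \leq a_\F$. To actually \emph{compute} such a decomposition, I would invoke the $2$-approximation algorithm of Kim et al.\ (Theorem~\ref{thm:apx}) with parameter $r = a_\F$: in time $2^{\Oh(a_\F^2 \log a_\F)} \cdot |G|^2 = \Oh(|G|^2)$ (as $a_\F$ is a constant) it returns a tree-cut decomposition $\T_0$ of $G$ with $\width(\T_0) \leq 2a_\F$ (it cannot report $\tctw(G) > a_\F$ since that is false).

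Next I would convert from $\width$ to $\width'$ via Theorem~\ref{thm:tcweqstcw'}: given $\T_0$ with $\width(\T_0) \leq 2a_\F$, in time $\Oh(\|G\| \cdot |G|^2 \cdot \width(\T_0)) = \Oh(\|G\| \cdot |G|^2)$ we get a decomposition $\T_1$ with $\width'(\T_1) \leq \width(\T_0) \leq 2a_\F$. Then apply Lemma~\ref{lem:conctvt} with $k = 2a_\F$: in time $\Oh(\|G\| \cdot |G|^2 \cdot k^2) = \Oh(\|G\| \cdot |G|^2)$ we obtain a \emph{connected} tree-cut decomposition $\T_2$ with $\width'(\T_2) \leq k^2 + 1 = 4a_\F^2 + 1 = b_\F$. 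Finally apply Theorem~\ref{thm:neat} with this value as the width bound: in time $\Oh(|G|^3)$ we turn $\T_2$ into a \emph{neat} tree-cut decomposition $\T$ with $\width'(\T) \leq b_\F$. Since every step runs in time $\Oh(\|G\| \cdot |G|^2)$ or $\Oh(|G|^3)$, and $\|G\| \cdot |G|^2 \geq |G|^3$ only when $\|G\| \geq |G|$ (true as we may assume $G$ has no isolated vertices or absorb this into constants), the total running time is $\Oh(\|G\| \cdot |G|^2)$ as claimed.

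There is essentially no mathematical obstacle here — the statement is a direct corollary, and the proof is pure bookkeeping of the constants and running times through the already-established lemmas. The one point requiring a little care is the running-time accounting: the $\Oh(|G|^3)$ from Theorem~\ref{thm:neat} must be shown to be absorbed into $\Oh(\|G\| \cdot |G|^2)$; this is fine provided we assume (as is standard, and as the paper does when it writes ``$|T| = \Oh(|G|)$'') that $G$ has no isolated vertices, so $\|G\| \geq |G|/2$, or alternatively one simply notes that the corollary's bound could equally be stated as $\Oh(\|G\| \cdot |G|^2 + |G|^3)$. I would also double-check that the intermediate decompositions are passed in the correct format expected by each lemma (the conversion Theorem~\ref{thm:tcweqstcw'} is needed precisely because Theorem~\ref{thm:apx} outputs a decomposition with a $\width$ guarantee, not a $\width'$ guarantee).

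Thus the chain $\text{Thm~\ref{thm:apx}} \to \text{Thm~\ref{thm:tcweqstcw'}} \to \text{Lem~\ref{lem:conctvt}} \to \text{Thm~\ref{thm:neat}}$ yields the corollary, with $b_\F = 4(a_\F)^2 + 1$.
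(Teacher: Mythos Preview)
Your proposal is correct and follows essentially the same approach as the paper's own proof: the identical chain Theorem~\ref{thm:apx} $\to$ Theorem~\ref{thm:tcweqstcw'} $\to$ Lemma~\ref{lem:conctvt} $\to$ Theorem~\ref{thm:neat}, with the same width bookkeeping yielding $b_\F = 4(a_\F)^2+1$. Your additional remark about absorbing the $\Oh(|G|^3)$ term into $\Oh(\|G\|\cdot|G|^2)$ is a point the paper simply glosses over.
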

\begin{proof}
	We have $\tctw(G)\leq a_\F$ by Theorem~\ref{thm:Ffree_tctw}.
	By Theorem~\ref{thm:apx}, a tree-cut decomposition of $\width$ at most $2a_\F$ can be computed in time $\Oh(|G|^2)$.
	From it, by Theorem~\ref{thm:tcweqstcw'}, a decomposition of $\width'$ at most $2a_\F$ can be computed in time $\Oh(\|G\| \cdot |G|^2)$.
	Then, by Lemma~\ref{lem:conctvt}, a connected decomposition of $\width'$ at most $4(a_\F)^2 + 1$ can be computed in time $\Oh(\|G\|\cdot |G|^2)$.
	Finally, by Theorem~\ref{thm:neat}, a neat decomposition of $\width'$ at most $4(a_\F)^2 + 1$ can be computed in time $\Oh(|G|^3)$.
\end{proof}

Let $G$ be a graph with a neat tree-cut decomposition $\T=(T,{\cal X})$, and let $p \in V(T)$.
For $t \in N_T(p)$, we say the component $T_{tp}$ of $T-p$ is connected \emph{with a neat adhesion to $p$} if the adhesion of $tp$ is thin, and moreover all of its edges have an endpoint in $X_p$.
We now prove a result that shows what the neat decompositions are useful for: provided some node has many neighbors, all but a constant number of them is connected to it via neat adhesions.
\begin{corollary}\label{corol:neatAdhesions}
	Let $G$ be a graph with a neat tree-cut decomposition $\T=(T,{\cal X})$ with $\width'(\T)\leq b$, for some integer $b$.
	Then for every $p\in V(T)$, at most $2b+1$ of the connected components of $T-p$ are not connected with a neat adhesions to $p$.
\end{corollary}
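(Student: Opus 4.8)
The plan is to bound the number of bad components of $T-p$ (those not connected with a neat adhesion to $p$) by splitting them into three groups. Recall that a component $T_{tp}$ of $T-p$ is bad exactly when $\adh(tp)$ is bold, or $\adh(tp)$ is thin but contains an edge with no endpoint in $X_p$; I will call such an edge \emph{offending} for $t$. I would observe that every bad component lies in at least one of the following three sets: the components whose adhesion to $p$ is bold; the components $T_{tp}$ with $t$ a \emph{child} of $p$, $\adh(tp)$ thin, and some offending edge; and the singleton consisting of the parent-direction component $T_{\pi(p)p}$ (present only when $p$ is not a root). It then suffices to bound these three sets by $b$, $b$, and $1$ respectively. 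As usual I would first reduce to the case that $G$ is connected and work inside the tree of $T$ containing $p$.

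The first bound is immediate from the definition of $\width'$: since $w(p) = |X_p| + |\{t \in N_T(p) : \adh(pt)\text{ is bold}\}| \leq \width'(\T) \leq b$, there are at most $b$ bold adhesions at $p$. The third bound is trivial, as $p$ has a unique parent.

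The heart of the argument is the second bound, and this is where neatness is used. I would fix a child $t$ of $p$ with $\adh(tp)$ thin together with an offending edge $e=xy$, where $x \in X^T_{tp}$, so that $y \in X^T_{pt}$, and $y \notin X_p$ because $e$ is offending. The set $X^T_{pt}$ is the disjoint union of $X_p$, the bags $X^T_{t'p}$ over the siblings $t'$ of $t$, and — provided $p$ is not the root — the upward bag $X^T_{\pi(p)p}$. Since $\adh(t\pi(t)) = \adh(tp)$ is thin, neatness forbids any edge between $X^T_{tp}$ and a sibling bag $X^T_{t'p}$, so $y$ can only lie in $X^T_{\pi(p)p}$; in particular $p$ is not a root, and $e \in E_G(X^T_{p\pi(p)}, X^T_{\pi(p)p}) = \adh(p\pi(p))$. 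Choosing one offending edge for each such child $t$ yields a map into $\adh(p\pi(p))$, and this map is injective because the sets $X^T_{tp}$ for distinct neighbors $t$ of $p$ are pairwise disjoint while the chosen edge for $t$ has an endpoint in $X^T_{tp}$. Hence the second set has size at most $|\adh(p\pi(p))| \leq \width'(\T) \leq b$, and summing the three bounds gives $2b+1$.

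I expect the only subtle point to be the correct handling of what neatness provides: it is a statement about a node and its \emph{siblings}, and it applies only when the node's parent-adhesion is thin. This is why the parent-direction component has to be peeled off separately (hence the extra $+1$), and why in the second group one must verify that the offending edges actually escape $p$'s subtree upward into $\adh(p\pi(p))$ rather than staying among siblings — which is precisely what neatness rules out. Everything else is elementary counting using the bounds $|\adh(e)| \leq \width'(\T) \leq b$ and $w(p) \leq \width'(\T) \leq b$.
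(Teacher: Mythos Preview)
Your proof is correct and follows essentially the same approach as the paper's: peel off the at most $b$ bold neighbors using $w(p)\leq b$, peel off the single parent-direction component, and for the remaining thin-adhesion children use neatness to force any offending edge into $\adh(p\pi(p))$, whose size is at most $b$. Your injectivity argument (one offending edge per bad child, with endpoints in disjoint sets $X^T_{tp}$) is the same counting as the paper's observation that the total number of edges crossing into $X^T_{\pi(p)p}$ is at most $|\adh(p\pi(p))|$.
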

\begin{proof}
	Let $p\in V(T)$.
	By the definition of $\width'$, at most $b$ of the edges in $\delta_T(p)$ have adhesions containing more than two edges of $G$; all other edges in $\delta_T(p)$ have thin adhesions.
	Additionally, at most one edge in $pt \in \delta_T(p)$ has the property that $\pi(p)=t$;
	all other edges $pt \in \delta_T(p)$ satisfy $\pi(t)=p$.	
	Consider then the remaining edges in $\delta_T(p)$, say $pt_1, \dots, pt_r$, for some $r\geq |\delta_T(p)|-b-1$. They have thin adhesions and satisfy $\pi(t_i)=p$.
	
	Recall that $\adh(p t_i)$ contains precisely the edges of $G$ with one endpoint in $X^T_{\ t_i p}$ and the other in $X^T_{\ p t_i}$.
	By definition of a neat decomposition (and since $\adh(t_i p) = \adh(t_i \pi(t_i))$ is thin), 
	the edges of $G$ contained in $\adh(p t_i)$ cannot have an endpoint in $X^T_{\ t'p}$ for any sibling $t'$ of $t_i$.
	This means that they have one endpoint in $X^T_{\ t_i p}$ and one in $X^T_{\ \pi(p) p} \cup X_p$
	(if $p$ is the root, assume $X^T_{\ \pi(p) p} = \emptyset$).
	However, the number of edges between $X^T_{\ t_i p}$  and $X^T_{\ \pi(p) p}$ is bounded by $|\adh(\pi(p) p)| \leq b$.
	Therefore, at least $r-b$ of the decomposition edges $t_i p$ ($i\in [r]$) have adhesions containing only edges of $G$ that have an endpoint in $X_p$.
	This means that for at least $r-b \geq |\delta_T(p)|-2b-1$ indices $i\in [r]$, the component $T_{t_i p}$ of $T-p$ is connected with a neat adhesion to $p$.
\end{proof}

\section{Protrusions}\label{sec:protrusions}
We now introduce the notion of a protrusion that is suitable for our problem. Namely, protrusions are $\F$-free parts of the graph with a constant-size boundary.

\begin{definition}\label{def:protrusion}
	An $r$-\emph{protrusion} of a graph $G$ is a set $X\subseteq V(G)$ such that $|\delta(X)|\leq r$ and $G[X]$ is $\F$-free.
\end{definition}

Recall that by Corollary~\ref{corol:neat}, the subgraph induced by a protrusion, as an $\F$-free graph, always has a neat tree-cut decomposition of $\width'$ bounded by a constant $b_\F$.
In the sequel, we will only deal with $2b_\F$- and $2$-protrusions.

\subsection{Replacing protrusions}

As in~\cite{FominLMS12}, the base for our kernelization algorithm is {\em{protrusion replacement}}. 
That is, the algorithm iteratively finds a protrusion $X$ that is large but has small $\delta(X)$, and replaces it with a gadget $X'$ that has the same behaviour, but is smaller.
The following lemma, whose proof is the main goal of this section, formalizes this intuition.

\begin{lemma}\label{lem:basicReplacer}
There is a constant $c_\F$ and algorithm that, given a graph $G$ and a $2b_\F$-protrusion $X$ in it with $\|G[X]\| > c_\F$, outputs in linear time a graph $G'$ with $\OPT(G)=\OPT(G')$ and $\|G'\| < \|G\|$.

Moreover, there is a linear-time algorithm working as follows:
given a subset $F'$ of edges of $G'$ such that $G'-F'$ is $\F$-free, the algorithm computes a subset $F$ of edges of $G$ such that $G-F$ is $\F$-free and $|F|\leq |F'|$ (and is called a solution-lifting algorithm).
\end{lemma}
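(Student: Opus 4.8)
The plan is to follow the standard protrusion-replacement paradigm adapted to the edge-deletion / immersion setting, using the machinery that has already been set up in the excerpt (Corollary~\ref{corol:neat} gives a neat tree-cut decomposition of $\width'$ at most $b_\F$ for the $\F$-free graph $G[X]$, and Lemma~\ref{lem:twBound} converts a bound on tree-cut width into a bound on treewidth). The key idea is that a $2b_\F$-protrusion $X$ has only $2b_\F$ edges to the rest of the graph, so all the relevant ``behaviour'' of $G[X]$ towards the outside world is captured by a bounded-size finite object. Concretely, I would define the \emph{signature} (transcript) of a protrusion as a function that records, for each possible ``demand'' on the boundary edges $\delta(X)$, the minimum number of edges one must delete inside $G[X]$ to make the whole thing consistent with that demand. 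Here a demand is the combinatorial data of which immersion models of graphs in $\F$ we want to block: one can encode, via $\mathbf{MSO}_2$ on the boundaried graph $(G[X],\delta(X))$ together with a bounded number of colour classes marking the boundary edges, the minimum size of an edge set $F\subseteq E(G[X])$ such that $G[X]-F$ together with \emph{any} external completion through the $\le 2b_\F$ boundary edges is $\F$-free. Since the number of boundary edges is $\le 2b_\F$ (a constant), the number of possible ``interface types'' is bounded by a constant, and so the signature takes values in a finite set.

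The first step is to make this precise using Courcelle-style optimization on bounded-treewidth graphs: exactly as in Lemma~\ref{lem:ffreeFpt}, $G[X]$ has treewidth at most $2a_\F^2+3a_\F$ (compute it in linear time via Bodlaender's algorithm), and ``there exists an edge set of size $\le j$ whose deletion makes the graph, with prescribed boundary behaviour, $\F$-immersion-free'' is expressible in $\mathbf{MSO}_2$ with the boundary edges as free set-variables. By the optimization version of Courcelle's theorem one computes, in linear time, the whole signature. The second step is a finiteness/representative argument: because signatures live in a finite set, there is a constant $c_\F$ and, for each attainable signature, a fixed \emph{representative} boundaried graph $X'$ with the same signature and with at most $c_\F$ edges (this is the analogue of the ``progressive'' / representative selection used in meta-kernelization; one fixes, once and for all, a smallest boundaried graph realizing each signature). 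If $\|G[X]\|>c_\F$, we replace $G[X]$ inside $G$ by $X'$, attaching the $2b_\F$ boundary edges in the corresponding way; call the result $G'$. Since $\|X'\|\le c_\F<\|G[X]\|$ and only $G[X]$ changed, $\|G'\|<\|G\|$. The equality $\OPT(G)=\OPT(G')$ follows because an optimal solution decomposes (after restriction) into a part outside $X$ and a part inside $X$, the inside part can be assumed to be a minimizer for whatever boundary demand the outside part imposes, and having the same signature means $X$ and $X'$ offer exactly the same set of (cost, boundary-demand) trade-offs — so any solution for one lifts to a solution of the same size for the other; this is exactly the ``gluing lemma for equivalent protrusions'' and I would prove it by a straightforward exchange argument, using that immersion models of $H\in\F$ either live entirely inside $G[X]$, entirely outside, or cross $\delta(X)$ in a way fully determined by the interface type.

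For the solution-lifting algorithm: given $F'\subseteq E(G')$ with $G'-F'$ $\F$-free, split $F'$ into the part $F'_{\mathrm{out}}$ outside the replacement gadget and the part $F'_{\mathrm{in}}$ inside $X'$. The outside part imposes some boundary demand $d$ on $\delta(X)=\delta(X')$, and $F'_{\mathrm{in}}$ witnesses that this demand is satisfiable within $X'$ at cost $|F'_{\mathrm{in}}|\ge \mathrm{sig}(d)$; since $X$ has the same signature, $\mathrm{sig}(d)$ is also the minimum cost inside $G[X]$, and a witnessing edge set $F_{\mathrm{in}}\subseteq E(G[X])$ of size $\le |F'_{\mathrm{in}}|$ can be recomputed in linear time by another Courcelle-optimization call (tracing back the DP on the tree decomposition of $G[X]$). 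Output $F=F'_{\mathrm{out}}\cup F_{\mathrm{in}}$; then $|F|\le |F'|$ and $G-F$ is $\F$-free because both the outside part and the inside part meet the same demand $d$ as in $G'-F'$, so no immersion model of any $H\in\F$ survives.

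The main obstacle I expect is formulating the right notion of ``boundary demand / signature'' so that (i) it is expressible in $\mathbf{MSO}_2$ on the boundaried graph, (ii) it genuinely captures all ways external structure can combine with $G[X]$ to create an immersion of some $H\in\F$, and (iii) it is invariant under replacing one protrusion by an equivalent one. The subtlety specific to \emph{immersions} (as opposed to the minor/vertex-deletion setting of Fomin et al.~\cite{FominLMS12}) is that an immersion path of an edge of $H$ may enter and leave $G[X]$ several times through the $\le 2b_\F$ boundary edges, and edge-disjointness must be tracked across the cut; the number of times any collection of immersion paths can cross $\delta(X)$ is bounded by $|\delta(X)|\le 2b_\F$, so the crossing pattern is a bounded-size object and can be built into the $\mathbf{MSO}_2$ sentence as extra free variables / colours — but getting this bookkeeping exactly right, and verifying that it behaves well under gluing, is where the real work lies. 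Everything else (the treewidth bound, the Courcelle optimization, the pigeonhole to get finitely many signatures and bounded representatives, the linear running time) is routine given the tools already assembled in the excerpt.
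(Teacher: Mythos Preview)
Your proposal is correct and follows essentially the same approach as the paper: define a finite signature for boundaried protrusions, compute it in linear time via Courcelle-style optimization on bounded treewidth, replace by a precomputed smallest representative with the same signature, and lift solutions by recomputing the inside part. The paper resolves the ``main obstacle'' you flag exactly as you anticipate: it encodes a boundary demand as a set of \emph{relevant pairs} $(Q,\phi)$, where $Q$ is a graph with at most $(r+1)\MX$ edges (bounding the pieces of an immersion model that can land inside the protrusion, using that paths cross $\delta(X)$ at most $r$ times) and $\phi$ is a partial map from $V(Q)$ to the boundary indices; the signature then records, for each subset $\Ss$ of relevant pairs, the minimum number of edge deletions in the extended protrusion needed to destroy all $\phi$-rooted immersion models of $Q$ for $(Q,\phi)\in\Ss$.
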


The proof of Lemma~\ref{lem:basicReplacer} follows closely the strategy used by Fomin et al.~\cite{FominLMS12}: 
Every $2b_\F$-protrusion can be assigned a type, where the number of types is bounded by a function depending on $\F$ only.
The type of a protrusion can be computed efficiently due to protrusions having constant treewidth.
Protrusions with the same type behave in the same way with respect to the problem of our interest, and hence can be replaced by one another.
Therefore, we store a replacement table consisting of the smallest protrusion of each type, so that every larger protrusion can be replaced by a smaller representative stored in the table.
The lifting algorithm finds, using dynamic programming, 
a partial solution in the large protrusion that has the same behaviour as the given partial solution in the replacement protrusion, while being not larger. 

We now proceed with implementing this plan formally.
We start with defining {\em{boundaried graphs}}.

\newcommand{\Gf}{\mathbb{G}}
\newcommand{\Hf}{\mathbb{H}}

\begin{definition}
An {\em{$r$-boundaried graph}} consists of an underlying graph $G$ and an $r$-tuple $(u_1,\ldots,u_r)$ of (not necessarily different) vertices of $G$, called the {\em{boundary}}. 
Given two $r$-boundaried graphs 
$$\Gf=(G,(u_1,\ldots,u_r))\qquad \textrm{and} \qquad \Hf=(H,(v_1,\ldots,v_r)),$$
we define their {\em{gluing}}, denoted $\Gf\oplus\Hf$, to be the following graph: take the disjoint union of $G$ and $H$, and for each $i\in [r]$ add one edge $u_iv_i$.
Finally, we define $\|\Gf\|$ to be $\|G\|$.
\end{definition}

We extend all notation for graphs to boundaried graphs, always applying it to the underlying graph. Thus, we can talk about, e.g., $\F$-free boundaried graphs.

Boundaried graphs can be naturally equipped with a Myhill-Nerode-like equivalence relation concerning the problem of our interest.

\begin{definition}
Two $r$-boundaried graphs $\Gf_1$ and $\Gf_2$ are called {\em{$\F$-equivalent}} if for every $r$-boundaried graph $\Hf$, the following holds:
$$\OPT(\Gf_1\oplus \Hf)=\OPT(\Gf_2\oplus \Hf).$$
\end{definition}

Obviously, $\F$-equivalence is an equivalence relation on $r$-boundaried graphs. 
We now introduce a condition that implies $\F$-equivalence, which will be combinatorially easier to handle.

Suppose $\Gf$ is an $r$-boundaried graph, with boundary $(u_1,u_2,\ldots,u_r)$. 
Define the {\em{extended graph}} $\extnd{\Gf}$ as follows: for each $i\in [r]$, introduce a new vertex $\cop{u_i}$ that is adjacent only to $u_i$.
Suppose further that $Q$ is some graph. 
If $\phi$ is a partial function from $V(Q)$ to $[r]$, then by a {\em{$\phi$-rooted immersion model}} of $Q$ in $\extnd{\Gf}$ 
we mean an immersion model of $Q$ in $\extnd{\Gf}$ that is faithful w.r.t. $\phi$ in the following sense: 
for each vertex $v$ of $Q$ that has defined image under $\phi$, $v$ is mapped to $\cop{u_{\phi(v)}}$ in the immersion model.

Fix a positive integer $r$ and recall that $\MX=\max_{H\in \F} \|H\|$. Consider a graph $Q$ and a partial function $\phi$ from $V(Q)$ to $[r]$. 
We call the pair $(Q,\phi)$ {\em{relevant}} if the following conditions hold:
\begin{itemize}
\item $\|Q\|\leq (r+1)\MX$ and $Q$ has no isolated vertices; and
\item $\phi$ is non-empty, i.e., it assigns a value to at least one argument.
\end{itemize}
The set of relevant pairs will be denoted by $\Rr_{r,\F}$. Observe that 
\begin{equation}\label{eq:rel-pairs}
|\Rr_{r,\F}|\leq 2^{\poly(r,\MX)}.
\end{equation}
Indeed, there are at most $2^{\poly(r,\MX)}$ graphs with at most $(r+1)\cdot \MX$ edges and no isolated vertices, and 
for each of them there are at most $(r+1)^{\Oh((r+1)\MX)}$ possible partial functions $\phi$.

\begin{definition}
Let $r$ be a positive integer and let $\Gf$ be an $r$-boundaried graph.
For a set $\Ss\subseteq \Rr_{r,\F}$ of relevant pairs, the {\em{deletion number of $\Gf$ w.r.t. $\Ss$}} is the minimum number of edges that need to be deleted from $\extnd{\Gf}$ 
so that it does not admit a $\phi$-rooted immersion model of $Q$, for each $(Q,\phi)\in \Ss$. 
Note that the edges $u_i\cop{u_i}$, for $i\in [r]$, may also be deleted in this definition.
The {\em{signature}} of $\Gf$, denoted $\sigma[\Gf]$, is the function from subsets of $\Rr_{r,\F}$ to nonnegative integers defined as follows:
$$\sigma[\Gf](\Ss)=\textrm{deletion number of $\Gf$ w.r.t. $\Ss$}.$$
\end{definition}

The following lemma explains the relation between $\F$-equivalence and signatures.

\begin{lemma}\label{lem:equiv-implication}
If two $\F$-free $r$-boundaried graphs have the same signatures, then they are $\F$-equivalent.
\end{lemma}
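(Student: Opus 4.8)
The plan is to show that if $\Gf_1$ and $\Gf_2$ are $\F$-free $r$-boundaried graphs with $\sigma[\Gf_1] = \sigma[\Gf_2]$, then for every $r$-boundaried graph $\Hf$ we have $\OPT(\Gf_1 \oplus \Hf) = \OPT(\Gf_2 \oplus \Hf)$. By symmetry it suffices to prove $\OPT(\Gf_1 \oplus \Hf) \leq \OPT(\Gf_2 \oplus \Hf)$. The core idea is that an optimal solution $F$ in $\Gf_2 \oplus \Hf$ splits into a part $F_H$ inside $H$ and a part $F_2$ inside $G_2$ (edges of the glueing edges $u_iv_i$ can be pushed to either side, or handled by deleting the corresponding $u_i\cop{u_i}$ edge in the extended graph). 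I would like to say: once we fix $F_H$, the graph $H - F_H$ together with the boundary interacts with the $G_2$-side only through the boundary, and the requirement that $(\Gf_2 \text{-side}) \oplus (H - F_H)$ be $\F$-free translates into a requirement that the extended graph $\extnd{\Gf_2} - F_2$ avoid a certain family of rooted immersion models — precisely the family $\Ss$ of relevant pairs $(Q,\phi)$ that "could be completed through $H - F_H$" into a copy of some $H_0 \in \F$.

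First I would make precise the reduction from immersions of graphs in $\F$ crossing the boundary to rooted immersions of small graphs in $\extnd{\Gf}$. The key observation is: if $H_0 \in \F$ is immersed in $\Gf \oplus \Hf$, then $H_0$ is connected (by our global assumption on $\F$), so its immersion model, restricted to the $G$-side, decomposes into a bounded collection of connected "fragments"; each fragment uses at most $\|H_0\| \leq \MX$ edges on the $G$-side, and there are at most $r$ glueing edges, so the model crosses the boundary at most $r$ times, giving at most $r+1$ fragments and at most $(r+1)\MX$ edges total on the $G$-side. Contracting the part of the model living on the $H$-side and on the glueing edges down to the boundary vertices $u_i$ (formally: looking at $\extnd{\Gf}$, mapping the boundary-crossing "stubs" into the pendant copies $\cop{u_i}$), we obtain a relevant pair $(Q,\phi) \in \Rr_{r,\F}$ and a $\phi$-rooted immersion model of $Q$ in $\extnd{\Gf}$. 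Conversely, given $\phi$-rooted immersion models in $\extnd{\Gf}$ for each pair in a set $\Ss$ that are "jointly realizable on the $H$-side" (meaning there exist edge-disjoint connecting paths on the $H$-side of $H - F_H$ through the glueing edges), one reconstructs an immersion of the corresponding graph of $\F$. The upshot is: $\Gf \oplus \Hf$ (with $F_H$ already deleted from the $H$-part) contains some $H_0 \in \F$ as an immersion if and only if there is a relevant pair $(Q,\phi) \in \Ss_{\Hf, F_H}$ such that $\extnd{\Gf}$ admits a $\phi$-rooted immersion model of $Q$, where $\Ss_{\Hf, F_H}$ is a set depending only on $\Hf$, $F_H$, and $\F$ — not on $\Gf$.

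Given this equivalence, the argument concludes as follows. Take an optimal solution $F$ for $\Gf_2 \oplus \Hf$, write $F = F_H \uplus F_2'$ where $F_H = F \cap E(H)$ and $F_2'$ captures the rest (edges of $G_2$ and glueing edges, the latter reinterpreted as edges $u_i \cop{u_i}$ in $\extnd{\Gf_2}$). Then $\extnd{\Gf_2} - F_2'$ admits no $\phi$-rooted immersion model of $Q$ for any $(Q,\phi) \in \Ss_{\Hf, F_H}$ — because otherwise $\Gf_2 \oplus \Hf$ minus $F$ would still contain a graph from $\F$. Hence $|F_2'| \geq \sigma[\Gf_2](\Ss_{\Hf, F_H}) = \sigma[\Gf_1](\Ss_{\Hf, F_H})$, and by definition of the deletion number there is a set $F_1'$ of edges of $\extnd{\Gf_1}$ with $|F_1'| = \sigma[\Gf_1](\Ss_{\Hf,F_H}) \leq |F_2'|$ killing all those rooted models. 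Translating $F_1'$ back (edges $u_i\cop{u_i}$ becoming glueing edges) and setting $F' = F_H \cup F_1'$, we get that $\Gf_1 \oplus \Hf$ minus $F'$ is $\F$-free, with $|F'| \leq |F_H| + |F_2'| = |F|$. Therefore $\OPT(\Gf_1 \oplus \Hf) \leq \OPT(\Gf_2 \oplus \Hf)$, and by symmetry equality holds, so $\Gf_1$ and $\Gf_2$ are $\F$-equivalent.

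**The main obstacle** I anticipate is making the "fragment decomposition" and its converse fully rigorous, especially bookkeeping the glueing edges: an immersion model of $H_0$ may route a connecting path back and forth across the boundary many times, but edge-disjointness limits the number of boundary crossings to $r$ (each glueing edge used at most once), which is what keeps $\|Q\| \leq (r+1)\MX$. One must also handle vertices of $H_0$ that are mapped to boundary vertices themselves versus to interior vertices, and be careful that $Q$ as extracted has no isolated vertices and that $\phi$ is non-empty (which holds precisely because we only record pairs arising from models that genuinely cross into the $G$-side — purely-$H$-side occurrences of $\F$ are excluded by restricting attention to the relevant $F_H$ that already make $H$-plus-boundary-stuff safe, or are handled symmetrically). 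Once this combinatorial translation is nailed down, the rest is the standard Myhill–Nerode style exchange argument sketched above. I would also note that the same correspondence, tracked algorithmically via dynamic programming over a tree-cut (or tree-) decomposition of the protrusion, is what will later yield the solution-lifting algorithm of Lemma~\ref{lem:basicReplacer}.
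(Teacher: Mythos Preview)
Your proposal is correct and follows essentially the same route as the paper: split an optimal solution across the gluing, translate the $G$-side requirement into killing a set $\Ss\subseteq\Rr_{r,\F}$ of relevant rooted pairs in the extended graph, use equality of signatures to transfer a witnessing deletion set to the other side, and recombine. The fragment-extraction argument you sketch (cutting each model path at the at most $r$ boundary crossings to get $\|Q\|\leq (r+1)\MX$) is exactly what the paper does, and your handling of the boundary cases ($\Gf_i$ being $\F$-free rules out purely-$G$-side models; $H-F_H$ being a subgraph of an $\F$-free graph rules out purely-$H$-side models, forcing $\phi$ non-empty) is correct.

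The one genuine, if minor, difference is your choice of pivot set: you define $\Ss=\Ss_{\Hf,F_H}$ as the pairs \emph{completable through $H-F_H$}, whereas the paper takes $\Ss$ to be the pairs \emph{already killed} by the deletion set on the starting side. Both choices work, but the paper's definition sidesteps having to formalise ``completable through $H-F_H$'' as a standalone notion: it only needs the one-directional extraction (model in the glued graph $\Rightarrow$ relevant pair with a rooted model) and then observes that a rooted model on the other side can be spliced back with the \emph{same} $H$-side paths of the original model. Your version requires both directions of the correspondence explicitly, which is not harder but is a bit more bookkeeping.
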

\begin{proof}
Let $\Gf_1,\Gf_2$ be a pair of $\F$-free $r$-boundaried graphs that have the same signature. For $t=1,2$, let $(u^t_1,\ldots,u^t_r)$  be the boundary of $\Gf_t$.
Take any $r$-boundaried graph $\Hf$, and let $(v_1,\ldots,v_r)$ be its boundary.

We need to prove that $\OPT(\Gf_1\oplus \Hf)=\OPT(\Gf_2\oplus \Hf)$.
It suffices to prove that $\OPT(\Gf_1\oplus \Hf)\leq \OPT(\Gf_2\oplus \Hf)$, because then the converse inequality will follow by symmetry.
Throughout the proof, we implicitly identify $\Gf_1$, $\Gf_2$, and $\Hf$ with their copies in the gluings $\Gf_1\oplus \Hf$ and $\Gf_2\oplus \Hf$.
We also use the extended graphs $\extnd{\Gf_1}$ and $\extnd{\Gf_2}$, with the notation $\cop{\cdot}$, and injective mappings 
\begin{eqnarray*}
\iota_1\colon E(\extnd{\Gf_1})\to E(\Gf_1\oplus \Hf)\\
\iota_2\colon E(\extnd{\Gf_2})\to E(\Gf_2\oplus \Hf)
\end{eqnarray*}
defined as follows. If $e\in E(\Gf_1)$, then $\iota_1(e)=e$, and if $e=u^1_i\cop{u^1_i}$ for some $i=1,\ldots,r$, then $\iota_1(e)=u^1_iv_i$. Mapping $\iota_2$ is defined in the same way.

Suppose $F_1$ is an optimum-size subset of edges of $\Gf_1\oplus\Hf$ such that $(\Gf_1\oplus\Hf)-F_1$ is $\F$-free; that is, $|F_1|=\OPT(\Gf_1\oplus \Hf)$.
Let $L_1=\iota_1^{-1}(F_1\setminus E(\Hf))$; that is, $L_1$ consists of all edges of $\extnd{\Gf_1}$ that correspond to edges of $F_1$ under mapping $\iota_1$.
Let $\Ss$ be the set of all relevant pairs $(Q,\phi)\in \Rr_{r,\F}$ for which $\extnd{\Gf_1}-L_1$ does not admit a $\phi$-rooted immersion model of $Q$.
Since $\Gf_1$ and $\Gf_2$ have the same signatures, there is a subset of edges $L_2\subseteq E(\extnd{\Gf_2})$ with $|L_2|\leq |L_1|$ such that 
$\extnd{\Gf_2}-L_2$ also does not admit a $\phi$-rooted immersion model of $Q$, for every $(Q,\phi)\in \Ss$.
We define $F_2\subseteq E(\Gf_2\oplus\Hf)$ as follows:
$$F_2=\iota_2(L_2)\cup (F_1\cap E(\Hf)).$$
Since $|L_2|\leq |L_1|$, we also have that $|F_2|\leq |F_1|$. Hence it suffices to show that $(\Gf_2\oplus\Hf)-F_2$ is $\F$-free.

For the sake of contradiction suppose that $(\Gf_2\oplus\Hf)-F_2$ contains an immersion model $\Ii$ of some graph $H\in \F$.
Clearly $\Ii$ must use at least one edge outside $E(\Hf)$, because otherwise $\Ii$ would be also an immersion model of $H$ in $(\Gf_1\oplus\Hf)-F_1$, which is $\F$-free by assumption.
Also, $\Ii$ must use at least one edge outside $E(\Gf_2)$, because otherwise it would be an immersion model of $H$ in $\Gf_2$, which is $\F$-free by the supposition of the lemma.

Take any edge $e$ of $H$, and let $P_e$ be the path in the model $\Ii$ that is the image of $e$. 
Each vertex traversed by $P_e$ belongs either to $V(\Gf_2)$ or to $V(\Hf)$.
For each maximal interval $I$ on $P_e$ of vertices belonging to $V(\Gf_2)$, consider the path in $\extnd{\Gf_2}$ constructed as follows:
take all edges of $P_e$ incident to the vertices of $I$ (so including the edge preceding and succeeding $I$ on the path), and map them to the edges of $\extnd{\Gf_2}$ using $\iota_2^{-1}$.
This image is a path in $\extnd{\Gf_2}$ whose endpoints are either copies $\cop{u^2_i}$ of some boundary vertices, or the original endpoints of $P_e$.

Starting from $H$, construct a graph $Q$ as follows (see Fig.~\ref{fig:partial-obs} for reference). The vertex set of $Q$ consists of all the vertices of $H$ that are mapped to $V(\Gf_2)$ in the model $\Ii$,
plus the set of all the vertices $\cop{u^2_i}$ for which the edge $u^2_iv_i$ is used in the model $\Ii$.
The edges of $Q$ are defined by the construction of the previous paragraph: every path $R$ constructed for some maximal interval on some path $P_e$ gives rise to an edge in $Q$ connecting the endpoints of $R$.
It is now easy to see that the above paths define a $\phi$-rooted immersion model of $Q$ in $\extnd{\Gf_2}-L_2$, where $\phi$ assigns each vertex $\cop{u^2_i}$ its index $i$.

\begin{figure}
    \centering
\scalebox{.89}{\def\svgwidth{\textwidth}\input{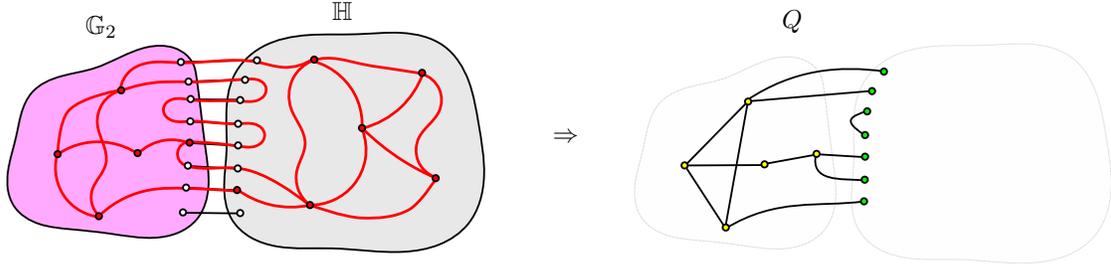}}
    \caption{Construction of graph $Q$ from the immersion model $\Ii$. The model $\Ii$ is depicted on the left panel; the red vertices are the images of the vertices of $H$.
    The obtained graph $Q$ is on the left panel. The yellow vertices are the images of vertices of $H$ that lie within $V(\Gf_2)$, whereas the green vertices are the copies of
    boundary vertices that are included in the vertex set of $Q$. The former graphs $\Gf_2$ and $\Hf$ are depicted in very light grey in order to show from where the different parts of $Q$ come from.}\label{fig:partial-obs}
\end{figure}

We now verify that $(Q,\phi)$ is a relevant pair. First, since every graph of $\F$ is connected and has at least one edge, it is immediate that $Q$ has no isolated vertices.
For every edge $e$ of $H$, the path $P_e$ can alternate between $V(\Gf_2)$ and $V(\Hf)$ at most $r$ times, and hence $e$ can give rise to at most $r+1$ edges in $Q$; it follows
that $$\|Q\|\leq (r+1)\|H\|\leq (r+1)\MX.$$ Finally, since $\Ii$ uses at least one edge outside $E(\Hf)$ and at least one edge outside $E(\Gf_2)$, we conclude that neither $Q$ nor $\phi$ is empty.

Since $(Q,\phi)$ is a relevant pair for which there is a $\phi$-rooted immersion model of $Q$ in $\extnd{\Gf_2}-L_2$, we have that $(Q,\phi)\notin \Ss$.
By the way  we defined $\Ss$, it follows that there is a $\phi$-rooted immersion model of $Q$ in $\extnd{\Gf_1}-L_1$.
Take the edges of this model, map them according to $\iota_1$ to edges of $\Gf_1\oplus \Hf$, and add all the edges used by model $\Ii$ within $E(\Hf)$. 
It can be now easily seen that all these edges form an immersion model of $H$ in $(\Gf_1\oplus \Hf)-F_1$, which is a contradiction with $(\Gf_1\oplus \Hf)-F_1$ being $\F$-free.
\end{proof}

It is not hard to see that the deletion numbers in fact cannot be too large.

\begin{lemma}\label{lem:bounded-deletion-number}
If $\Gf$ is an $r$-boundaried graph and $\Ss\subseteq \Rr_{r,\F}$ is a subset of relevant pairs, then the deletion number of $\Gf$ w.r.t. $\Ss$ is at most $r$.
\end{lemma}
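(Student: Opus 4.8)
The plan is to exhibit one explicit set of $r$ edges whose deletion from $\extnd{\Gf}$ simultaneously destroys every $\phi$-rooted immersion model of every $Q$ with $(Q,\phi)\in\Ss$, regardless of what $\Ss$ is. The natural candidate is the set $D=\{u_i\cop{u_i} : i\in[r]\}$ of the $r$ edges joining the boundary vertices to their freshly added copies. Note first that $|D|=r$: the copies $\cop{u_1},\dots,\cop{u_r}$ are pairwise distinct by construction of $\extnd{\Gf}$, so these $r$ edges are distinct even if some of the $u_i$ happen to coincide. Also, the definition of the deletion number explicitly permits deleting the edges $u_i\cop{u_i}$, so $D$ is a legitimate choice.

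Next I would use the key structural fact that in $\extnd{\Gf}$ each vertex $\cop{u_i}$ has degree exactly one, its unique incident edge being $u_i\cop{u_i}$. Consequently, in $\extnd{\Gf}-D$ every vertex $\cop{u_i}$ is isolated. Now take any relevant pair $(Q,\phi)\in\Ss$. Since $(Q,\phi)$ is relevant, $\phi$ is non-empty, so there is a vertex $v\in V(Q)$ with $\phi(v)$ defined; and, again by relevance, $Q$ has no isolated vertices, so $v$ is an endpoint of some edge $e$ of $Q$. In any $\phi$-rooted immersion model of $Q$ in $\extnd{\Gf}-D$, the vertex $v$ must be mapped to $\cop{u_{\phi(v)}}$ and the edge $e$ must be mapped to a path having $\cop{u_{\phi(v)}}$ as one of its endpoints — impossible, since that vertex is isolated in $\extnd{\Gf}-D$. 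Hence $\extnd{\Gf}-D$ admits no $\phi$-rooted immersion model of $Q$, for any $(Q,\phi)\in\Ss$, and therefore the deletion number of $\Gf$ w.r.t.\ $\Ss$ is at most $|D|=r$.

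There is essentially no obstacle in this argument; the only point worth stating carefully is that the witnessing set $D$ is chosen once and for all and works uniformly over every subset $\Ss\subseteq\Rr_{r,\F}$, so the bound $r$ is genuinely independent of $\Ss$ (as well as of $\F$, beyond the fixed convention that all graphs in $\F$ are connected with at least one edge, which is what forces relevant $Q$ to have no isolated vertices).
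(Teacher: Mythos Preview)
Your proof is correct and takes essentially the same approach as the paper: both exhibit the set $D=\{u_i\cop{u_i}:i\in[r]\}$ and use the non-emptiness of $\phi$ to conclude that no $\phi$-rooted immersion model of $Q$ can survive in $\extnd{\Gf}-D$. Your argument is more explicit (spelling out that the copies become isolated and invoking that $Q$ has no isolated vertices), whereas the paper's version is a brief two-liner; one minor remark is that ``$Q$ has no isolated vertices'' is simply part of the definition of a relevant pair rather than a consequence of the connectivity of the graphs in $\F$.
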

\begin{proof}
Let $(u_1,\ldots,u_r)$ be the boundary of $\Gf$.
Since $\phi$ is non-empty for each $(Q,\phi)\in \Ss$, in order to make $\extnd{\Gf}$ not admit $\phi$-rooted minor of $Q$ one can always remove all the edges $u_i\cop{u_i}$, for $i\in [r]$.
Hence, the deletion number of $\Gf$ w.r.t. $\Ss$ is upper bounded by the number of these edges, that is, by~$r$.
\end{proof}

Lemmas~\ref{lem:equiv-implication} and~\ref{lem:bounded-deletion-number}, together with~\eqref{eq:rel-pairs}, immediately yield the following.

\begin{corollary}\label{cor:num-sig}
The number of possible signatures of $r$-boundaried graphs is at most $2^{2^{2^{\poly(r,\MX)}}}$. Consequently, $\F$-equivalence has at most this many equivalence classes.
\end{corollary}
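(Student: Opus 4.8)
The statement to prove is Corollary~\ref{cor:num-sig}, which bounds the number of possible signatures of $r$-boundaried graphs by a triply-exponential function of $\poly(r,\MX)$. The plan is to simply chase the definitions and combine the three preceding facts in the obvious way. First I would recall that a signature $\sigma[\Gf]$ is a function from subsets of $\Rr_{r,\F}$ to nonnegative integers, so to count the number of distinct signatures it suffices to count the number of distinct such functions that can actually arise. The domain of any such function is the powerset of $\Rr_{r,\F}$, which by \eqref{eq:rel-pairs} has size at most $2^{|\Rr_{r,\F}|} \leq 2^{2^{\poly(r,\MX)}}$.

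Next I would invoke Lemma~\ref{lem:bounded-deletion-number}: for every $\Gf$ and every $\Ss \subseteq \Rr_{r,\F}$, the deletion number of $\Gf$ w.r.t. $\Ss$ is at most $r$. Hence every signature function takes values in the finite set $\{0,1,\ldots,r\}$, which has $r+1$ elements. Therefore the total number of functions from a domain of size at most $2^{2^{\poly(r,\MX)}}$ to a codomain of size $r+1$ is at most $(r+1)^{2^{2^{\poly(r,\MX)}}}$. Absorbing the factor $\log_2(r+1)$ into the polynomial in the exponent, this is bounded by $2^{\poly(r,\MX) \cdot 2^{2^{\poly(r,\MX)}}} \leq 2^{2^{2^{\poly(r,\MX)}}}$, which establishes the claimed bound on the number of signatures.

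Finally, for the ``consequently'' part, I would apply Lemma~\ref{lem:equiv-implication}: any two $\F$-free $r$-boundaried graphs with equal signatures are $\F$-equivalent. Thus the equivalence classes of $\F$-equivalence (when restricted to $\F$-free boundaried graphs, which is the only case we care about) are refined by — in particular no more numerous than — the partition according to signature, so the number of $\F$-equivalence classes is at most the number of signatures, i.e., at most $2^{2^{2^{\poly(r,\MX)}}}$.

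There is no real obstacle here; the whole argument is a one-line counting computation once the three ingredients are in place. The only mild subtlety worth a sentence of care is the bookkeeping in the exponents: one must check that raising a quantity of the form $2^{2^{\poly(r,\MX)}}$ to a polynomial power (or to the power $r+1$) keeps us within the triply-exponential regime, which it does because $(r+1) \cdot 2^{2^{\poly(r,\MX)}}$ is still of the form $2^{2^{\poly(r,\MX)}}$ after adjusting the polynomial. I would state the chain of inequalities explicitly but not belabor it, since it is entirely routine.
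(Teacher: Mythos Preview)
Your proposal is correct and follows exactly the approach the paper indicates: the paper itself gives no explicit proof but simply says the corollary follows immediately from Lemma~\ref{lem:equiv-implication}, Lemma~\ref{lem:bounded-deletion-number}, and~\eqref{eq:rel-pairs}, which are precisely the three ingredients you combine. Your observation that the ``consequently'' part strictly applies only to $\F$-free boundaried graphs (since Lemma~\ref{lem:equiv-implication} carries that hypothesis) is accurate and matches how the corollary is actually used later in the paper.
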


Finally, we need the algorithmic tractability of signatures.

\begin{lemma}\label{lem:comp-sig}
For every positive integer $r$, there exists a linear-time algorithm that, given an $\F$-free $r$-boundaried graph $\Gf$, computes its signature.
\end{lemma}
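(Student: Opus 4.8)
The plan is to express each value of the signature as a constant number of $\mathbf{MSO}_2$ model-checking queries on a bounded-treewidth structure, and then invoke Courcelle's Theorem exactly as in the proof of Lemma~\ref{lem:ffreeFpt}. The first point is that all relevant quantities are bounded by constants depending only on $r$ and $\F$. Since $\Gf$ is $\F$-free, Theorem~\ref{thm:Ffree_tctw} and Lemma~\ref{lem:twBound} give $\tw(G)\leq 2a_\F^2+3a_\F$, and attaching each copy vertex $\cop{u_i}$ as a leaf to any bag containing $u_i$ yields $\tw(\extnd{\Gf})\leq 2a_\F^2+3a_\F+1$ as well. By~\eqref{eq:rel-pairs}, the number of relevant pairs $|\Rr_{r,\F}|$, hence the number of subsets $\Ss\subseteq\Rr_{r,\F}$, is bounded by $2^{2^{\poly(r,\MX)}}$; and by Lemma~\ref{lem:bounded-deletion-number}, every value $\sigma[\Gf](\Ss)$ lies in $\{0,1,\dots,r\}$. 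So it suffices to decide, for each of the constantly many pairs $(\Ss,d)$ with $\Ss\subseteq\Rr_{r,\F}$ and $d\in\{0,1,\dots,r\}$, whether some set of at most $d$ edges of $\extnd{\Gf}$ meets every $\phi$-rooted immersion model of $Q$ for $(Q,\phi)\in\Ss$; then $\sigma[\Gf](\Ss)$ is simply the least such $d$, which exists and is at most $r$.

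To put this into logic, view $\extnd{\Gf}$ as a relational structure over the usual incidence vocabulary of $\mathbf{MSO}_2$ enriched with $r$ unary predicates $C_1,\dots,C_r$, where $C_i$ marks the single vertex $\cop{u_i}$. For a fixed $Q$ and a fixed partial function $\phi\colon V(Q)\to[r]$, the existence of a $\phi$-rooted immersion model of $Q$ in $\extnd{\Gf}$ is expressible by an $\mathbf{MSO}_2$ formula $\psi_{Q,\phi}$ whose size is bounded in $\|Q\|\leq (r+1)\MX$: existentially quantify one vertex variable $x_q$ per $q\in V(Q)$ (required pairwise distinct, and required to lie in $C_{\phi(q)}$ whenever $\phi(q)$ is defined) and one edge-set variable $P_e$ per $e\in E(Q)$, and assert that each $P_e$ is a path with endpoints $x_q,x_{q'}$ for $e=qq'$ and that the sets $P_e$ are pairwise disjoint (expressing ``$P$ is a path with endpoints $a,b$'' in $\mathbf{MSO}_2$ is standard). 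Restricting the $P_e$ to avoid a prescribed edge set gives a formula $\psi_{Q,\phi}(Y)$ with a free edge-set variable $Y$ saying that $\extnd{\Gf}$ admits such a model using only edges outside $Y$. Finally, for a pair $(\Ss,d)$ set
\[
\Phi_{\Ss,d}\ :=\ \exists e_1\cdots\exists e_d\ \bigwedge_{(Q,\phi)\in\Ss}\ \neg\,\psi_{Q,\phi}\bigl(\{e_1,\dots,e_d\}\bigr),
\]
with $e_1,\dots,e_d$ edge variables allowed to coincide, so that $\{e_1,\dots,e_d\}$ ranges over all edge sets of size at most $d$. Then $\Phi_{\Ss,d}$ is a single $\mathbf{MSO}_2$ sentence of size depending only on $r$ and $\F$, and $\extnd{\Gf}\models\Phi_{\Ss,d}$ holds precisely when the deletion number of $\Gf$ w.r.t.\ $\Ss$ is at most $d$.

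The algorithm then mirrors Lemma~\ref{lem:ffreeFpt}: construct $\extnd{\Gf}$, run Bodlaender's algorithm~\cite{Bodlaender96} to obtain a tree decomposition of $\extnd{\Gf}$ of width at most $2a_\F^2+3a_\F+1$ (it always succeeds, for otherwise Lemma~\ref{lem:twBound} and Theorem~\ref{thm:Ffree_tctw} would contradict $\F$-freeness of $G$), and for each of the constantly many sentences $\Phi_{\Ss,d}$ run the Courcelle-Theorem~\cite{Courcelle90} model checker on $\extnd{\Gf}$ with this decomposition; each call runs in linear time, with the constant depending only on $r$ and $\F$. For each $\Ss$, output $\sigma[\Gf](\Ss)$ as the smallest $d\in\{0,\dots,r\}$ with $\extnd{\Gf}\models\Phi_{\Ss,d}$. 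As there are only constantly many pairs $(\Ss,d)$, the whole procedure is linear time and returns the signature as the table of its values over all subsets $\Ss$.

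The only part requiring care is the $\mathbf{MSO}_2$ encoding itself: one must make the boundary (really, the copy vertices $\cop{u_i}$) visible to the logic, which is why we pass to the enriched structure with the predicates $C_i$, and one must express ``no $\phi$-rooted model survives after deleting $\{e_1,\dots,e_d\}$'' uniformly in the $e_i$, which is handled by relativizing the path quantifiers of $\psi_{Q,\phi}$. Everything else — bounded treewidth of $\extnd{\Gf}$, the constant bound on the number of queries, and the constant bound $r$ on the answers — follows directly from results already established.
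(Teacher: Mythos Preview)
Your proof is correct and follows essentially the same strategy as the paper: bound the treewidth of $\extnd{\Gf}$ via Theorem~\ref{thm:Ffree_tctw} and Lemma~\ref{lem:twBound}, compute a tree decomposition with Bodlaender's algorithm, express the relevant property in $\mathbf{MSO}_2$, and iterate over the constantly many subsets $\Ss\subseteq\Rr_{r,\F}$.

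The one noteworthy difference is in how you extract the deletion numbers. The paper invokes the \emph{optimization} variant of Courcelle's theorem due to Arnborg, Lagergren and Seese~\cite{ArnborgLS91}, which directly returns the minimum size of an edge set satisfying an $\mathbf{MSO}_2$-expressible predicate. You instead use only the \emph{decision} version~\cite{Courcelle90} and recover the minimum by iterating over all candidate values $d\in\{0,\dots,r\}$, explicitly existentially quantifying $d$ individual edge variables; this works precisely because Lemma~\ref{lem:bounded-deletion-number} caps each deletion number at $r$, keeping the number of queries and the formula sizes constant. Your route is thus slightly more elementary (it needs no optimization meta-theorem) at the cost of being a bit more hands-on with the encoding; the paper's route is terser but relies on the stronger black box. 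Both yield the same linear running time.
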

\begin{proof}
For each such subset $\Ss$ of relevant pairs, the deletion number of $\Gf$ w.r.t. $\Ss$ can be computed in linear time as follows.
First, observe that, due to $\Gf$ being $\F$-free, by Proposition~\ref{lem:twBound} and Theorem~\ref{thm:Ffree_tctw} we infer that the treewidth of $\extnd{\Gf}$ is bounded by a constant depending on $\F$ only.
Hence, using Bodlaender's algorithm~\cite{Bodlaender96} we can compute in linear time a tree decomposition of $\extnd{\Gf_1}$ of constant width.
Then, on this tree decomposition we apply the optimization variant of Courcelle's theorem, due to Arnborg et al.~\cite{ArnborgLS91} (see also~\cite[Theorem 7.12]{platypus} for a modern presentation).
For this, we observe that finding the minimum cardinality of an edge subset of $\extnd{\Gf}$ that hits all $\phi$-rooted immersion model of $Q$, for each $(Q,\phi)\in \Ss$, can be
expressed in a straightforward way as an $\mathbf{MSO}_2$ optimization problem; the formula's length depends only on $r$ and $\F$.
Thus, the algorithm of Arnborg et al.~\cite{ArnborgLS91} solves this optimization problem in linear-time, yielding the deletion number of $\Gf$ w.r.t. $\Ss$.
By applying this procedure to all subsets $\Ss$ of relevant pairs, whose number is bounded by a constant depending on $r$ and $\F$, we obtain the whole signature of $\Gf$. 
\end{proof}

We are ready to prove the basic protrusion replacement lemma, i.e., Lemma~\ref{lem:basicReplacer}.

\begin{proof}[Proof of Lemma~\ref{lem:basicReplacer}]
We first describe the algorithm that computes $G'$.
Recall that, by Corollary~\ref{cor:num-sig}, for any $r\leq 2b_\F$ the number of possible signatures of $r$-boundaried graphs is bounded by a constant depending $\F$ only.
Define the following table $T$: 
for each $r\leq 2b_\F$ and each possible signature $\rho$ of $r$-boundaried graphs, we store in $T$ the smallest, in terms of the number of edges, $\F$-free $r$-boundaried graph $\Gf_\rho$ for which $\sigma[\Gf_\rho]=\rho$.
If no $\F$-free $r$-boundaried graph has signature $\rho$, a marker $\bot$ is stored instead.
Note that table $T$ depends only on family $\F$, and hence can be hardcoded in the algorithm.
Define $c_\F$ to be the largest number of edges among the graphs stored in $T$; then $c_\F$ is a constant depending on $\F$ only.

Let $r=|\delta(X)|$. Based on $G[X]$ and $G-X$, define $r$-boundaried graphs $\Gf_X$ and $\Hf$ as follows. 
The underlying graph of $\Gf_X$ is $G[X]$, and of $\Hf$ is $G-X$. 
Fix an arbitrary ordering $e_1,e_2,\ldots,e_r$ of the edges of $\delta(X)$.
Then the $i$-th boundary vertex of $\Gf_X$ is the endpoint of $e_i$ that lies in $X$, and the $i$-th boundary vertex of $\Hf$ is the second endpoint of $e_i$, the one that lies outside of $X$.
It follows that $G=\Gf_X\oplus \Hf$.

Using the algorithm of Lemma~\ref{lem:comp-sig}, compute the signature $\rho:=\sigma[\Gf_X]$.
Note here that $\Gf_X$ is $\F$-free by the supposition that $X$ is a protrusion.
Since $\Gf_X$ has signature $\rho$, it follows that table $T$ stores some $r$-boundaried graph $\Gf_{\rho}$ with the same signature.
As $\|G[X]\|=\|\Gf_X\|>c_\F$ and $\|\Gf_{\rho}\|\leq c_\F$, we have that $\|\Gf_{\rho}\|< \|\Gf_X\|$.

Define
$$G':=\Gf_{\rho}\oplus \Hf.$$
As $\|\Gf_{\rho}\|<\|\Gf_X\|$, we have that $\|G'\|<\|G\|$. 
Since $\Gf_X$ and $\Gf_\rho$ have the same signatures and are both $\F$-free, by Lemma~\ref{lem:equiv-implication} we have that they are $\F$-equivalent.
Hence
$$\OPT(G)=\OPT(\Gf_X\oplus \Hf)=\OPT(\Gf_\rho\oplus \Hf)=\OPT(G'),$$
and we conclude that $G'$ can be output by the algorithm.

\smallskip

We now describe the solution lifting algorithm.
Suppose we are given a subset $F'$ of edges of $G'$ such that $G'-F'$ is $\F$-free.
Recall that $G'=\Gf_\rho\oplus \Hf$.
Let $F'_\Hf=F'\cap E(\Hf)$ and let $F'_\rho=F'\setminus F'_\Hf$; here, we implicitly identify $\Gf_\rho$ and $\Hf$ with their copies in the gluing $G'=\Gf_\rho\oplus \Hf$.
Consider the extended graph $\extnd{\Gf_\rho}$, and let $\widetilde{F'_\rho}$ be the image of $F'_\rho$ under the mapping $\iota^{-1}$ defined as in the proof of Lemma~\ref{lem:equiv-implication}:
the edges of $\Gf_\rho$ are mapped to themselves, while the edges between $\Gf_X$ and $\Hf$ are mapped to the corresponding edges between the boundary vertices and their copies in  $\extnd{\Gf_\rho}$.

Since $\extnd{\Gf_\rho}$ is a graph of constant size, we can compute in constant time the subset $\Ss\subseteq \Rr_{r,\F}$ of those relevant pairs $(Q,\phi)\in \Rr_{r,\F}$,
for which $\extnd{\Gf_\rho}$ does not admit a $\phi$-rooted immersion model of $Q$.
Observe that since the signatures of $\Gf_\rho$ and $\Gf_X$ are the same, the deletion numbers of $\Gf_\rho$ and $\Gf_X$ w.r.t $\Ss$ are equal.
Hence, there exists a subset $\widetilde{F_X}$ of edges of $\extnd{\Gf_X}$ with $|\widetilde{F_X}|\leq |\widetilde{F'_\rho}|$, such that also in $\extnd{\Gf_X}-\widetilde{F_X}$
there is no $\phi$-rooted immersion model of $Q$, for each $(Q,\phi)\in \Ss$.

Observe that such set $\widetilde{F_X}$ can be computed in linear time using the algorithm of Arnborg et al.~\cite{ArnborgLS91} as follows.
Just as in Lemma~\ref{lem:comp-sig}, the fact that $\Gf_X$ is $\F$-free implies that $\extnd{\Gf_X}$ has constant treewidth.
Hence, we can compute its tree decomposition of constant width using Bodlaender's algorithm~\cite{Bodlaender96}; this takes linear time.
Then, on this decomposition we run the algorithm of Arnborg et al.~\cite{ArnborgLS91} for the $\mathbf{MSO}_2$ optimization problem defined as follows:
find the smallest subset of edges whose removal leaves no $\phi$-rooted immersion model of $Q$, for each $(Q,\phi)\in \Ss$.
The algorithm of Arnborg et al.~\cite{ArnborgLS91} can within the same linear running time also reconstruct the solution, so we are indeed able to construct $\widetilde{F_X}$.

Let now $F_X\subseteq E(G)$ be the image of $\widetilde{F_X}$ under the mapping $\iota$ defined as in the proof of Lemma~\ref{lem:equiv-implication}:
the edges of $\Gf_X$ are mapped to themselves, while the edges between the boundary vertices and their copies are mapped to the corresponding edges between $\Gf_X$ and $\Hf$.
Define $F:=F_X\cup F'_\Hf$.
By the construction of $F_X$ we have that for every relevant pair $(Q,\phi)\in \Rr_{r,\F}$, if $\extnd{\Gf_X}-\widetilde{F_X}$ admits a $\phi$-rooted immersion model of $Q$, then so does $\extnd{\Gf_\rho}-\widetilde{F'_\rho}$.
A simple replacement argument, essentially the same as in the proof of Lemma~\ref{lem:equiv-implication}, shows that the $\F$-freeness of $G'-F'$ implies that $G-F$ is also $\F$-free.
Also, $|F|\leq |F'|$ due to $|\widetilde{F_X}|\leq |\widetilde{F'_\rho}|$, so the solution $F$ can be returned by the solution-lifting algorithm.
\end{proof}

We henceforth define a \emph{replaceable protrusion} in $G$ as a $2b_\F$-protrusion $X$ with $\|G[X]\|>c_\F$, where $c_\F$ is the constant given by Lemma~\ref{lem:basicReplacer}.

\smallskip

Note that the proof of Lemma~\ref{lem:basicReplacer} a priori does not give any concrete upper bound on the sizes of the replacement graphs stored in table $T$, and on the obtained constant $c_\F$.
Also, it is unclear how to compute the table $T$ based on the knowledge of $\F$.
Obviously, we do not need the computability of $T$ or any concrete upper bound on $c_\F$, because we design algorithms for a family $\F$ fixed in advance, so objects
that depend on $\F$ only may be hard-coded in the algorithms.
However, we find it instructive to discuss the matter of computability of $T$ and the bound on $c_\F$, at least intuitively.

First of all, the algorithm of Arnborg et al.~\cite{ArnborgLS91} is based on constructing an automaton that traverses the given tree decomposition of a graph.
In our case, we have a constant upper bound on the minimum size of the sought set, so we are actually working with a finite-state tree automaton.
By tracing the standard translation from $\mathbf{MSO}_2$ to tree automata, one can estimate the number of states in the tree automaton constructed by the algorithm.
This number is bounded by a tower function of constant height applied to $\MX$ and the width of the decomposition; this is because the formula expressing the problem has constant quantifier rank. 
This also gives an upper bound on the minimum size of a tree decomposition that the automaton evaluates to a given state, which directly corresponds to the sizes of graphs stored in table $T$.
The automaton can be explicitly constructed by the algorithm of Arnborg et al.~\cite{ArnborgLS91}, and from the automaton one can retrieve small candidates for graphs stored in $T$.

The above strategy roughly shows that the graphs that are stored in $T$ are of size bounded by a tower function of constant height applied to $\MX$ and the width of the decomposition.
However, based on this idea one can also give a direct proof, as follows.
Take any graph $\Gf$ stored in $T$, and let $\T$ be its tree-cut decomposition of width at most $b_\F$.
With every node $x$ of $\T$ one can associate a boundaried graph $\Gf_x$, which corresponds to the subtree rooted at $x$; the adhesion between $x$ and its parent forms the boundary.
Assume for a moment that $\Gf$ is very large.
Suppose first that the depth of $\T$ is large, more precisely larger than the total number of signatures of $r$-boundaried graphs, for $r\leq b_\F$.
Then there is some root-to-leaf path in $\T$ that contains two nodes $x$ and $y$, say $x$ being an ancestor of $y$, for which $\Gf_x$ and $\Gf_y$ have the same signatures.
It can be easily seen that the part between $\Gf_x$ and $\Gf_y$ can be ``unpumped'': we can replace $\Gf_x$ with $\Gf_y$, obtaining a smaller graph $\Gf'$ with the same signature as $\Gf$.
If this unpumping cannot be applied, then the depth of $\T$ is bounded by the number of signatures, and $\Gf_x$ can be large only if some node has a large number of children.
But then again, a similar unpumping strategy can be applied if the number of children is larger than some constant depending on the number of possible signatures. 
Thus we obtain an explicit upper bound on the size of a graph that can be stored in $T$ instead of $\Gf$.

Once all these arguments are formalized, one can prove the following result that gives an upper bound on the sizes of graphs that are stored in $T$.

\begin{lemma}\label{lem:table-bound}
Suppose $r$ is a positive integer and $\rho$ is a signature of $r$-boundaried graphs. If there exists an $\F$-free $r$-boundaried graph with signature $\rho$, then there is also one with at most 
$\mathsf{4exp}(\poly(r,\MX))$ vertices and edges, where $\mathsf{4exp}(\cdot)$ is the $4$-times folded exponential function.
\end{lemma}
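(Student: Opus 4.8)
The plan is to combine the bounded-width tree-cut decompositions of $\F$-free graphs with a finite-state tree-automaton argument in the style of Arnborg, Lagergren and Seese~\cite{ArnborgLS91} --- the same machinery already used in Lemma~\ref{lem:comp-sig}. Fix $r$ and an $\F$-free $r$-boundaried graph $\Gf$ with boundary $(u_1,\dots,u_r)$ and signature $\rho$; I will repeatedly shrink $\Gf$ while keeping it $\F$-free with signature $\rho$. As a preliminary normalization, fix a tree-cut decomposition of $\Gf$ of $\width'$ at most $b_\F$ (Corollary~\ref{corol:neat}) and observe that two distinct bags are joined by at most $b_\F$ parallel edges of $\Gf$ (they all lie in the adhesion of any decomposition edge separating the two bags), so only parallel edges inside a single bag can be numerous. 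I delete from $\Gf$ all but $C:=(b_\F+1)\MX+r+1$ edges of every group of parallel edges with both endpoints in a common bag. This keeps $\Gf$ $\F$-free (a subgraph of an $\F$-free graph is $\F$-free) and keeps $\width'$ at most $b_\F$ (removing edges does not increase $\width'$), and it does not change $\sigma[\Gf]$: every relevant pair $(Q,\phi)\in\Rr_{r,\F}$ has $\|Q\|\le(b_\F+1)\MX$, so any $\phi$-rooted immersion model uses at most $(b_\F+1)\MX$ edges of a fixed parallel group; since every deletion set has size at most $r$ (Lemma~\ref{lem:bounded-deletion-number}), at least $(b_\F+1)\MX+1$ parallel edges of the group always survive the deletion, so any immersion model in the original graph reroutes into the capped one --- whence both graphs have the same deletion number with respect to every $\Ss\subseteq\Rr_{r,\F}$.

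Now take a neat (hence connected) rooted tree-cut decomposition $\T=(T,\mathcal X)$ of the resulting $\Gf$ of $\width'$ at most $b_\F$ (Corollary~\ref{corol:neat}), pruning subtrees whose union of bags is empty so that all adhesions are non-empty. For a node $t$ write $T_t$ for the subtree rooted at $t$ and $X_{T_t}:=\bigcup_{s\in V(T_t)}X_s$, and let $\Gf_t$ be the boundaried graph with underlying graph $\Gf[X_{T_t}]$ and boundary the tuple of $X_{T_t}$-endpoints of the edges of $\adh(t\pi(t))$; then $\Gf_t$ is $\F$-free, of arity $r'_t:=|\adh(t\pi(t))|\in\{1,\dots,b_\F\}$ (arity $0$ at a root). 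Define the \emph{type} of $t$ as the tuple consisting of $(i)$ $r'_t$; $(ii)$ the isomorphism type of the bounded-size labelled structure formed by $X_t$, the (capped) edges inside $X_t$, and the incidences with $X_t$ of the parent- and child-adhesion edges of $t$; $(iii)$ the signature $\sigma[\Gf_t]$; and $(iv)$ the set $I_t\subseteq[r]$ of indices $i$ with $u_i\in X_{T_t}$. The two facts needed are: $(a)$ the multiset of types of the roots of $T$, truncated at a constant threshold, determines $\sigma[\Gf]=\rho$; and $(b)$ the type of $t$ is a function of $(i)$--$(ii)$ for $t$, of which $u_i$'s lie in $X_t$, and of the multiset of types of the children of $t$, depending on this multiset only up to a threshold $\theta=\theta(\F)$ (for each type $s$, only the number of children of $t$ of type $s$, truncated at $\theta$, matters). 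Both hold because hitting all $\phi$-rooted immersion models of the graphs from a fixed finite collection is $\mathbf{MSO}_2$-expressible with quantifier rank bounded in terms of $\F$ only, so the associated minimisation is evaluated by a finite-state tree automaton running along $\T$ (each $\extnd{\Gf_t}$ has constant treewidth, by Lemma~\ref{lem:twBound} and Theorem~\ref{thm:Ffree_tctw}), and the threshold property in $(b)$ is the standard ``counting up to a constant'' feature of tree automata on unranked trees. I expect this to be the only delicate point: it requires running the automaton along a tree-cut decomposition rather than along an ordinary tree decomposition, and checking the threshold property at nodes of unbounded degree; this is routine in spirit but needs to be spelled out carefully.

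It remains to count types and unpump. The number of possibilities for $(i)$ is $b_\F+1$, for $(ii)$ a constant depending on $\F$, and for $(iv)$ it is $2^r$; for $(iii)$ the number of signatures of $r'$-boundaried graphs with $r'\le b_\F$ is at most $(b_\F+1)^{2^{|\Rr_{b_\F,\F}|}}$, which by~\eqref{eq:rel-pairs} and $b_\F=\poly(\MX)$ is at most $\mathsf{3exp}(\poly(\MX))$. Hence the total number $N$ of types is at most $\mathsf{3exp}(\poly(r,\MX))$. I then apply two reductions, each of which --- using $(a)$, $(b)$, connectedness, and the fact that the modification increases no adhesion and no $w(\cdot)$ --- produces a neat tree-cut decomposition of $\width'$ at most $b_\F$ of an $\F$-free $r$-boundaried graph with signature still $\rho$:
\begin{itemize}
\item \emph{Depth.} If some root-to-leaf path of $T$ has more than $N$ nodes, it contains nodes $x,y$ with $x$ an ancestor of $y$ and $\mathrm{type}(x)=\mathrm{type}(y)$; then $r'_x=r'_y$ and $I_x=I_y$, so (as $X_{T_y}\subseteq X_{T_x}$) no boundary vertex lies in $X_{T_x}\setminus X_{T_y}$. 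Replace the subtree $T_x$ by $T_y$, reattaching it to $\pi(x)$ along the $r'_x=r'_y$ adhesion edges --- equivalently, replace the boundaried subgraph $\Gf_x$ by $\Gf_y$. By $(a)$--$(b)$ the signature $\sigma[\Gf]$ is unchanged; all $u_i$'s are preserved; and $|V(T)|$ strictly decreases.
\item \emph{Branching.} If some node $t$ has more than $r+\theta N$ children, at most $r$ of them $c$ have $I_c\neq\emptyset$ (each $u_i$ lies in at most one child-subtree), so more than $\theta N$ children have $I_c=\emptyset$, and more than $\theta$ of these share a type $s$; delete all but $\theta$ of the corresponding subtrees (they contain no $u_i$). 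By the threshold part of $(b)$ the signature $\sigma[\Gf]$ is unchanged, and $|V(T)|$ strictly decreases. The same reduction, applied to the trees of the forest $T$ regarded as the children of a virtual root, bounds the number of connected components of $\Gf$ by $r+\theta N$ as well.
\end{itemize}
Applying these exhaustively (each strictly decreases $|V(T)|$, and the branching reduction does not increase depth) leaves a decomposition with at most $r+\theta N$ trees, each of depth at most $N$ and of maximum degree at most $r+\theta N$, so $|V(T)|\le(r+\theta N)^{N+2}\le\mathsf{4exp}(\poly(r,\MX))$. Finally $|V(\Gf)|\le b_\F\cdot|V(T)|$, while $\|\Gf\|$ is bounded by the number of between-bag edges --- at most $b_\F\cdot|E(T)|$, since each such edge lies in an adhesion of size at most $b_\F$ --- plus the number of within-bag edges --- at most $\binom{b_\F}{2}\cdot C\cdot|V(T)|$ by the multiplicity cap; both quantities are $\mathsf{4exp}(\poly(r,\MX))$, which finishes the proof (and incidentally pins down an explicit, if astronomically large, value for the constant $c_\F$ of Lemma~\ref{lem:basicReplacer}).
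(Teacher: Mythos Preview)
Your proposal follows essentially the same unpumping strategy that the paper sketches (the paper itself only outlines the argument informally and defers the full proof to a journal version): work in a bounded-width tree-cut decomposition, associate to each node the boundaried subgraph below it, and shrink the tree by collapsing repeated types along root-to-leaf paths and pruning surplus same-type children, with the tree-automaton/Feferman--Vaught style compositionality supplying the ``threshold'' property you invoke. One small slip to fix: your multiplicity cap $C$ should use $(r{+}1)\MX$ rather than $(b_\F{+}1)\MX$, since relevant pairs for $r$-boundaried graphs satisfy $\|Q\|\le (r{+}1)\MX$ and the lemma allows $r$ to exceed $b_\F$; with that adjusted, the approach and the acknowledged delicate compositional point match the paper's.
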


Note that once we have a computable upper bound, table $T$ may be constructed from $\F$ in constant time by brute force.
We would like to remark that the same unpumping strategy was recently applied by Chatzidimitriou et al.~\cite{Chatzidimitriou2015logopt} 
for the parameter {\em{tree-partition width}}, which is similar to treewidth.
The full proof of Lemma~\ref{lem:table-bound} will appear in the journal version of this paper.

\subsection{Finding excessive protrusions}

Recall that a replaceable protrusion in a graph $G$ is a $2b_\F$-protrusion $X$ with $\|G[X]\| > c_\F$.
To find replaceable protrusions in the input graph, we need to assume some additional connectivity constraint (which will be implied from a connected tree-cut decomposition) -- this is captured by the following definition.
The larger protrusion  size is needed to make any connected component of the protrusion replaceable.

\begin{definition}
	A $2b_\F$-protrusion $B$ in a connected graph $G$ is called {\em{excessive}} if $\|G[B]\|>2 b_\F \cdot c_\F$ and $G-B$ has at most two connected components.
\end{definition}

Replaceable protrusions could be found easily if we allowed a (far worse) running time of the form $\|G\|^{\Oh(b_\F)}$, but this would affect the running times in both our main results.
With the above definition in hand, we use the following two techniques instead.

The first is \emph{important cuts}, introduced by Marx~\cite{Marx06}, see also the exposition in \cite[Chapter~8.2]{platypus}.
Intuitively, we consider ($S,T$)-cuts (i.e., edge sets whose removal separates the vertex sets $S$ and $T$) that are `pushed' towards $T$, meaning that we make the set of vertices reachable from $S$ inclusion-wise maximal, without increasing the cut size.
Such cuts can be effectively enumerated, allowing us to find a protrusion's boundary.

\begin{definition}\label{def:importantCut}
	Consider a graph $G$ and disjoint vertex sets $S,T\subseteq V(G)$. Let $\Delta \subseteq E(G)$ be an $(S,T)$-cut and let $R$ be the set of vertices reachable from $S$ in $G-\Delta$.
	We say $\Delta$ is an \emph{important cut} if it is inclusion-wise minimal and there is no $(S,T)$-cut $\Delta'$ with $|\Delta'|\leq |\Delta|$ such that $R' \supset R$, where $R'$ is the set of vertices reachable from $S$ in $G-\Delta'$.
\end{definition}

\begin{lemma}[\cite{Marx06}]\label{lem:importantCut}
	Let $S,T\subseteq V(G)$ be two disjoint sets of vertices in a graph $G$ and let $k\geq 0$.
	The set of all important $(S,T)$-cuts of size at most $k$ can be enumerated in time $\Oh(4^k \cdot k \cdot \|G\|)$.
\end{lemma}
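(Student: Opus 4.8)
The plan is to prove this via the classical recursive branching algorithm for important cuts. First I would reduce to the case where $S=\{s\}$ and $T=\{t\}$ are singletons: attach a new source $s$ to every vertex of $S$ and a new sink $t$ to every vertex of $T$, each by $k+1$ parallel edges; since no $(S,T)$-cut of size at most $k$ can use these new edges, the important $(S,T)$-cuts of $G$ of size $\le k$ correspond bijectively to the important $(\{s\},\{t\})$-cuts of size $\le k$ in the new graph, with matching reachable sets. Next, using at most $k+1$ rounds of augmenting-path search I would compute in time $\Oh(k\cdot\|G\|)$ the size $\lambda$ of a minimum $(s,t)$-cut (aborting once it exceeds $k$) together with a maximum flow; from the residual graph one reads off, in linear time, both the minimum cut closest to $s$ (with source side $R$, the set reachable from $s$ in the residual graph) and the minimum cut farthest from $s$ (with source side $R^\star$, the complement of the set of vertices that can reach $t$ in the residual graph). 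Submodularity of the cut function $A\mapsto|\delta(A)|$ guarantees these are well defined and unique, and that every minimum-cut source side lies between $R$ and $R^\star$.

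The engine of the algorithm is the following recursion on an instance $(G,S,T,k)$. If $\lambda>k$, output nothing; if $\lambda=0$ (i.e.\ $S$ and $T$ are already separated), output $\{\emptyset\}$. Otherwise, first absorb $R^\star$ into the source set, replacing $S$ by $R^\star$: an uncrossing argument shows that every important $(S,T)$-cut $\Delta$ satisfies $R^\star\subseteq R_\Delta$ (where $R_\Delta$ denotes the source side of $\Delta$), so this does not change the family of important cuts of size $\le k$, and afterwards $\delta(S)=\delta(R^\star)$ is the \emph{unique} minimum $(S,T)$-cut, with source side exactly $S$. Now pick any edge $e=uv\in\delta(S)$ with $u\in S$, and branch on whether $e$ belongs to the important cut being enumerated: in the first branch recurse on $(G-e,S,T,k-1)$ and add $e$ to every returned cut; in the second branch, since $u\in S\subseteq R_\Delta$, the assumption $e\notin\Delta$ forces $v\in R_\Delta$, so recurse on $(G,S\cup\{v\},T,k)$ (this branch is taken only when $v\notin T$). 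Collecting the outputs of both branches yields a family containing all important $(S,T)$-cuts of size $\le k$; a final pass discards any set that is not an important cut of the original instance, each such check costing $\Oh(k\cdot\|G\|)$ time.

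To bound the recursion I would use the potential $\mu=2k-\lambda$, which is nonnegative at every branching node and at most $2k$ at the root. In the first branch, $k$ drops by $1$ while $\lambda$ drops by exactly $1$ (because $e$ lies in a minimum cut, $\delta(S)\setminus\{e\}$ is a cut of size $\lambda-1$ in $G-e$), so $\mu$ decreases by at least $1$. In the second branch, $k$ is unchanged but $\lambda$ strictly increases: this is precisely where absorbing $R^\star$ pays off, since afterwards the minimum cut is unique with source side $S$, so the added vertex $v\notin S$ lies on the $T$-side of every minimum $(S,T)$-cut, and hence the minimum $(S\cup\{v\},T)$-cut is strictly larger. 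Thus $\mu$ drops by at least $1$ in each branch, so the binary recursion tree has at most $2^{2k}=4^k$ leaves and $\Oh(4^k)$ nodes, which in particular shows there are at most $4^k$ important cuts of size $\le k$. Since each node runs one maximum-flow computation plus a constant number of linear-time graph searches, costing $\Oh(k\cdot\|G\|)$, the total running time is $\Oh(4^k\cdot k\cdot\|G\|)$.

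The main obstacle is the suite of uncrossing lemmas underpinning the recursion: that the farthest minimum-cut source side $R^\star$ sits inside the source side of every important cut (so the absorption step is sound and loses no cuts), that an important cut of $G$ containing $e$ restricts to an important cut of $G-e$ and conversely, and that an important cut with $v$ on its source side is an important cut of the instance with $v$ moved into $S$ — all proved by the same pattern of applying submodularity to $R_\Delta$ and a suitably chosen minimum-cut source side and extracting a contradiction to maximality of $R_\Delta$. Checking that every important cut is produced by exactly one leaf, so that the enumeration is complete and the count is as claimed, is the remaining bookkeeping; the rest is routine.
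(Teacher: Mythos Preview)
The paper does not give its own proof of this lemma; it is stated with a citation to Marx~\cite{Marx06} and used as a black box (the surrounding text also points to the exposition in~\cite[Chapter~8.2]{platypus}). Your proposal is exactly the standard branching argument one finds in those references: absorb the farthest minimum-cut side $R^\star$ into the source, branch on an edge of $\delta(R^\star)$, and bound the recursion depth by the potential $2k-\lambda$. The uncrossing facts you list are the right ones, and the running-time accounting via $\Oh(k\cdot\|G\|)$ per node is correct. So there is nothing to compare against in the paper itself; your write-up simply supplies the well-known proof that the paper chose to cite.
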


The second technique we use is \emph{randomized contractions} by Chitnis et al.~\cite{ChitnisCHPP12}.
While \emph{randomized} refers to the intuition behind this technique, following~\cite{ChitnisCHPP12} we use the technique of \emph{splitters} of Naor et al.~\cite{NaorSS95} 
to make its usage deterministic. A convenient black-box access to splitters is given by the following lemma.

\begin{lemma}[\cite{ChitnisCHPP12}]\label{lem:splitters}
	Given a set $U$ of size $m$ together with integers $0\leq a,b \leq m$, one can in
	time $2^{\Oh(\min(a,b) \log(a+b))}\cdot m \log m$  construct a family $\splitterF$ of at most $2^{\Oh(\min(a,b) \log(a+b))}\cdot \log m$ subsets of $U$, such that the following holds: 
	for any sets $A, B \subseteq U$ with $A\cap B = \emptyset$, $|A|\leq a$, $|B|\leq b$, there exists a set $T \in \splitterF$ with $A \subseteq T$ and $B\cap T = \emptyset$.
\end{lemma}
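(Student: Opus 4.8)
The plan is to reproduce the classical two-phase splitter construction of Naor, Schulman, and Srinivasan~\cite{NaorSS95}: first compress the ground set to a universe of size $\poly(a+b)$ via a perfect hash family, and then enumerate by brute force over that small universe. Before anything else I would reduce to the case $a=\min(a,b)$: if instead $b<a$, I run the construction for the swapped pair $(b,a)$ to obtain a family $\splitterF'$ and output $\{U\setminus T' : T'\in\splitterF'\}$; a set $T'$ that contains a given $B$ and avoids a given $A$ has a complement that contains $A$ and avoids $B$. I would also dispose of the degenerate cases $a\le 1$ at once (if $a=0$ take the single set $T=\emptyset$; if $a=1,b=0$ take $T=U$, etc.), and replace $a+b$ by $k:=\min(a+b,m)$, which is harmless since any disjoint $A,B$ of the relevant sizes satisfy $|A\cup B|\le k$. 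So from now on $1\le a=\min(a,b)$ and $k\le m$.

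The first phase uses an $(m,k,k^2)$-\emph{splitter}, i.e.\ a family $\mathcal{H}$ of functions $h\colon U\to[k^2]$ such that every $k$-element subset of $U$ is mapped injectively by some $h\in\mathcal{H}$. When $k^2\ge m$ I would simply take $\mathcal{H}=\{\mathrm{id}\}$ (relabelling $U$ as a subset of $[k^2]$, which is possible as $m\le k^2$); otherwise $k^2<m$, hence $k=a+b$, and I invoke the construction of~\cite{NaorSS95}, which produces such a family of size $k^{\Oh(1)}\log m$ in time $k^{\Oh(1)}\cdot m\log m$, each function evaluable in constant time. In either case $|\mathcal{H}|\le k^{\Oh(1)}\log m$ and the codomain has size $q\le k^2$. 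In the second phase I set
\[
\splitterF \;:=\; \bigl\{\, h^{-1}(S) \;:\; h\in\mathcal{H},\ S\subseteq[q],\ |S|\le a \,\bigr\}.
\]
Since the number of admissible $S$ is at most $\sum_{i\le a}\binom{q}{i}\le (a+b)^{\Oh(a)}=2^{\Oh(a\log(a+b))}$, the size of $\splitterF$ is $k^{\Oh(1)}\log m\cdot 2^{\Oh(a\log(a+b))}=2^{\Oh(\min(a,b)\log(a+b))}\log m$ (the $k^{\Oh(1)}$ factor is absorbed since $a\ge 1$ and $k\le a+b$), and each $h^{-1}(S)$ is computable in $\Oh(m)$ time, so the total running time, including the cost of building $\mathcal{H}$, is $2^{\Oh(\min(a,b)\log(a+b))}\cdot m\log m$.

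For correctness, given disjoint $A,B$ with $|A|\le a$ and $|B|\le b$, I would pick $h\in\mathcal{H}$ injective on $A\cup B$ (it exists because $|A\cup B|\le k$), and take $S:=h(A)$, so $|S|=|A|\le a$. Then $A\subseteq h^{-1}(S)$ trivially, and $h^{-1}(S)\cap B=\emptyset$, since a vertex $b\in B$ with $h(b)\in h(A)$ would force $h(b)=h(a)$ for some $a\in A$, contradicting injectivity of $h$ on $A\cup B$. Hence $T:=h^{-1}(S)\in\splitterF$ separates $A$ from $B$. I do not expect a genuine obstacle: the whole idea is ``hash into $k^2$ buckets, then try every bucket-set that $A$ could occupy'', and the only points that need care are the complementation trick that makes the exponent depend on $\min(a,b)$ rather than $a$, the bookkeeping for the degenerate regimes ($a\le 1$, $k^2\ge m$, $a+b>m$) so the stated bounds hold verbatim, and quoting the hash-family construction of~\cite{NaorSS95} with its precise size and time guarantees.
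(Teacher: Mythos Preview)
The paper does not prove this lemma at all: it is stated with a citation to~\cite{ChitnisCHPP12} and used as a black box. Your proposal is therefore not to be compared against a proof in the paper, but it is a correct reconstruction of the standard Naor--Schulman--Srinivasan two-phase argument (hash into $\poly(a+b)$ buckets, then enumerate all $\le a$-subsets of buckets), including the complementation trick that yields the $\min(a,b)$ in the exponent and the appropriate handling of the degenerate regimes.
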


These two techniques allow us to reduce excessive protrusions: we use the randomized contractions technique to find a large enough subset of a presumed excessive protrusion, after which important cuts allow us to find a boundary that makes this subset a replaceable protrusion.

\begin{lemma}\label{lem:randomizedContractions}
	There is an algorithm that, given a connected graph $G$, runs in time $\Oh(\|G\| \log \|G\| \cdot |G|^2)$ and either correctly concludes that $G$ does not contain any excessive protrusion,
	or it outputs some replaceable protrusion in $G$.
\end{lemma}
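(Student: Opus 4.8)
The plan is to reduce the task to locating a \emph{connected} replaceable protrusion whose complement has few connected components, and then to find such a set by combining Marx's important cuts with the randomized contractions technique, verifying every candidate that the algorithm produces. First I would set up the reduction. Suppose $G$ contains an excessive protrusion $B$. Since $G$ is connected, every connected component of $G[B]$ sends an edge to $V(G)\setminus B$, so $G[B]$ has at most $|\delta(B)|\le 2b_\F$ components; as $\|G[B]\|>2b_\F\cdot c_\F$, some component $D$ of $G[B]$ satisfies $\|D\|>c_\F$. Then $R^\star:=V(D)$ is connected, $G[R^\star]$ is $\F$-free, $|\delta_G(R^\star)|\le|\delta_G(B)|\le 2b_\F$ (there are no edges from $D$ to the rest of $B$), and $\|G[R^\star]\|>c_\F$, so $R^\star$ is a replaceable protrusion. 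Moreover, every other component of $G[B]$ attaches in $G$ only to the at most two components of $G-B$, so $G-R^\star$ again has at most two connected components. It therefore suffices to design an algorithm that, assuming such a set $R^\star$ (connected, $\F$-free, $\|G[R^\star]\|>c_\F$, $|\delta_G(R^\star)|\le 2b_\F$, with $G-R^\star$ having at most two components) exists, outputs some replaceable protrusion; if it fails, $G$ has no excessive protrusion. The degenerate case $R^\star=V(G)$ is dispatched first: if $G$ itself is $\F$-free (checkable in linear time by Lemma~\ref{lem:ffreeFpt}) and $\|G\|>c_\F$, output $V(G)$.

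The algorithm iterates over all pairs $(s_1,s_2)$ of vertices of $G$ --- $\Oh(|G|^2)$ choices --- guessing one vertex in each component of $G-R^\star$, with the obvious modification when there is only one such component. For a fixed pair we invoke the randomized contractions technique of Chitnis et al.~\cite{ChitnisCHPP12}, which (using the splitter lemma, Lemma~\ref{lem:splitters}, internally) produces in time $\Oh(\|G\|\log\|G\|)$ a family of $\Oh(\log\|G\|)$ vertex subsets $T$ such that, for the correct guess, at least one $T$ contains the constant-size set $Q$ of endpoints of $\delta_G(R^\star)$ lying inside $R^\star$ and is disjoint from $V(G)\setminus R^\star$. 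For each such $T$ we run Marx's important-cuts enumeration (Lemma~\ref{lem:importantCut}) with source $\{s_1,s_2\}$ and sink $T$ to list all important $(\{s_1,s_2\},T)$-cuts of size at most $2b_\F$; there are $\Oh_\F(1)$ of them and this takes $\Oh_\F(\|G\|)$ time. For every candidate cut $\Delta$ we form $R$, the set of vertices reachable from $\{s_1,s_2\}$ in $G-\Delta$, and test in linear time, via Lemma~\ref{lem:ffreeFpt} and trivial boundary- and edge-count checks, whether $V(G)\setminus R$ is a $2b_\F$-protrusion inducing more than $c_\F$ edges; if so it is returned. If no candidate over all guesses passes, we report that $G$ has no excessive protrusion. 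The total running time is $\Oh(|G|^2)\cdot \Oh_\F(\|G\|\log\|G\|)=\Oh(\|G\|\log\|G\|\cdot|G|^2)$.

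Soundness is immediate, since every returned set is explicitly verified to be a replaceable protrusion. For completeness, fix the correct pair $(s_1,s_2)$ and the subset $T$ with $Q\subseteq T$ and $T\cap(V(G)\setminus R^\star)=\emptyset$. Removing $\delta_G(R^\star)$ from $G$ leaves $R^\star$ (connected) separated from $V(G)\setminus R^\star$, so $\delta_G(R^\star)$ is a $(\{s_1,s_2\},T)$-cut of size at most $2b_\F$ whose source side is exactly $V(G)\setminus R^\star$. By the standard domination property of important cuts (see~\cite{Marx06} or~\cite[Chapter~8.2]{platypus}) there is an important $(\{s_1,s_2\},T)$-cut $\Delta$ with $|\Delta|\le 2b_\F$ whose source side $R$ satisfies $R\supseteq V(G)\setminus R^\star$. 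If some $w\in R\cap R^\star$ existed, a path in $G-\Delta$ from $\{s_1,s_2\}$ to $w$ would cross $\delta_G(R^\star)$ and thus reach a vertex of $Q\subseteq T$, which is impossible since $\Delta$ separates $T$ from $\{s_1,s_2\}$. Hence $R=V(G)\setminus R^\star$, so $V(G)\setminus R=R^\star$ passes verification and is output.

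The main obstacle, and the reason a single direct use of the splitter lemma does not suffice, is that \emph{both} $R^\star$ and its complement may be large: isolating the constant-size set $Q$ from the whole complement with one splitter call would force the ``exclude'' parameter to be $\Omega(|G|)$, inflating the family to size $|G|^{\Omega(b_\F)}$ and destroying the uniform polynomial exponent we are after. Circumventing this --- separating two possibly-large sides, each with boundedly many components, by a bounded cut in time $\Oh(\|G\|\log\|G\|)$ per guess --- is exactly what the randomized contractions machinery delivers, and carrying it out in our setting (where the protrusion side must additionally be $\F$-free, a property we only test a posteriori on the $\Oh_\F(1)$ produced candidates) is the technical heart of the argument.
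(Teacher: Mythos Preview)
Your reduction to a connected target $R^\star$ with $G-R^\star$ having at most two components, the outer loop over pairs $(s_1,s_2)$, the use of important cuts, and the a~posteriori verification are all in line with the paper's approach. The gap is in the middle step. You assert that the randomized contractions technique ``produces in time $\Oh(\|G\|\log\|G\|)$ a family of $\Oh(\log\|G\|)$ vertex subsets $T$ such that at least one $T$ contains $Q$ and is disjoint from $V(G)\setminus R^\star$.'' No such black box exists: as you yourself note in the last paragraph, a direct splitter call with $A=Q$ and $B=V(G)\setminus R^\star$ yields a family of size $|G|^{\Omega(b_\F)}$, and nothing in Chitnis et~al.\ gives a family of $\Oh(\log\|G\|)$ \emph{vertex} sets with this separation property when both sides of the cut are large. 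Declaring this ``the technical heart of the argument'' and then not carrying it out leaves the proof incomplete precisely at its crux.

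The paper resolves this differently and concretely. Instead of trying to isolate $Q$ from all of $V(G)\setminus R^\star$, it fixes inside $R^\star$ a spanning tree $T$ on $\min(|R^\star|,c_\F+2)$ vertices, so that $\|G[V(T)]\|>c_\F$ while $|E(T)|\le c_\F+1$. It then applies Lemma~\ref{lem:splitters} on the \emph{edge} set with $A=E(T)$ and $B=\delta(R^\star)$, both of constant size, obtaining $\Oh(\log\|G\|)$ candidate sets $F\subseteq E(G)$. For a successful $F$ (containing $E(T)$, avoiding $\delta(R^\star)$) one \emph{contracts} all edges of $F$; in the contracted graph $\bar G$, the set $V(T)$ is identifiable and $\delta(R^\star)$ is still an $(\{s_1,s_2\},V(T))$-cut of size $\le 2b_\F$. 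An important $(\{s_1,s_2\},V(T))$-cut $\Delta$ in $\bar G$ then has source side containing $V(\bar G)\setminus R^\star$, so the preimage $X$ of the $T$-side satisfies $V(T)\subseteq X\subseteq R^\star$. One does \emph{not} recover $R^\star$ exactly; rather $X$ is a $2b_\F$-protrusion with $\|G[X]\|\ge\|G[V(T)]\|>c_\F$, which is all that is needed. The contraction step is what makes both splitter parameters constant, and it is the piece your proposal is missing.
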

\begin{proof}
	We describe the algorithm under the assumption that $G$ contains some excessive protrusion; in this case, we show that the algorithm can compute some replaceable protrusion.
	If the algorithm fails to find some replaceable protrusion, then this certifies that $G$ has no excessive protrusions, and this conclusion can be reported by the algorithm.
	
	Let $B$ be an excessive protrusion in $G$.
	Since $G$ is connected and $B$ is a $2b_\F$-protrusion, $B$ induces at most $2b_\F$ connected components in $G$. Let $B'$ be the largest one (in the number of edges).
	Then clearly $B'$ is a $2b_\F$-protrusion with $\|B'\| > c_\F$ and with $G[B']$ connected.
	Furthermore, $G-B'$ has at most two components, because $G-B$ has, and every connected component of $G[B]$ is adjacent to at least one of the components of $G-B$, due to the connectivity of $G$.
	We consider $B'$ instead of $B$ from now on.
	
	Let $T$ be a tree spanning a subset of $B'$ with $\min(|B'|, c_\F + 2)$ vertices.
	Then $\|T\| \leq c_\F + 1$ and $\|G[V(T)]\| > c_\F$.
	Let $s_1, s_2$ be arbitrary vertices in the two components of $G-B'$ (set $s_1 = s_2$ if it has only one component) and set $S = \{s_1, s_2\}$.
	
	To find $T$, we now apply Lemma~\ref{lem:splitters} for universe $U := E(G)$ and constants $a:=\|T\| \leq c_\F + 1$ and $b:=|\delta(B')|\leq 2b_\F$.
	Thus, in time $\Oh(\|G\| \log \|G\|)$ we construct a family $\splitterF$ of $\Oh(\log \|G\|)$ subsets of $E(G)$ with the following guarantee:
	for at least one $F \in \splitterF$, we have $E(T) \subseteq F$ and $\delta(B') \cap F = \emptyset$.
	The algorithm guesses this set $F \in \splitterF$ and the vertices of $S$ (by iterating over $|\splitterF| \cdot |G|^2$ possibilities); we shall consider the guess \emph{successful} if $F$ indeed has the above property and $S$ indeed intersects each component of $G-B'$.
	
	Make the edges of $F$ undeletable by considering the graph $\bar{G}$ obtained from $G$ by contracting all edges in $F$ (we use the same vertex labels in $\bar{G}$ by abuse of notation).
	Observe that $\delta(B')$ is an $(S,V(T))$-cut in $G$ of size at most $2b_\F$.
	If the guess was successful, it is an $(S,V(T))$-cut  of size at most $2b_\F$ in $\bar{G}$ too, and furthermore by choice of $S$, the set of vertices reachable from $S$ in $\bar{G}-\delta(B')$ is precisely $V(\bar{G})\setminus B'$.
	
	Consider a corresponding important cut, that is, let $\Delta\subseteq E(\bar{G})$ be an important $(S,V(T))$-cut of size at most $2b_\F$ such that the set of vertices reachable from $S$ in $\bar{G}-\Delta$ contains $V(\bar{G})\setminus B'$ (the existence of such a cut is easily proved, see~\cite{Marx06,platypus}).
	Let $\bar{X}$ be the set of vertices reachable from $T$ in $\bar{G}-\Delta$; then $\bar{X} \subseteq B'$ and $\delta(\bar{X}) \subseteq \Delta$.
	
	Let $X$ be the set of vertices in $G$ that gets contracted to $\bar{X}$ in $\bar{G}$.
	Then also $X\subseteq B'$ and $\delta(\bar{X}) \subseteq \Delta$ (as a subset of $E(G)\setminus F$).	
	That is, $X$ is $\F$-free (because $B'$ is) and $|\delta(X)|\leq 2b_\F$, meaning $X$ is a $2b_\F$-protrusion.
	As $X$ contains $V(T)$, we have $\|G[X]\| \geq \|G[V(T)]\| > c_\F$, meaning $X$ is a replaceable protrusion.

	Since $\Delta$ is an important cut of size at most $2b_\F$, we can use Lemma~\ref{lem:importantCut} to find it, and thus to find $X$, in $\Oh(\|G\|)$ time.
	Therefore, for at least one of $\Oh(|G|^2 \log \|G\|)$ guesses, the algorithm will find a replaceable protrusion.
	To handle unsuccessful guesses, the algorithm checks if the obtained set $X$ is in fact a replaceable protrusion; this takes $\Oh(\|G\|)$ time for each guess, by Proposition~\ref{lem:ffreeFpt}.
\end{proof}

We remark that we only defined excessive protrusions in connected graphs.
Note that if $B$ is an excessive protrusion in a connected component $H$ of $G$, it would not necessarily be an excessive protrusion in $G$, since $G-B$ may have more components than $H-B$ (they are however not adjacent to $B$).
We will thus consider the property that no component of $G$ has an excessive protrusion.
By this we mean that for each connected component $H$ of $G$, there is no excessive protrusion in $H$.

By exhaustively (at most $\|G\|$ times) executing the algorithm of Lemma~\ref{lem:randomizedContractions} and replacing any obtained protrusion using Lemma~\ref{lem:basicReplacer}, 
we can get rid of all excessive protrusions. 
We formalize this in the following lemma, which will serve as the abstraction of protrusion replacement in the sequel.

\begin{lemma}[Exhaustive Protrusion Replacement]\label{lem:exhaustProtrusions}
There is an algorithm that, given a graph $G$, runs in time $\Oh(\|G\|^2 \log \|G\| \cdot |G|^2)$ 
and computes a graph $G'$ such that $\OPT(G)=\OPT(G')$, $\|G'\|\leq\|G\|$, and no connected component of $G'$ has an excessive protrusion.

Moreover, there exists a solution-lifting algorithm that works as follows: given a subset $F'$ of edges of $G'$ for which $G'-F'$ is $\F$-free, the algorithm runs in time $\Oh(\|G\|^2)$
and outputs a subset $F$ of edges of $G$ such that $|F|\leq |F'|$ and $G-F$ is $\F$-free.
\end{lemma}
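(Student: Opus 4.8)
The plan is to iterate the two procedures already at hand: the excessive‑protrusion detector of Lemma~\ref{lem:randomizedContractions} and the protrusion replacer of Lemma~\ref{lem:basicReplacer}. The algorithm maintains a current graph, initially $G=:G_0$, together with a list recording the replacements it performs. In each round it computes the connected components of the current graph $G_i$ and applies the algorithm of Lemma~\ref{lem:randomizedContractions} to each of them separately (this is needed because that lemma assumes a connected input; note also that a replaceable protrusion found inside one component is, since a component collects whole connected pieces of the graph, a replaceable protrusion of $G_i$ as well). If every one of these calls reports that its component contains no excessive protrusion, the algorithm stops and returns $G':=G_i$; this yields precisely the promised property that no connected component of $G'$ has an excessive protrusion. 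Otherwise some call outputs a replaceable protrusion $X$ in $G_i$, and the algorithm invokes Lemma~\ref{lem:basicReplacer} to obtain $G_{i+1}$ with $\OPT(G_{i+1})=\OPT(G_i)$ and $\|G_{i+1}\|<\|G_i\|$, appends to its record the data that the solution‑lifting part of Lemma~\ref{lem:basicReplacer} needs for this replacement (the protrusion, its boundary, and the gadget inserted), and moves to the next round.

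Correctness is immediate by telescoping: $\OPT$ is unchanged in every round, so $\OPT(G')=\OPT(G)$; the edge count strictly drops in every round that performs a replacement, so $\|G'\|\le\|G\|$ and there are at most $\|G\|$ rounds; and the halting condition is exactly the desired output guarantee. For the running time it is important that the intermediate graphs do not grow. Each has at most $\|G\|$ edges. Moreover, inspecting the proof of Lemma~\ref{lem:basicReplacer}, one may populate its replacement table with the gadget having the \emph{minimum number of vertices} for each signature (breaking ties by the number of edges); this changes nothing in the argument, since all that is used about a gadget is that it has the right signature and fewer than $\|G[X]\|$ edges, and $\|\mathrm{gadget}\|\le c_\F<\|G[X]\|$ still holds for a replaceable $X$ by the definition of $c_\F$ as the maximum number of edges over the (finite) table. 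With this choice, since the replaced subgraph $G[X]$ is itself a boundaried graph of the relevant signature, the inserted gadget has at most $|X|$ vertices, so replacement never increases the vertex count and $|G_i|\le|G|$ throughout. A single round then costs $\Oh(\|G_i\|\log\|G_i\|\cdot|G_i|^2)=\Oh(\|G\|\log\|G\|\cdot|G|^2)$: computing the components is linear; the calls to Lemma~\ref{lem:randomizedContractions} over the components $H$ of $G_i$ cost $\sum_H\Oh(\|H\|\log\|H\|\cdot|H|^2)$, which is $\Oh(\|G_i\|\log\|G_i\|\cdot|G_i|^2)$ because $\sum_H\|H\|=\|G_i\|$ and $|H|\le|G_i|$; and the replacement of Lemma~\ref{lem:basicReplacer} adds only linear time. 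Multiplying by the at most $\|G\|$ rounds gives the claimed bound $\Oh(\|G\|^2\log\|G\|\cdot|G|^2)$.

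For the solution‑lifting algorithm, note that every replacement performed above carries its own linear‑time solution‑lifting routine supplied by Lemma~\ref{lem:basicReplacer}. Given a set $F'$ with $G'-F'$ being $\F$-free, we process the recorded replacements in reverse order, $i=N,N-1,\dots,1$ (where $N$ is the number of rounds): starting from $F_N:=F'$, the routine attached to the $i$-th replacement turns a solution $F_i$ of $G_i$ into a solution $F_{i-1}$ of $G_{i-1}$ with $G_{i-1}-F_{i-1}$ being $\F$-free and $|F_{i-1}|\le|F_i|$. The final set $F:=F_0$ is then a solution of $G=G_0$ with $|F|\le|F'|$. Since the $i$-th step runs in time linear in $\|G_{i-1}\|\le\|G\|$ and there are at most $\|G\|$ steps, the whole lifting procedure runs in $\Oh(\|G\|^2)$ time.

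The proof is thus essentially a composition of Lemmas~\ref{lem:randomizedContractions} and~\ref{lem:basicReplacer}; the two points that need genuine care — and where I expect the only real work — are the size bookkeeping underlying the running‑time bound (in particular the observation that the replacement table may be chosen so that protrusion replacement never increases the vertex count, which keeps the intermediate graphs of size at most $\|G\|$ in edges and $|G|$ in vertices), and the faithful bookkeeping of vertex and edge identities when chaining the per‑replacement solution‑lifting routines in reverse order.
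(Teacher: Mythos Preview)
Your proposal is correct and takes essentially the same approach as the paper: iterate the detector of Lemma~\ref{lem:randomizedContractions} over components and the replacer of Lemma~\ref{lem:basicReplacer} until no component has an excessive protrusion, then chain the per-replacement solution-lifting routines in reverse. You are in fact more careful than the paper on one point: the paper asserts that ``the claimed running time follows'' without addressing why $|G_i|$ does not grow across replacements, and your fix of populating the replacement table by minimum vertex count (ties broken by edges) is a clean and valid way to guarantee $|G_i|\le|G|$ throughout.
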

\begin{proof}
Inspect every connected component $H$ of $G$, and to each of them apply the algorithm of Lemma~\ref{lem:randomizedContractions}, which runs in time $\Oh(\|G\|\cdot \log \|G\| \cdot |G|^2)$.
This algorithm either concludes that $H$ has no excessive protrusion, or finds some replaceable protrusion $X$ in $H$. Then $X$ is also a replaceable protrusion in $G$,
so by applying Lemma~\ref{lem:basicReplacer} to $X$ we can compute in linear time a new graph $G'$ with $\OPT(G')=\OPT(G)$ and $\|G'\|<\|G\|$. Having found $G'$, we can restart the whole algorithm on $G'$.
Eventually, the algorithm of Lemma~\ref{lem:randomizedContractions} concludes that each component has no excessive protrusions and can hence output $G$.
The solution-lifting algorithm follows by iteratively applying the solution-lifting algorithm of Lemma~\ref{lem:basicReplacer} for all the consecutive replacements performed above.

Since the number of edges strictly decreases in each iteration, the number of iterations is bounded by the number of edges of the original graph $G$.
Therefore, the claimed running time follows.
\end{proof}

\section{Constant-factor approximation}\label{sec:approx}
It would be ideal if just applying the Exhaustive Protrusion Replacement (Lemma~\ref{lem:exhaustProtrusions}) reduced the size of the graph to linear in $\OPT$.
Then, we would already have a linear kernel, and taking all its edges would yield a constant-factor approximation.
Unfortunately, there are graphs with no excessive protrusions, where the size is not bounded linearly in $\OPT$.
To see this, observe that even an arbitrarily large group of parallel edges is not a protrusion, so our current reduction rules will not reduce their multiplicity, even if they amount to 99\% of the graph.
Hence, we need to find a way to discover and account for such groups (we remark here that reducing each to $\Oh(\OPT)$ would be relatively easy, giving a quadratic kernel only, though). 
More generally, the structures that turn out to be problematic are large groups of constant-size $2$-protrusions attached to the same pair of vertices; 
a group of parallel edges is a degenerated case of this structure.
To describe the problematic structures formally, we introduce the notion of a {\em{bouquet}}.

Pruning bouquets and sets of parallel edges to constant size does not give an equivalent graph (because a larger bouquet may always require a larger number of edge deletions).
However, the edge set of the resulting pruned graph intersects some optimal solution of the original graph; this is because for any deleted element, pruning preserves some number of isomorphic elements.
Moreover, since the intersection is a solution for the pruned graph and the pruned graph has no bouquets, we can show that the number of edges of the pruned graph is linear in the size of the intersection.

Pruning thus gives a procedure that finds a subset of edges which intersects an optimal solution and such that the size of the subset is at most a constant factor larger than the size of this intersection.
By iteratively finding such a set and removing it, we obtain a solution that is at most a constant factor larger than the optimum.
In the next section we will leverage the obtained constant-factor approximation to reduce all bouquets at once, thus achieving a linear kernel.

\subsection{Bouquets}
Let us define the following constant (recall that $\MX=\max_{H\in \F}\|H\|$)
$$d_\F := \max\{2b_\F \cdot c_\F+2b_\F,\ 3\MX\}+1$$
We now introduce the notions of \emph{bouquets} and \emph{thetas}. Intuitively, a bouquet is a family of at least $d_\F$ isomorphic $2$-protrusions, while a theta is a set of at least $d_\F$ parallel edges.

\begin{definition}
Consider a graph $G$, a set $U\subseteq V(G)$ and a family of
2-protrusions $\{S_i\}_{i\in I}$ such that for each $i\in I$:
\begin{itemize}
\item $N(S_i)=U$ (implying $|U|\leq 2$); 
\item $G[S_i]$ is connected; and 
\item $G[U \cup S_i]$ is isomorphic to $G[U \cup S_j]$ for all $i,j \in I$,\\
with an isomorphism that maps each vertex of $U$ to itself.
\end{itemize}
We call such a family a \emph{bouquet attached to $U$} if it is maximal under inclusion (i.e. there is no proper superfamily which is also a bouquet) and has at least $d_\F$ elements.
The edge set of the bouquet is the set of all edges incident to some $S_i$.
\end{definition}
\begin{definition}
For two vertices $u,v\in V(G)$, a \emph{theta attached to $\{u,v\}$} is a set of edges between $u$ and $v$ that is maximal under inclusion and has at least $d_\F$ elements.
\end{definition}

The constant $d_\F$ is chosen so that a protrusion containing a set to which a bouquet (or theta) is attached is large enough to be excluded as an excessive protrusion, 
and so that any immersion of a graph of $\F$ cannot simultaneously intersect all elements of a bouquet. 
Indeed, in any immersion of some $H\in \F$ in a graph $G$, the image of an edge of $H$ is a path in $G$, which visits every vertex of the bouquet's attachment at most once, 
and hence intersects at most three elements of the bouquet.
Thus in total, the immersion model intersects at most $\max_{H\in \F} 3\|H\|=3\MX$ elements of the bouquet or theta, which is less than $d_\F$.

\medskip
We now show that the number of edges of a graph with no excessive protrusions, no bouquets and no thetas is linearly bounded in the optimum solution size, which formalizes the intuition that these structures are the only obstacles preventing the graph from being a linear kernel.

The following well-known notion and lemma are useful for proving such bounds.
For a rooted forest $T$ and a set $M\subseteq V(T)$, the {\em{least common ancestor closure}} ({\em{lca-closure}}) of $M$ is the set $\LCA(M)\subseteq V(T)$ obtained from $M$ by 
repeatedly adding to it the least common ancestor of every pair of nodes in the set (unless the nodes are in different connected components of the forest $T$).

\begin{lemma}[\!\!\cite{FominLMS12}]\label{lem:lca}
	Let $T$ be a rooted forest and $M\subseteq V(T)$.
	Then $|\LCA(M)|\leq 2 |M|$ and every connected component $C$ of $T - \LCA(M)$ has at most two neighbors in $T$.
\end{lemma}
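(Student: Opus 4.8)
The plan is to exploit the defining property of the least common ancestor closure. First I would observe that the closure process stabilizes after finitely many steps (it only adds vertices and $V(T)$ is finite) and that the resulting set $L=\LCA(M)$ is \emph{lca-closed}: for any two nodes $a,b\in L$ lying in the same tree of $T$, their least common ancestor also belongs to $L$. Since $M$, and hence $L$, splits across the trees of the forest, it suffices to prove $|L|\le 2|M|-1$ in the case where $T$ is a single tree; summing these per-tree bounds over the (at most $|M|$) trees that meet $M$ then yields $|\LCA(M)|\le 2|M|$.

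For the size bound in a single tree, I would record two elementary facts. (a) Every $v\in L$ has a descendant in $M$ lying in the subtree rooted at $v$ (this holds for $M$ itself and is preserved when one adjoins the lca of two such nodes). (b) If $v\in L\setminus M$, then during the closure $v$ arose as the lca of two nodes $a,b\in L$ with $v\notin\{a,b\}$; consequently $a$ and $b$ lie in the subtrees of two \emph{distinct} children of $v$, and by (a) each of these two subtrees contains a vertex of $M$. Call such a $v$ a \emph{branch node}. Then $L\subseteq M\cup B$, where $B$ is the set of branch nodes, and every branch node lies on some $M$--$M$ path, hence in the minimal subtree $S$ of $T$ spanning $M$. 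In $S$, rooted at the least common ancestor of $M$, every leaf belongs to $M$, so $S$ has at most $|M|$ leaves; since a tree with $\ell$ leaves has at most $\ell-2$ vertices of degree at least three, and since every branch node other than the root of $S$ has degree at least three in $S$, one gets $|B|\le |M|-1$. Hence $|L|\le |M|+(|M|-1)=2|M|-1$, as required. (Alternatively, $|B|\le |M|-1$ can be proved by a short induction on $|M|$.)

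For the second assertion I would take a connected component $C$ of $T-\LCA(M)$. Being a connected subgraph of a forest, $C$ is a subtree lying inside one tree of $T$, and the least common ancestor of any two of its nodes again lies in $C$; thus $C$ has a unique topmost node $c$, and every node of $C$ is a descendant of $c$. Every neighbor of $C$ in $T$ belongs to $\LCA(M)$ and is either the parent $\pi(c)$ of $c$ (at most one such, and only when $c$ is not a root of $T$), or an in-$T$-child of some node of $C$ that itself avoids $C$. If there were two of the latter, say $d_1,d_2\in\LCA(M)$ with parents $u_1,u_2\in C$, then, because $C$ is a connected subtree containing $u_1$ and $u_2$, the lca of $d_1$ and $d_2$ would coincide with the lca of $u_1$ and $u_2$ (or with $u_1$ when $u_1=u_2$), which lies in $C$ and hence outside $\LCA(M)$ -- contradicting lca-closedness of $\LCA(M)$. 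So $C$ has at most one child-neighbor, and therefore at most two neighbors in $T$ altogether.

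The only mildly delicate step is the counting in the second paragraph: translating the rooted notion of branch node into unrooted vertex degrees in the Steiner tree $S$ and invoking the bound on the number of degree-$\ge 3$ vertices, with the single ``$+1$'' correction for the root of $S$ (or, instead, carrying out the induction on $|M|$). Everything else is routine bookkeeping with the lca-closure property.
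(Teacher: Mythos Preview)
The paper does not give its own proof of this lemma; it is stated with a citation to~\cite{FominLMS12} and used as a black box. So there is no ``paper's proof'' to compare against, and the relevant question is simply whether your argument is correct. It is.

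Your size bound is the standard one: nodes of $\LCA(M)\setminus M$ are branching points of the Steiner tree $S$ of $M$, and the handshake identity in a tree with at most $|M|$ leaves bounds the number of vertices of degree at least three by $|M|-2$, with one extra allowed for the root. The only small remarks are cosmetic. First, your per-tree bound $2|M_i|-1$ already sums to at most $2|M|-1$, so the weakening to $2|M|$ in the reduction-to-a-single-tree step is unnecessary (though harmless). Second, the parenthetical ``or with $u_1$ when $u_1=u_2$'' is redundant, since in that case $\mathrm{lca}(u_1,u_2)=u_1$ anyway.

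For the second assertion, your argument is clean and complete: the key observation that $\mathrm{lca}(d_1,d_2)=\mathrm{lca}(u_1,u_2)$ when $u_1\neq u_2$ (and $=u_1$ when $u_1=u_2$) is exactly right, since $C$ being a connected subtree disjoint from $\LCA(M)$ forces the lca of any two of its nodes to stay in $C$, and the $d_i$ being children of the $u_i$ cannot pull the lca strictly below. The contradiction with lca-closedness then follows immediately.
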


\begin{lemma}\label{lem:nobouquetsLinear}
	Let $G$ be a connected graph with no excessive protrusions, no bouquets and no thetas.
	Then either $G$ is $\F$-free, or $\|G\| \leq c \cdot \OPT(G)$ for some constant $c$ depending on $\F$ only.
\end{lemma}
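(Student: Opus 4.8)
If $G$ is $\F$-free we are in the first case, so assume it is not and fix an optimum solution $F\subseteq E(G)$, so $|F|=\OPT(G)\geq 1$; it suffices to show $\|G\|\leq c|F|$ for a suitable $c=c(\F)$. Since $G-F$ is $\F$-free, Corollary~\ref{corol:neat} gives a neat tree-cut decomposition $\T=(T,\mathcal{X})$ of $G-F$ with $\width'(\T)\leq b_\F$. Let $M\subseteq V(T)$ consist of the nodes whose bag contains an endpoint of an edge of $F$; then $|M|\leq 2|F|$, and since bags form a near-partition, every edge of $F$ has both endpoints in $X_M:=\bigcup_{t\in M}X_t$. Put $L=\LCA(M)$, so $|L|\leq 2|M|\leq 4|F|$ and, by Lemma~\ref{lem:lca}, every connected component $C$ of $T-L$ has at most two neighbours in $T$. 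As $\|G\|=\|G-F\|+|F|$, the whole task is to bound $\|G-F\|$ by $O(|F|)$. I would split the edges of $G-F$ into three groups: those inside a single bag $X_t$ with $t\in L$, those inside $X_C:=\bigcup_{t\in V(C)}X_t$ for a component $C$ of $T-L$, and the remaining ``crossing'' edges between distinct such parts.

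First I would show that each $X_C$ is a $2b_\F$-protrusion of $G$ that is not excessive, hence $\|G[X_C]\|\leq 2b_\F\cdot c_\F$. Indeed, $C$ has no marked node, so $X_C$ is disjoint from $X_M$ and no edge of $F$ is incident to $X_C$; thus $G[X_C]=(G-F)[X_C]$ is $\F$-free and $\delta_G(X_C)=\delta_{G-F}(X_C)$ is contained in the union of the (at most two) adhesions of the tree-edges leaving $C$, so $|\delta_G(X_C)|\leq 2b_\F$. That $G-X_C$ has at most two connected components follows from connectedness of $\T$ together with $G$ being connected: the (at most two) ``sides'' of $T$ reached from $C$ induce connected subgraphs of $G-F$, hence of $G$, all $G$-neighbours of $X_C$ lie in their union, and every component of $G-X_C$ must meet one of them. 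For $t\in L$ we have $|X_t|\leq w_\T(t)\leq b_\F$, and since $G$ has no thetas there are fewer than $d_\F$ parallel edges between any two vertices, so $\|G[X_t]\|<\binom{b_\F}{2}d_\F$. Thus the first two groups contribute at most $O(|L|)+2b_\F c_\F\cdot(\text{number of components of }T-L)$ edges.

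The crux, and the main obstacle, is to bound the number of components of $T-L$ \emph{linearly} in $|L|$ rather than just polynomially. Components with two neighbours in $L$ become, after contracting each component of $T-L$ to a point, distinct edges of a forest on $L$ (two of them on the same pair of nodes would create a cycle in $T$), so there are at most $|L|$ of them. For a component $C$ with a single neighbour $p\in L$, I would argue that $p$ can be the $L$-neighbour of only a constant number of them. By Corollary~\ref{corol:neatAdhesions}, all but at most $2b_\F+1$ components of $T-p$ are connected to $p$ by a neat adhesion; if such a component $C$ is moreover a child-subtree of $p$ disjoint from $L$, then $X_C$ is a $2$-protrusion with $G[X_C]$ connected, $N_G(X_C)\subseteq X_p$, $|N_G(X_C)|\leq 2$, and (being non-excessive, as above) $\|G[X_C]\|\leq 2b_\F c_\F$. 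Hence $G[N_G(X_C)\cup X_C]$, with the $\leq 2$ vertices of $N_G(X_C)$ marked, has size bounded by a constant depending on $\F$, so it realizes one of constantly many isomorphism types, and $N_G(X_C)$ is one of at most $\binom{b_\F}{2}+b_\F+1$ subsets of $X_p$. If $d_\F$ such components shared a type and an attachment set, the inclusion-wise maximal family of $2$-protrusions containing them would be a bouquet attached to that set --- contradicting the hypothesis. Summing over $L$, the number of components of $T-L$ is $O(|L|)=O(|F|)$.

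Finally, contract each component of $T-L$ to a single node, obtaining a forest $\widehat T$ whose vertex set is $L$ together with the contracted nodes; by the previous step $|V(\widehat T)|=O(|F|)$, hence $|E(\widehat T)|=O(|F|)$, and each edge of $\widehat T$ comes from a distinct tree-edge of $T$. A crossing edge of $G-F$ joins two distinct parts and can be charged to one edge $\widehat e$ on the $\widehat T$-path between them; all edges charged to $\widehat e$ lie in the adhesion of the corresponding tree-edge, which has at most $b_\F$ edges, so there are at most $b_\F\cdot|E(\widehat T)|=O(|F|)$ crossing edges. Adding the three contributions yields $\|G-F\|=O(|F|)$, hence $\|G\|=O(\OPT(G))$, with the hidden constant depending only on $b_\F$, $c_\F$, $d_\F$ and the number of gadget isomorphism types --- all functions of $\F$. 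Here the no-theta assumption is used only to control parallel edges within a bag, the no-excessive-protrusion assumption bounds each $\|G[X_C]\|$, and the absence of bouquets is precisely what upgrades the component count (and hence the kernel size) from quadratic to linear.
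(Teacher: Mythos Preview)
Your proposal is correct and follows essentially the same approach as the paper: take an optimum solution $F$, build a neat tree-cut decomposition of $G-F$ of width $\le b_\F$, mark the lca-closure $L$ of the bags touching $F$, show each component of $T-L$ yields a non-excessive $2b_\F$-protrusion (hence at most $2b_\F c_\F$ edges), bound $2$-neighbour components by $|L|$ via the forest argument, bound $1$-neighbour components at each $p\in L$ by a constant using Corollary~\ref{corol:neatAdhesions} together with the pigeonhole on attachment sets and isomorphism types and the no-bouquet hypothesis, and finally bound the within-$X_t$ edges (no thetas) and the crossing edges (adhesions of size $\le b_\F$). Your explicit contraction to $\widehat T$ for the crossing-edge count is a cosmetic variant of the paper's direct count of edges ``between parts'', but the substance is identical.
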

\begin{proof}
	Denote $k:=\OPT(G)$ and suppose $G$ is not $\F$-free, that is, $k \geq 1$.
	Let $F\subseteq E(G)$ be a set of $k$ edges such that $G - F$ is $\F$-free.
	By Corollary~\ref{corol:neat}, $G - F$ has a neat tree-cut decomposition $(T,\mathcal{X} = \{X_t, : t \in V(T)\})$ with $\width'(T,\mathcal{X})\leq b_\F$.
	Let $M\subseteq V(T)$ be the lca-closure of the set of the nodes of $T$ that correspond to the bags which contain some vertices incident to edges of $F$.
	Since $|F|\leq k$ there are at most $2k$ such bags and, by Lemma~\ref{lem:lca}, we have that $|M| \leq 4k$ and that every connected component of $T-M$ has at most two neighbors in $T$;
	see Figure~\ref{fig:kernel}.
	Recall that by $X_{T'}$ we denote the union of bags at the nodes of a subtree $T'$ of~$T$.
	
	\begin{claim}\label{cl:rest-protr}
	For any connected component $T'$ of $T-M$,\ $\|G[X_{T'}]\| \leq 2b_\F \cdot c_\F.$
	\end{claim}
	\begin{proof}
	Suppose to the contrary that $\|G[X_{T'}]\| > 2b_\F \cdot c_\F$.
	We verify that then $X_{T'}$ is an excessive protrusion.
	Indeed, $X_{T'}$ has no vertices incident to $F$, so $G[X_{T'}]$ is $\F$-free.
	Moreover, $T'$ has at most two neighbors in $T$, so $|\delta_G(X_{T'})|\leq 2b_\F$ and $T - V(T')$ has at most two components adjacent to $T'$.
	By the properties of neat decompositions, 
	the unions of bags of these two components of $T - V(T')$ induce at most two connected components in $G-F$; in other words, $G-F-X_{T'}$ has at most two connected components adjacent to $X_{T'}$, say $C_1, C_2$.
	Then $N_{G-F}(X_{T'}) \subseteq C_1 \cup C_2$ and since $X_{T'}$ has no vertices incident to $F$, also  $N_G(X_{T'}) \subseteq C_1 \cup C_2$.
	Since $G$ is connected, this implies $G-X_{T'}$ has at most two connected components (because every vertex of $G-X_{T'}$ has a path connecting it to $X_{T'}$ in $G$, which must visit $N_G(X_{T'})$).
	This shows that  $X_{T'}$ is an excessive protrusion, contradicting assumptions.
	\cqed\end{proof}
	
	\noindent We set $c':=2b_\F \cdot c_\F$.
	
	Each connected component of $T - M$ has exactly one or exactly two neighbors in $M$;
	it cannot have zero neighbors in $M$, as it would then induce a component in $G-F$ with no vertices incident to $F$, contradicting that $G$ is connected and not $\F$-free.
	Observe that the number of components that have exactly two neighbors in $M$ is at most $|M|-1$,
	because replacing each such component with an edge connecting its neighbors yields a forest with vertex set $M$.
	It remains to bound the number of components in $T - M$ with exactly one neighbor in $M$.
	
	Suppose $T_1,\ldots,T_p$ are those connected components of $T - M$ for which the neighborhood in $T$ is exactly $t$, for some $t\in M$.
	Again, by $X_{T_i}$ we denote the union of the bags at the nodes of $T_i$.
	By Corollary~\ref{corol:neatAdhesions}, at most $2b_\F+1$ of them are not connected with neat adhesions to $t$, so assume w.l.o.g. that $T_1,\dots,T_{p-2b_\F-1}$ are.
	That is, $N_G(X_{T_i})$ is a non-empty subset of $X_t$ of size at most two, for all $i=1,\dots,p-2b_\F-1$.
	Since there are at most $b_\F^2$ such subsets of size at most $2$,
	at least $\frac{p-2b_\F-1}{b_\F^2} \geq p/b_\F^2 - 3$ of the sets $X_{T_i}$ have the same neighborhood $U$ in $G$.
	By the neatness of the decomposition, 
	each $G[X_{T_i}]$ is connected and moreover $\|G[X_{T_i}]\| \leq c'$  by Claim~\ref{cl:rest-protr}. This implies in particular that $|X_{T_i} \cup U| \leq c'  +3$.
	This means that there are at most $(c'+4)^{2c'}$ possible isomorphism types for $G[X_{T_i} \cup U]$.
	If there were at least $d_\F$ components with the same isomorphism type, they would form a bouquet.
	Hence $p/b_\F^2 - 3 \leq (c'+4)^{2c'} \cdot d_\F$,
	meaning that
\[
p \leq ((c'+4)^{2c'} \cdot d_\F + 3) \cdot b_\F^2.
\]
We define $c'':=((c'+4)^{2c'} \cdot d_\F + 3) \cdot b_\F^2$.
	
	Therefore, $T-M$ is partitioned into at most $|M|-1 + c'' \cdot |M| \leq (c'' + 1)\cdot |M|$ connected components.
	By Claim~\ref{cl:rest-protr}, for each of these components, the vertices contained in its bags induce a subgraph with at most $c'$ edges.
	In addition to these edges, the edge set of $G$ contains only: 
	\begin{itemize}
	\item $k$ edges of the deletion set $F$;
	\item $d_\F \cdot b_\F^2$ edges in $G[X_t]$ for each $t\in M$ ($G[X_t]$ has at most $b_\F$ vertices and every pair has less than $d_\F$ edges in between, as $G$ has no thetas); and
	\item up to $\left((c'' +1)\cdot|M|+|M|\right) \cdot  b_\F=(c''+2)\cdot |M|\cdot b_\F$ edges between parts of the partition of $V(T)$ given by individual elements of $M$ 
	      and connected components of $T - M$ (each edge of $T$ between different parts yields at most $b_\F$ edges).
	\end{itemize}
	Since $|M|\leq 4k$, we infer that the number of edges in $G$ is at most
\[
4(c'' + 1)k \cdot c'\ +\ k\ +\ 4k\cdot d_\F \cdot b_\F^2\ +\ 4(c'' +2)k \cdot  b_\F \ =\ c'''\cdot k,
\]
for a constant $c''':=4c'(c'' + 1)+1+4 d_\F \cdot b_\F^2+4b_\F(c'' +2)$.
\end{proof}

\begin{figure}[t]
	\centering
	{\input{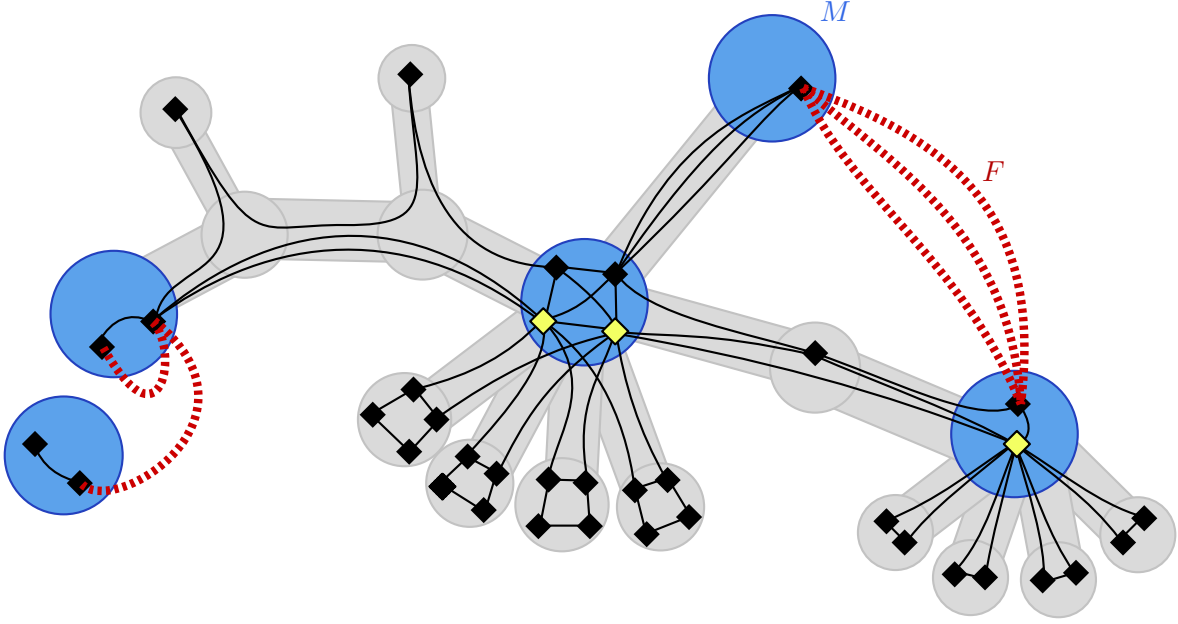}}
	\caption{
		A graph $G$ with a solution $F$ (five red dotted edges), in a tree-cut decomposition of $G-F$.
		The set $M$ used in Lemma~\ref{lem:nobouquetsLinear} is highlighted in blue: three bags incident to $F$, a fourth central one in their lca-closure. 
		Remaining components turn out to have sizes bounded by a constant (light gray). Two bouquets are present, their attachments visible as light yellow vertices.}
	\label{fig:kernel}
\end{figure}

\subsection{Finding a constant-factor approximation piece by piece}
To handle bouquets and thetas algorithmically, we first show that they are disjoint, as otherwise they would constitute a large protrusion.
In this subsection, we frequently use the observation that, 
in a connected graph, a 2-protrusion with more than $2b_\F \cdot c_\F$ edges is an excessive protrusion. 
Indeed, if $X$ is a $2$-protrusion in a connected graph $G$, then it is always the case that $G-X$ has at most two connected components, due to $|\delta(X)|\leq 2$.

\begin{lemma}\label{lem:disjoint}
	Let $G$ be a connected graph with no excessive protrusions.
	Then every two bouquets and/or thetas in $G$ have disjoint edge sets.
	Furthermore, if a bouquet or theta is attached to $U\subseteq V(G)$, then $U$ is disjoint with all elements of any bouquet.
\end{lemma}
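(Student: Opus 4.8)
The plan is to prove both statements by contradiction, in each case showing that a violation would assemble into a $2b_\F$-protrusion with too many edges, which in a connected graph forces an excessive protrusion and contradicts the hypothesis.

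First I would set up notation: let $Y_1$ and $Y_2$ be the edge sets of two distinct bouquets/thetas whose intersection is nonempty, and let $U_1,U_2$ be their respective attachment sets (for a theta attached to $\{u,v\}$ we take $U=\{u,v\}$). Observe that every element $S$ of a bouquet (or every parallel edge of a theta) together with its attachment induces a connected subgraph with at most $\max\{c_\F+2,3\}\le d_\F$ edges, and $G[S]$ is $\F$-free since $|S|\le c_\F$; moreover the total number of edges incident to $S$ is bounded by a constant, because $|U|\le 2$ so there are at most $2\|G[U\cup S]\|$ such edges. I would first dispose of the case $U_1=U_2$: if two bouquets (or a bouquet and a theta, or two thetas) were attached to the same set $U$ and shared an element/edge, then by maximality under inclusion they would actually be the same family — so distinctness forces the shared edge to behave inconsistently, an immediate contradiction. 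Hence we may assume $U_1\ne U_2$.

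Now the key step: if $Y_1\cap Y_2\ne\emptyset$, some edge $e$ lies in a protrusion element $S\subseteq$ the first structure and also is incident to (or part of) the second structure. Since each $S_i$ (resp. parallel edge) has constant size and constant-size neighborhood contained in $U_1$ (resp. $U_2$), and since $|U_1|,|U_2|\le 2$, the set $X:=U_1\cup U_2\cup \bigcup_i S_i$ ranging over all elements of \emph{both} structures is a candidate protrusion. Its boundary $\delta(X)$ is small: every edge leaving $X$ must leave through $U_1\cup U_2$ towards the rest of the graph, but the elements of a bouquet have \emph{all} their neighborhood inside $U$, so the only edges crossing $\delta(X)$ are those incident to $U_1\cup U_2$ that go to vertices outside — and there cannot be too many because each bouquet/theta element contributes nothing to $\delta(X)$. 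More carefully, I would argue $G[X]$ is $\F$-free: any immersion model of some $H\in\F$ inside $G[X]$ would, as noted in the paragraph following the definition of $d_\F$, intersect at most $3\|H\|\le 3\MX<d_\F$ elements of each bouquet/theta, so it could be routed to avoid at least one element of each, contradicting that bouquets/thetas are genuine obstructions only through size — here instead I would use the cleaner route: $X$ decomposes as $U_1\cup U_2$ plus many disjoint small $\F$-free pieces each with neighborhood of size $\le 2$, so $\tctw(G[X])$ is bounded, hence $G[X]$ is $\F$-free provided it has bounded tree-cut width below $a_\F$; if not, I instead directly note that deleting a constant number of edges around $U_1\cup U_2$ destroys any immersion, so some element can be dropped, lowering the count — this is the point to be careful. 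Either way $X$ (or $X$ minus finitely many elements) is a $2b_\F$-protrusion. Since both $Y_1,Y_2$ have at least $d_\F$ elements each of size controlled but \emph{number} unbounded is not needed — rather, $\|G[X]\|\ge d_\F > 2b_\F\cdot c_\F+2b_\F$ by the choice of $d_\F$, so $X$ is an excessive protrusion in the connected graph $G$, contradiction.

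For the furthermore statement, suppose a bouquet or theta is attached to $U$ and some vertex $u\in U$ also lies in an element $S$ of some bouquet. Then $u$ has, on one hand, edges into $S$ (an element of the second bouquet, with $N(S)$ equal to that bouquet's attachment), and on the other hand the first structure attached to $U$ has $\ge d_\F$ elements all adjacent to $u$. I would again bundle $U$ together with the element $S$ and with enough elements of the first structure to form a set $X'$ with $|\delta(X')|\le 2b_\F$, $G[X']$ $\F$-free (same tree-cut-width argument: it is $U$ plus constantly-many constant-size $\F$-free pieces with boundary $\le 2$), and $\|G[X']\|\ge d_\F>2b_\F\cdot c_\F+2b_\F$, hence excessive — contradiction. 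The main obstacle I anticipate is verifying cleanly that the assembled set $G[X]$ is genuinely $\F$-free (as opposed to merely having bounded tree-cut width): the honest argument is that a large disjoint union of $\F$-free constant-size gadgets, glued at a $\le 2$-vertex interface, still excludes every connected $H\in\F$, because a connected immersion model of $H$ can use at most $3\|H\|<d_\F$ of the gadgets and these few gadgets plus $U$ form an $\F$-free subgraph by the original bouquet element being $\F$-free together with $|U|\le 2$ — this last sub-argument needs the connectivity of the members of $\F$ and the bound $3\MX<d_\F$, exactly as foreshadowed after the definition of $d_\F$, so it is available.
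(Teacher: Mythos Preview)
Your approach has a fundamental gap: the set $X := U_1\cup U_2\cup \bigcup_i S_i$ that you assemble need not be a $2b_\F$-protrusion, because $|\delta(X)|$ is not bounded. You correctly observe that every edge leaving $X$ must leave through $U_1\cup U_2$, but nothing in the definition of a bouquet or theta bounds the degree of the attachment vertices into the rest of the graph. For instance, if a theta is attached to $\{u,v\}$, each of $u,v$ may have arbitrarily many edges to $V(G)\setminus X$; the theta only says there are $\ge d_\F$ edges \emph{between} $u$ and $v$. So the step ``there cannot be too many because each bouquet/theta element contributes nothing to $\delta(X)$'' is a non-sequitur, and your candidate excessive protrusion may have enormous boundary. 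The same issue recurs in your argument for the ``furthermore'' clause, where you again include $U$ in the set $X'$.

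The paper's proof avoids this entirely by never placing an attachment set inside the would-be protrusion. Instead it works with a \emph{single} element $Y_j$ of a bouquet, which by definition is already a $2$-protrusion ($|\delta(Y_j)|\le 2$). The key observation is that the two vertices of an attachment $U_X$ are linked by $\ge d_\F$ edge-disjoint paths, so $U_X$ cannot be split by the size-$2$ cut $\delta(Y_j)$; hence if $U_X$ meets $Y_j$ then $U_X\subseteq Y_j$, and then all but at most two elements of the structure attached at $U_X$ are trapped inside $Y_j$, forcing $\|G[Y_j]\|\ge d_\F-2 > 2b_\F c_\F$. This makes $Y_j$ itself excessive. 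The ``furthermore'' part is thus proved first and then used to handle the edge-disjointness of two bouquets via a clean $X_i=Y_j$ argument and maximality. Your $\F$-freeness worries also disappear in this approach, since $G[Y_j]$ is $\F$-free by the definition of a bouquet element.
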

\begin{proof}
	Suppose a bouquet or theta $\mathcal{X}$ is attached to $U_X$ and a bouquet $\mathcal{Y} = \{Y_j\}_{j\in J}$ is attached to $U_Y$.
	Observe that either $|U_X| = 1$ or the two vertices of $U_X$ are joined by at least $d_\F > 2$ edge-disjoint paths in $G$;
	so in any case all of $U_X$ is on one side of the cut $\delta(Y_j)$, for each $j\in J$.
	If $U_X \subseteq Y_j$ for some $j$, then all but possibly two elements of $\mathcal{X}$ would be contained in the side $Y_j$ of this cut.
	Hence $Y_j$ would be a 2-protrusion with $\|G[Y_j]\| \geq d_\F-2 > 2b_\F \cdot c_\F$ and thus an excessive protrusion, contradicting assumptions.
	We infer that $U_X$ is disjoint with the elements of any bouquet, which concludes the second part of the claim.
	
	To show the first part of the claim, first notice that two thetas cannot have intersecting edge sets, as they are maximal sets of parallel edges.
	Secondly, if a theta contained an edge from the edge set of a bouquet $\mathcal{Y} = \{Y_j\}_{j\in J}$, then either it would be an edge of $\delta(Y_j)$ for some $j\in J$, contradicting $|\delta(Y_j)|\leq 2$;
	or it would be an edge of $G[Y_j]$, again implying that $\|G[Y_j]\| > 2b_\F \cdot c_\F$ and that thus $Y_j$ is an excessive protrusion, contradicting assumptions.
	
	Finally, consider the case $\mathcal{X}=\{X_i\}_{i\in I}$ is a bouquet attached to $U_X$ and  $\mathcal{Y} = \{Y_j\}_{j\in J}$ is a bouquet attached to $U_Y$.	
	Let $X=\bigcup_{i\in I} X_i$ and $Y=\bigcup_{j\in J} Y_j$. 
	We already showed that $U_X\cap Y=\emptyset$ and symmetrically $U_Y\cap X=\emptyset$.
	Therefore $U_X, U_Y \subseteq V(G)\setminus (X\cup Y)$.
	We infer that the edge-sets of the two bouquets can intersect only if the sets $X$ and $Y$ intersect.
	Hence $X_i \cap Y_j \neq \emptyset$ for some $i\in I,j \in J$.
	If there was an edge between some $u\in X_i \cap Y_j$ and some $v\in Y_j \setminus X_i$, then $v \in N(X_i) = U_X$ and $v\in Y_j$, contradicting $U_X \cap Y = \emptyset$.
	Hence there is no edge between $X_i \cap Y_j$ and $Y_j \setminus X_i$.
	Since $Y_j$ is connected and $X_i\cap Y_j$ is non-empty, this implies $Y_j \subseteq X_i$.
	A symmetric reasoning yields that $X_i \subseteq Y_j$, so $X_i = Y_j$.
	But then $U_X=U_Y$ and all the elements of both bouquets are pairwise isomorphic with an isomorphism that fixes $U_X=U_Y$.
	We infer that the union of the two bouquets is also a bouquet.
	Hence, by maximality of bouquets, we conclude that $\{X_i\}_{i\in I}$ and $\{Y_j\}_{j\in J}$ are in fact the same bouquet.
\end{proof}

We now proceed to formalizing the procedure of pruning all bouquets and thetas to constant size. 
We show that the remaining edge set, named $\Delta$, has the property that its removal from $G$ strictly decreases $\OPT(G)$ and its size is linear in that decrement.

\begin{lemma}\label{lem:findEdge}
	Given a connected graph $G$ with no excessive protrusion that is not $\F$-free, one can find in in time $\Oh(|G|^3)$ a set $\Delta \subseteq E(G)$ such that (for some $c$ depending on $\F$ only):
	$$\OPT(G-\Delta)<\OPT(G)\text{\quad and \quad}|\Delta|\leq c \cdot (\OPT(G) - \OPT(G-\Delta)).$$
\end{lemma}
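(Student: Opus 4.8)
\emph{Proof plan.} The idea is to \emph{prune}. Let $G^{\flat}$ be obtained from $G$ by keeping, for every bouquet, all edges incident to a fixed choice of $r_\F$ of its elements and discarding all edges incident to the remaining elements, and likewise keeping, for every theta, a fixed choice of $r_\F$ of its parallel edges and discarding the rest; every edge of $G$ that lies in no bouquet and in no theta is kept. Here $r_\F$ is a suitably chosen constant depending only on $\F$, taken smaller than $d_\F$ so that $G^{\flat}$ contains no bouquet and no theta (and large enough to absorb the degenerate case in which some member of $\F$ is so small that a theta fails to be $\F$-free). By Lemma~\ref{lem:disjoint} the bouquets and thetas have pairwise disjoint edge sets, so $G^{\flat}$ is well defined, and it can be computed within the allotted time: the thetas are read off directly, while the bouquets are found by enumerating, for each vertex and each pair of vertices $U$, the connected $\F$-free $2$-protrusions attached to $U$ --- each with at most $2b_\F c_\F$ edges since $G$ has no excessive protrusion, hence boundedly many up to isomorphism --- and grouping them by isomorphism type with $U$ pointwise fixed (using Lemma~\ref{lem:ffreeFpt} for the $\F$-freeness tests). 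We set $\Delta := E(G^{\flat})$, so that $G-\Delta$ is the subgraph of $G$ carrying exactly the discarded edges.

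Fix an optimal solution $F^{\star}$ of $G$. Both claimed inequalities will follow from (a) $|\Delta|\le c\cdot|F^{\star}\cap\Delta|$ and (b) $F^{\star}\cap\Delta\neq\emptyset$. Indeed, $(G-\Delta)-(F^{\star}\setminus\Delta)=G-(\Delta\cup F^{\star})$ is a subgraph of the $\F$-free graph $G-F^{\star}$, hence $\F$-free, so $\OPT(G-\Delta)\le|F^{\star}\setminus\Delta|=\OPT(G)-|F^{\star}\cap\Delta|$. Therefore (b) gives $\OPT(G-\Delta)<\OPT(G)$, and $\OPT(G)-\OPT(G-\Delta)\ge|F^{\star}\cap\Delta|$ together with (a) gives $|\Delta|\le c\cdot(\OPT(G)-\OPT(G-\Delta))$.

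For (b): if $G^{\flat}$ is not $\F$-free, then no solution $F$ of $G$ can be disjoint from $E(G^{\flat})=\Delta$, for otherwise $G^{\flat}\subseteq G-F$ would be $\F$-free; hence $F^{\star}\cap\Delta\neq\emptyset$ for every optimal $F^{\star}$. (If $G^{\flat}$ happens to be $\F$-free one argues separately: a connected $\F$-free graph with more than $2b_\F c_\F$ edges is itself an excessive protrusion, so the non-discarded part is then of bounded size, and one takes $\Delta=E(G)$, with $G-\Delta$ edgeless and $|\Delta|=\|G\|\le c\cdot\OPT(G)$ by Lemma~\ref{lem:nobouquetsLinear}.) For (a): the set $F^{\star}\cap\Delta=F^{\star}\cap E(G^{\flat})$ is a solution of $G^{\flat}$, simply because $G^{\flat}-(F^{\star}\cap E(G^{\flat}))=G^{\flat}-F^{\star}$ is a subgraph of $G-F^{\star}$. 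Since $G^{\flat}$ has no bouquet and no theta, and --- as one verifies --- pruning introduces no excessive protrusion (up to isolated vertices created by discarded elements, $G^{\flat}$ is an induced part of $G$ together with the retained representatives), Lemma~\ref{lem:nobouquetsLinear} applies to the connected components of $G^{\flat}$ and yields $\|G^{\flat}\|\le c\cdot\OPT(G^{\flat})\le c\cdot|F^{\star}\cap\Delta|$, which is (a).

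The step I expect to be the main obstacle is verifying that $G^{\flat}$ inherits the ``no excessive protrusion'' property from $G$, so that Lemma~\ref{lem:nobouquetsLinear} may legitimately be invoked on $G^{\flat}$: the boundary in $G^{\flat}$ of a putative large $\F$-free part with few boundary edges may differ from its boundary in $G$ precisely by edges going to the discarded bouquet/theta elements. Handling this requires exploiting that $d_\F>2b_\F c_\F$ together with a neat tree-cut decomposition of $G-F^{\star}$ (Corollary~\ref{corol:neat}) and its lca-closure (Lemma~\ref{lem:lca}) --- essentially re-running the structural analysis of Lemma~\ref{lem:nobouquetsLinear} --- to show that any such part was already a protrusion in $G$, so that it cannot be large. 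Once this is in place, the bound $\|G^{\flat}\|\le c\cdot\OPT(G^{\flat})$ is just a re-application of the counting in Lemma~\ref{lem:nobouquetsLinear}, the running time follows from the enumeration of bouquets and thetas described above, and the degenerate small-$\F$ case is absorbed into the choice of $r_\F$.
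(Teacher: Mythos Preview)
Your overall plan---prune every bouquet and theta down to a bounded number of copies, take $\Delta$ to be the edge set of the pruned graph $G^{\flat}$, observe that $F^{\star}\cap\Delta$ is a solution for $G^{\flat}$, and invoke Lemma~\ref{lem:nobouquetsLinear}---is exactly the paper's approach. Two steps, however, are not right as written.

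First, your fallback for the case ``$G^{\flat}$ is $\F$-free'' is broken: you end by applying Lemma~\ref{lem:nobouquetsLinear} to $G$, but $G$ may well have bouquets and thetas, so that lemma does not apply; and the sentence before it conflates $G^{\flat}$ (a non-induced subgraph) with the induced subgraph on its vertex set, which need not be $\F$-free. The correct move is to show this case cannot occur: with $r_\F=d_\F-1$, any immersion model of a member of $\F$ in $G$ touches at most $3\MX<d_\F$ elements of any single bouquet or theta, so it can be rerouted through the kept copies; hence $G^{\flat}$ inherits non-$\F$-freeness from $G$. Second, for the obstacle you correctly flag---that $G^{\flat}$ has no excessive protrusion---your proposed route via a neat tree-cut decomposition of $G-F^{\star}$ and the lca-closure does not lead anywhere: that machinery \emph{uses} absence of excessive protrusions to bound component sizes, it does not establish it, so you would be arguing in a circle. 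The paper's argument is direct and never looks at $F^{\star}$: if $S$ were an excessive protrusion in $G^{\flat}$, then because each bouquet/theta still has $d_\F-1>2b_\F$ kept copies joining its attachment $U$ by edge-disjoint paths, every such $U$ lies entirely on one side of the cut $\delta_{G^{\flat}}(S)$; adding back to $S$ all discarded elements whose attachment fell inside $S$ yields a set $S^*$ with $\delta_G(S^*)=\delta_{G^{\flat}}(S)$, and one checks (via the same rerouting) that $G[S^*]$ is $\F$-free, so $S^*$ is an excessive protrusion in $G$---contradiction. The paper does this in two passes (thetas first, then bouquets found \emph{in the theta-pruned graph}), which also disposes of a subtlety your sketch skips: pruning could in principle create new bouquets, and one must argue it does not.
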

\begin{proof}
	The algorithm finds all bouquets and thetas and deletes all but $d_\F-1$ elements from each.
	More precisely, the algorithm first deletes all but $d_\F-1$ edges from each theta in $G$, resulting in a subgraph $G'$.
	Note that since $d_\F-1>2b_\F$, this reduction of thetas cannot introduce excessive protrusions in the graph and hence remaining bouquets are disjoint in the sense of Lemma~\ref{lem:disjoint}. 
	The algorithm then sets $V' := V(G') = V(G)$, finds all bouquets in $G'$ and deletes from $V'$ all vertices of all but $d_\F-1$ elements of each bouquet.
	The algorithm then outputs $\Delta := E(G'[V'])$.
	
	Bouquets in $G'$ can be found by checking all possible attachments $U$ of size at most $2$ and all components of $G-U$ containing at most $2b_\F \cdot c_\F$ edges 
	(2-protrusions cannot have more edges, as they would form excessive protrusions otherwise).
	There are $\Oh(|G|^2)$ possible attachments $U$ and checking all components of $G-U$ for any $U$ takes time $\Oh(|G|)$, hence the running time follows.
	
	To prove that $\Delta = E(G'[V'])$ has the claimed properties, let us first show that $G'[V']$ is not $\F$-free; we use the following slightly more general statement later.
	\begin{claim}\label{cl:notFfree}
		Let $S\subseteq V(G)$.
		If $G'[V' \cap S]$ is $\F$-free, then so is $G'[S]$.
		If $G'[S]$ is $\F$-free, then so is $G[S]$.	
		In particular, $G'[V']$ is not $\F$-free.
	\end{claim}
	\begin{proof}
		Suppose $G[S]$ (or $G'[S]$) is not $\F$-free.
		Then there is an immersion model of a graph from $\F$ in $G[S]$ (or $G'[S]$).
		Since such a model intersects at most $d_\F-1$ elements of any theta or bouquet,
		we can find an immersion model that intersects only the elements that were not deleted from $G$ when constructing $G'$, nor from $V'$ when constructing $G'[V']$.
		This means $G'[S]$ and $G'[V'\cap S]$ also contain an immersion of a graph in $\F$, that is, they are not $\F$-free.
	\cqed\end{proof} 
	
	Consider now an optimal solution $F\subseteq E(G)$ for $G$.
	Then $F \cap \Delta$ is a solution (not necessarily optimal) for the subgraph $G'[V']$,
	 meaning $F\cap \Delta$ is non-empty (as $G'[V']$ is not $\F$-free) and $\OPT(G'[V'])\leq |F \cap \Delta|$.
		
	Observe that since $\OPT(G)=|F|$, we have $\OPT(G)-\OPT(G-F \cap \Delta) = |F \cap \Delta|$.
	Hence
	$$\OPT(G)-\OPT(G-\Delta) \geq |F \cap \Delta|\mbox{,\quad and in particular, \quad}\OPT(G-\Delta) < \OPT(G).$$	
	
	We show in the three claims below that $G'[V']$ is connected and has no excessive protrusions, no bouquets and no thetas.
	Therefore, by Lemma~\ref{lem:nobouquetsLinear}, $G'[V']$ has at most $c \cdot \OPT(G'[V'])$ edges, for some constant $c$ depending on $\F$ only.
	Using the above inequalities, we reach the desired conclusion:
	$$|\Delta| = |E(G'[V'])| \leq c \cdot \OPT(G'[V']) \leq c \cdot |F \cap \Delta| \leq c \cdot (\OPT(G)-\OPT(G-\Delta)).$$	
	
	It remains to show that $G'[V']$ is indeed connected, has no excessive protrusions, no bouquets and no thetas.
	Clearly $G'$ has no theta and thus $G'[V']$ has no theta either.
	Any edge or path deleted in the construction can be replaced with one that was not deleted, hence $G'[V']$ is connected.

\begin{claim}
	$G'$ has no excessive protrusions.
\end{claim}
\begin{proof}
	Suppose $S\subseteq V(G')=V(G)$ is an excessive protrusion in $G'$.	
	If there was a theta in $G$ attached to a set $U$ with one vertex in $S$ and the other outside of $S$, then in $G'$ there would still be least $d_\F-1$ edges connecting $S$ and $V(G')\setminus S$, contradicting that $|\delta_{G'}(S)| \leq 2b_\F < d_\F - 1$.
	Hence there is no such theta and, in particular, $\delta_{G'}(S) = \delta_G(S)$.
	
	Since $S$ is not an excessive protrusion in $G$, 
	there must be a theta in $G'$ with an attachment $U$ contained in $S$.
	Thus $G[S]$ has at least $d_\F$ edges, which is more than $2b_\F \cdot c_\F$.
	As $S$ is an excessive protrusion in $G'$, $G'-S$ has at most two connected components and thus so does $G-S$.
	Note that since $S$ is a $2b_\F$-protrusion in $G'$, we have $|\delta_G(S)| = |\delta_{G'}(S)| \leq 2b_\F$ and since $G'[S]$ is $\F$-free, so is $G[S]$ (Claim~\ref{cl:notFfree}).
	Therefore $S$ is an excessive protrusion in $G$, a contradiction.
\cqed\end{proof}	
	
\begin{claim}\label{claim:noProt}
	$G'[V']$ has no excessive protrusions.
\end{claim}
\begin{proof}
	Suppose $S$ is an excessive protrusion in $G'[V']$.
	Since it is not an excessive protrusion in $G'$, it must be the case that $S$ had more incident edges in $G'$ than in $G'[V']$, i.e. $\delta_{G'}(S)\supsetneq \delta_{G'[V']}(S)$. 
	By the construction of $G'$, we infer that there is a bouquet $\{Y_j\}_{j\in J}$ in $G'$ attached to some $U_Y$ and a $j\in J$ such that $Y_j$ was removed when constructing $V'$ and was adjacent to $S$.
	Since $N(Y_j)=U_Y$, we have that $U_Y\cap S \neq \emptyset$.
	
	We claim that $U_Y\subseteq S$. If $U_Y$ has one vertex this is clear. Otherwise, if $U_Y$ has two vertices, then at least $d_\F -1$ elements of the bouquet connect them in $G'[V']$, yielding a family of more than $2b_\F$ edge-disjoint paths between them.
	Thus $U_Y$ lies entirely on one side of the cut $\delta_{G'[V']}(S)$, because $|\delta_{G'[V']}(S)|\leq 2b_\F$. 
	Since $U_Y\cap S \neq \emptyset$, we conclude that $U_Y \subseteq S$.
	
	Similarly, we deduce that every other bouquet in $G'$ has an attachment fully contained in either $S$ or in $V' \setminus S$.
	Recall that $V'$ is obtained from $V(G')$ by removing vertices of some elements from each bouquet.
	It follows that $\delta_{G'[V']}(S)$ is a cut in $G'$ too.
	More precisely, let $S^*\subseteq V(G')$ be the set consisting of $S$ and those elements of bouquets deleted when constructing $V'$ that were attached to vertices in $S$. 
	Then $\delta_{G'}(S^*) = \delta_{G'[V']}(S)$.
	
	Note that as $S$ is a $2b_\F$-protrusion in $G'[V']$, we have $|\delta_{G'}(S^*)| = |\delta_{G'[V']}(S)| \leq 2b_\F$ and since $G'[S]=G'[V' \cap S^*]$ is $\F$-free, so is $G'[S^*]$ (Claim~\ref{cl:notFfree}).
	Thus $S^*$ is a $2b_\F$-protrusion in $G'$.
	
	Since $U_Y\subseteq S \subseteq S^*$, $S^*$ must contain all elements of $\{Y_j\}_{j  \in J}$ except for at most $|\delta_{G'}(S^*)|\leq 2b_\F$. Consequently,
	$$\|G'[S^*]\|\geq d_\F -2b_\F > 2 b_\F \cdot c_\F.$$
	As $S$ is an excessive protrusion in $G'[V']$, we have that $G'[V']-S$ has at most two connected components.
	The graph $G'-S^*$ can be obtained from $G'[V']-S$ by reintroducing elements of bouquets (which induced connected subgraphs) with attachments in $V'\setminus S$, hence $G-S^*$ also has at most two components.
	Therefore, $S^*$ is an excessive protrusion in $G'$, a contradiction.
\cqed\end{proof}

\begin{claim}
	$G'[V']$  has no bouquet.
\end{claim}
\begin{proof}
	Suppose $G'[V']$ has a bouquet $\{X_i\}_{i\in I}$ attached to some $U_X\subseteq V'$.
	Since it was not removed when constructing $V'$, it was not a bouquet in $G'$, so it must be that $\delta_{G'[V']}(X_i) \subsetneq \delta_{G'}(X_i)$ for some $i\in I$.
	By the construction of $V'$, there must be some bouquet $\{Y_j\}_{j\in J}$ attached to some $U_Y$ in $G$ with some $Y_j, j\in J$ adjacent to $X_i$.
	Again, $U_Y$ either consists of one vertex, or of two vertices that are still connected in $G'[V']$ by $d_\F-1>2$ edge-disjoint paths (namely paths contained in the elements of $\{Y_j\}_{j\in J}$ that did not get deleted).
	Hence all of $U_Y$ lies on the same side of the cut $\delta(X_i)$ in $G'[V']$.
	Since $Y_j$ is adjacent to $X_i$, $N_{G'}(Y_j)=U_Y$ intersects $X_i$ and thus $U_Y \subseteq X_i$.
	As $|\delta_{G'[V']}(X_i)|\leq 2$, we have that $\delta_{G'[V']}(X_i)$ can intersect at most two of the $d_\F-1$ elements of the bouquet $\{Y_j\}_{j\in J}$ that survive in $G'[V']$.
	All the other elements of this bouquet must lie on the same side of the cut $\delta(X_i)$ as $U_Y$, that is, they must be contained in $X_i$.	
	So in fact $X_i$ is a 2-protrusion in $G'[V']$ containing at least $d_\F-3 > 2b_\F \cdot c_\F$ edges and thus an excessive protrusion, a contradiction.
	\looseness=-1
\cqed\end{proof}

	With the above claims, we conclude that $G'[V']$ has no excessive protrusions, no bouquets and no thetas. Therefore, it satisfies the conditions of Lemma~\ref{lem:nobouquetsLinear}.
\end{proof}

We extend Lemma~\ref{lem:findEdge} to disconnected graphs by simply considering each connected component separately.

\begin{corollary}\label{cor:findEdge}
	Suppose we are given a graph $G$ that is not $\F$-free and in which every connected component does not have any excessive protrusion. 
	Then one can find in time $\Oh(|G|^3)$ a set $\Delta \subseteq E(G)$ such that (for some constant $c$ depending on $\F$ only):
	$$\OPT(G-\Delta)<\OPT(G)\text{\quad and \quad}|\Delta|\leq c \cdot (\OPT(G) - \OPT(G-\Delta)).$$
\end{corollary}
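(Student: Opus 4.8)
The plan is to reduce to the connected case handled by Lemma~\ref{lem:findEdge} by processing the connected components of $G$ one by one. Since $G$ is not $\F$-free, there is at least one connected component $H$ of $G$ that is not $\F$-free; by hypothesis $H$ has no excessive protrusion. Moreover, $\F$-freeness and the value $\OPT$ both decompose over connected components: if $H_1,\dots,H_q$ are the connected components of $G$, then $G$ is $\F$-free iff every $H_i$ is $\F$-free (recall every graph of $\F$ is connected, so any immersion model of a graph from $\F$ lives inside a single component), and $\OPT(G)=\sum_{i=1}^q \OPT(H_i)$, since an optimal solution for $G$ restricts to a solution on each component and conversely a union of optimal solutions on the components is a solution for $G$.

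First I would identify, in time $\Oh(\|G\|)=\Oh(|G|^2)$, a connected component $H$ of $G$ that is not $\F$-free, using the linear-time $\F$-freeness test of Lemma~\ref{lem:ffreeFpt}. Then I would apply Lemma~\ref{lem:findEdge} to $H$ (which is connected, not $\F$-free, and has no excessive protrusion), obtaining in time $\Oh(|H|^3)\leq\Oh(|G|^3)$ a set $\Delta\subseteq E(H)\subseteq E(G)$ with
$$\OPT(H-\Delta)<\OPT(H)\quad\text{and}\quad |\Delta|\leq c\cdot\bigl(\OPT(H)-\OPT(H-\Delta)\bigr),$$
where $c$ is the constant from Lemma~\ref{lem:findEdge}, depending on $\F$ only. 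Output this $\Delta$.

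It remains to lift these two inequalities from $H$ to $G$. The key observation is that $G-\Delta$ differs from $G$ only inside the component $H$: removing $\Delta\subseteq E(H)$ turns the component $H$ into the graph $H-\Delta$ (possibly splitting it into several components) and leaves all other components untouched. Hence, using additivity of $\OPT$ over components,
$$\OPT(G)-\OPT(G-\Delta)=\OPT(H)-\OPT(H-\Delta),$$
since all terms corresponding to components other than $H$ cancel. Combining this equality with the two properties of $\Delta$ provided by Lemma~\ref{lem:findEdge} yields $\OPT(G-\Delta)<\OPT(G)$ and $|\Delta|\leq c\cdot\bigl(\OPT(G)-\OPT(G-\Delta)\bigr)$, as required. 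There is no real obstacle here; the only mild care needed is to state precisely that $\F$-freeness and $\OPT$ are component-wise, which follows immediately from the connectedness of every member of $\F$, and to note that $H-\Delta$ may become disconnected, which is harmless because the additivity of $\OPT$ holds for arbitrary graphs.
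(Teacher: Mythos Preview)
Your proof is correct. The paper takes a slightly different route: rather than picking a single non-$\F$-free component $H$ and applying Lemma~\ref{lem:findEdge} only there, it applies the lemma to \emph{every} non-$\F$-free component $C_1,\dots,C_{r'}$, obtains sets $\Delta_1,\dots,\Delta_{r'}$, and outputs their union $\Delta=\bigcup_i \Delta_i$. The verification is then the same additivity argument you give, summed over all $i$.

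Both approaches are valid and rely on the same two facts you state: connectedness of each member of $\F$ gives additivity of $\OPT$ over components, and Lemma~\ref{lem:findEdge} handles the connected case. Your version is marginally simpler (one call to the lemma instead of many), while the paper's version eats a larger chunk of $\OPT$ per call, which could reduce the number of outer iterations in Theorem~\ref{thm:approx}; asymptotically the guarantees are identical. One small quibble: your parenthetical $\|G\|=\Oh(|G|^2)$ is not true for multigraphs in general, so the time to locate a non-$\F$-free component should simply be stated as $\Oh(|G|+\|G\|)$, which is absorbed by the overall bound in the intended applications.
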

\begin{proof}
	Let $C_1, \dots, C_r$ be the connected components of $G$ and let without loss of generality $C_1,\dots,C_{r'}$ be those that are not $\F$-free, for some $r'\leq r \in \mathbb{N}$.
	Since $G$ is not $\F$-free and all graphs of $\F$ are connected, we have $r'\geq 1$.
	Since every component $C_i$ has no excessive protrusion, for each $1\leq i \leq r'$ we compute a set $\Delta_i \subseteq E(C_i)$ by invoking Lemma~\ref{lem:findEdge} on $C_i$.
	Then, for some constant $c$ depending on $\F$ only, we have that:
	$$\OPT(C_i-\Delta_i)<\OPT(C_i)\text{\quad and \quad}|\Delta_i|\leq c \cdot (\OPT(C_i) - \OPT(C_i-\Delta_i))$$	
	Let $\Delta = \bigcup_{i=1}^{r'} \Delta_i$.
	Since all the graph in $\F$ are connected, $\OPT(G) = \sum_{i=1}^r \OPT(C_i)$.
	Components $C_i$ that are $\F$-free have $\OPT(C_i)= 0$, thus
	$$\displaystyle\OPT(G) = \sum_{i=1}^{r'} \OPT(C_i)\text{\quad and similarly \quad}
	\OPT(G-\Delta) = \sum_{i=1}^{r'} \OPT(C_i - \Delta_i)$$
	Since $r'\geq 1$, we have $\OPT(C_1-\Delta_1)<\OPT(C_1)$ and thus $\OPT(G-\Delta) < \OPT(G)$.
	Finally
	$$ |\Delta| = \sum_{i=1}^{r'} |\Delta_i| \leq c \cdot (\OPT(G) - \OPT(G-\Delta)).$$	
	Therefore, the algorithm can in $\Oh(\sum_{i=1}^{r'} |C_i|^3) \leq \Oh(|G|^3)$ time output $\Delta$ as a result.
\end{proof}

To get a constant-factor approximation algorithm, we invoke the above corollary iteratively.
Intuitively, we maintain a set of edges $F$, initially empty, and invoke the corollary on $G-F$ to find a set $\Delta$ such that adding it to $F$ decreases $\OPT(G-F)$, while increasing $|F|$ by only a constant factor more.
We then run the algorithm of Lemma~\ref{lem:exhaustProtrusions} to remove excessive protrusions from $G-F$, reducing in a sense those parts of the graph where no more edges need to be deleted.
Eventually, we reach $\OPT(G-F)=0$, meaning $F$ is a solution of size linear in $\OPT(G)$.
The proof is straightforward, but requires reconstructing at every step a solution to the original graph given to Lemma~\ref{lem:exhaustProtrusions}.

\begin{theorem}[Theorem~\ref{thm:main-apx}, reformulated]\label{thm:approx}
	There is an algorithm running in time $\Oh(\|G\|^3 \log \|G\| \cdot |G|^3)$
	that given a graph $G$, outputs a set $F\subseteq E(G)$ of size at most $c_{\mathrm{apx}} \cdot \OPT(G)$ such that $G-F$ is $\F$-free, for some constant $c_{\mathrm{apx}}$ depending on $\F$ only.
\end{theorem}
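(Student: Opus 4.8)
The plan is to realize the iterative strategy sketched just before the theorem statement: maintain a growing edge set $F$ and, as long as $G-F$ is not $\F$-free, use Corollary~\ref{cor:findEdge} to grab a chunk $\Delta$ of edges that provably decreases $\OPT$, add $\Delta$ to $F$, and then clean up using the Exhaustive Protrusion Replacement of Lemma~\ref{lem:exhaustProtrusions}. The subtlety is that Corollary~\ref{cor:findEdge} requires the input graph to have no excessive protrusions in any component, so we must re-run protrusion replacement before each application; but protrusion replacement changes the graph (to some $\F$-equivalent $G'$ with $\OPT(G')=\OPT(G-F)$), so we have to track the correspondence between solutions in the reduced graph and solutions in the original via the solution-lifting algorithms.

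Concretely, I would set $H_0 := G$ and $F_0 := \emptyset$, and run the following loop for $i=0,1,2,\dots$. First apply Lemma~\ref{lem:exhaustProtrusions} to $H_i$ to obtain $H_i'$ with $\OPT(H_i')=\OPT(H_i)$, $\|H_i'\|\le\|H_i\|$, and no excessive protrusion in any component. If $H_i'$ is $\F$-free (checkable in linear time by Lemma~\ref{lem:ffreeFpt}), stop. Otherwise apply Corollary~\ref{cor:findEdge} to $H_i'$ to get $\Delta_i\subseteq E(H_i')$ with $\OPT(H_i'-\Delta_i)<\OPT(H_i')$ and $|\Delta_i|\le c\cdot(\OPT(H_i')-\OPT(H_i'-\Delta_i))$, and set $H_{i+1}:=H_i'-\Delta_i$. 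Since $\OPT$ strictly decreases at each step and each $\OPT$ value is a nonnegative integer bounded by $\OPT(G)\le\|G\|$, the loop terminates after at most $\OPT(G)$ iterations, with $H_{N}'$ being $\F$-free for the final index $N$. Summing the size bounds telescopes: $\sum_i|\Delta_i|\le c\sum_i(\OPT(H_i')-\OPT(H_i'-\Delta_i))=c\sum_i(\OPT(H_i)-\OPT(H_{i+1}))=c\cdot\OPT(G)$.

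The remaining work is to turn the accumulated $\Delta_i$'s, which live in successively reduced graphs, into a single edge set $F\subseteq E(G)$ of the same total size with $G-F$ $\F$-free. Here I would process the iterations in reverse. At the final stage $H_N'-\emptyset$ (equivalently $H_N'$, which is $\F$-free) gives an empty solution; going backwards, at step $i$ we have a solution $F'$ for $H_{i+1}=H_i'-\Delta_i$ with $G'-F'$ (here $G'=H_i'$, after removing $\Delta_i$) $\F$-free, so $F'\cup\Delta_i$ is a solution for $H_i'$ of size $|F'|+|\Delta_i|$; then we apply the solution-lifting algorithm of Lemma~\ref{lem:exhaustProtrusions} to lift this to a solution for $H_i$ of size at most $|F'|+|\Delta_i|$. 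Iterating back to $i=0$ yields $F\subseteq E(G)$ with $G-F$ $\F$-free and $|F|\le\sum_{i=0}^{N-1}|\Delta_i|\le c\cdot\OPT(G)$, so we may take $c_{\mathrm{apx}}=c$. For the running time, the outer loop runs $\OPT(G)\le\|G\|$ times; each iteration calls Lemma~\ref{lem:exhaustProtrusions} in time $\Oh(\|G\|^2\log\|G\|\cdot|G|^2)$ and Corollary~\ref{cor:findEdge} in time $\Oh(|G|^3)$, and the reverse lifting pass costs $\Oh(\|G\|^2)$ per step; multiplying by the $\Oh(\|G\|)$ iterations gives the claimed $\Oh(\|G\|^3\log\|G\|\cdot|G|^3)$ bound (using $|G|\le\|G\|+1$ for graphs without isolated vertices, or just absorbing it into the stated bound).

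The main obstacle I anticipate is purely bookkeeping rather than conceptual: keeping the solution-lifting chain consistent, i.e., making sure that at each backward step we are lifting through exactly the transformation that was applied in the corresponding forward step, and that the size bound $|F|\le|F'|$ from the lifting composes correctly with the additive $+|\Delta_i|$ coming from Corollary~\ref{cor:findEdge}. One should also be a little careful that Corollary~\ref{cor:findEdge} is applied to a graph all of whose components are excessive-protrusion-free — this is exactly what Lemma~\ref{lem:exhaustProtrusions} guarantees, so the preconditions match — and that removing $\Delta_i$ cannot create new excessive protrusions that would invalidate later reasoning, which is fine because we re-run Lemma~\ref{lem:exhaustProtrusions} at the start of the next iteration anyway. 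No new combinatorial idea is needed beyond what Lemmas~\ref{lem:exhaustProtrusions}, \ref{lem:nobouquetsLinear} and Corollary~\ref{cor:findEdge} already provide.
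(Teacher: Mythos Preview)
Your proposal is correct and follows essentially the same approach as the paper's own proof: iterate protrusion replacement (Lemma~\ref{lem:exhaustProtrusions}) followed by Corollary~\ref{cor:findEdge} to extract a $\Delta_i$, then lift solutions backwards through the chain; the paper bounds the final solution by the inductive invariant $|F_i|\le c\cdot\OPT(G_i)$ rather than your explicit telescoping sum, but the two arguments are equivalent. Your bookkeeping concerns are well-placed but, as you note, purely mechanical, and the running-time analysis matches.
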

\begin{proof}
	Let $G_0 = G$ and $\Delta_0 = \emptyset$.	
	The algorithm computes a sequence of graphs $G_i$ with $\|G_i\|\leq \|G\|$ and sets $\Delta_i \subseteq E(G_i)$ as follows.
	
	For $i \geq 0$, $G_{i+1}$ is computed from $G_i - \Delta_i$ by invoking Lemma~\ref{lem:exhaustProtrusions} on this graph.
	That is, in time $\Oh(\|G\|^2 \log \|G\| \cdot |G|^3)$ we compute a graph $G_{i+1}$ in which no connected component contains any excessive protrusions and moreover
	$$\|G_{i+1}\| \leq \|G_i - \Delta_i\| \leq \|G\|\qquad\textrm{and}\qquad \OPT(G_{i+1})=\OPT(G_i - \Delta_i).$$
	
	For $i \geq 1$, provided $G_i$ is not $\F$-free, $\Delta_i$ is computed from $G_i$ by invoking Corollary~\ref{cor:findEdge}.
	That is, in time $\Oh(|G|^3)$ we find a set $\Delta_i\subseteq E(G_i)$ such that
	$$\OPT(G_i-\Delta)<\OPT(G_i)\text{\quad and \quad}|\Delta_i|\leq c \cdot (\OPT(G_i) - \OPT(G_i-\Delta_i)).$$
	Here, $c$ is the constant given by Corollary~\ref{cor:findEdge}.
	
	Eventually, since $\OPT(G_{i+1})<\OPT(G_i)$, there is an $1\leq r \leq \OPT(G)$ such that $G_r$ is $\F$-free.
	We reconstruct a sequence of solutions $F_i \subseteq E(G_i)$ for $G_i$ as follows.
	Clearly $F_r := \emptyset$ is a solution for $G_r$.
	If $F_{i+1}$ is a solution for $G_{i+1}$, then
	using the solution-lifting algorithm of Lemma~\ref{lem:exhaustProtrusions},
	a solution $F'_i$ for $G_i - \Delta_i$ can be constructed in time $\Oh(\|G\|^2)$ such that $|F'_i| \leq |F_{i+1}|$.
	Then $F_i := F'_i \cup \Delta_i$ is a solution for $G_i$.
	This way, we reconstruct a solution $F_0$ for $G_0=G$ in at most $\OPT(G)$ iterations, that is, in total time $\Oh(\|G\|^2 \cdot \OPT(G))$.
	Constructing the graphs $G_i$ took $\Oh(\|G\|^2 \log \|G\| \cdot |G|^3 \cdot \OPT(G))$ total time, so since $\OPT(G)\leq \|G\|$, the time bound follows.
	
	To bound the size of the solution, we show inductively that $|F_i| \leq c \cdot \OPT(G_i)$ for $i=r,\dots,0$.
	Clearly this hold for $i=r$.
	If it holds for $i+1$, then it holds for $i$, because:
	$$|F_i| \leq |F'_i| + |\Delta_i| \leq |F_{i+1}| + |\Delta_i| \leq c \cdot \OPT(G_{i+1}) + c \cdot ( \OPT(G_{i}) -  \OPT(G_{i+1})) = c \cdot \OPT(G_i).\qed$$
	\phantom\qedhere
\end{proof}
\vspace{-9mm}

\section{Linear kernel}\label{sec:kernel}

In the previous section we have already observed (Lemma~\ref{lem:nobouquetsLinear}) that the only structures in the graph that prevent it from being a linear kernel
are excessive protrusions, bouquets, and thetas. Using the Exhaustive Protrusion Replacement (Lemma~\ref{lem:exhaustProtrusions}) we can 
get rid of excessive protrusions, but bouquets and thetas can still be present in the graph.

It would be ideal if we could reduce the size of every bouquet or theta to a constant, but unfortunately we are so far unable to do this.
Instead, we employ the following strategy based on the idea of {\sl amortization}.
First, we reduce all excessive protrusions using Lemma~\ref{lem:exhaustProtrusions}.
Second, using Theorem~\ref{thm:approx}, we compute an approximate solution $F$ that is larger than the optimum only by a constant multiplicative factor.
Then, we investigate every bouquet in the graph and we estimate the number of edges of $F$ that, in some sense, ``affect'' the bouquet.
It can be then shown that the size of the bouquet can be reduced to linear in terms of the number of edges that affect it.
Thus, after performing this reduction there still might be large bouquets in the graph, but only because a large number of edges of $F$ affect them.
However, every edge of $F$ will affect at most a constant number of bouquets, so the total size of the bouquets will amortize to linear in terms of $|F|$, hence also linear in terms of $\OPT$.
The same amortization reasoning also enables us to bound the total sum of sizes of thetas.

We first show that if we know a local solution that isolates a bouquet into an $\F$-free part, then this bouquet can be proportionally bounded without changing $\OPT(G)$.

\begin{lemma}\label{lem:bouquetBound}
	Let $\{X_i\}_{i\in I}$ be a bouquet attached to $U$ in $G$. 
	Suppose $\Delta\subseteq E(G)$ is such that all the connected components of $G-\Delta$ that intersect $U \cup \bigcup_{i\in I} X_i$ are $\F$-free.
	Then $\OPT(G)=\OPT(G')$, where $G'$ is obtained from $G$ by removing vertices of all except $d_\F+|\Delta|$ elements of the bouquet.
\end{lemma}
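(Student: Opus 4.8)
The plan is to show the two inequalities $\OPT(G')\le \OPT(G)$ and $\OPT(G)\le \OPT(G')$ separately. The easy direction is $\OPT(G')\le\OPT(G)$: since $G'$ is a subgraph of $G$ (we only deleted vertices, hence some edges), any solution $F$ for $G$ restricted to $E(G')$ still destroys all immersion models in $G'$, so $\OPT(G')\le\OPT(G)$. The substance of the lemma is the reverse inequality, and here the natural route is to take an optimal solution $F'$ for $G'$ with $|F'|=\OPT(G')$ and ``lift'' it to a solution for $G$ of the same size. The idea is that $G$ differs from $G'$ only by the reintroduced bouquet elements $X_i$, $i\in I\setminus I'$ where $I'$ indexes the $d_\F+|\Delta|$ elements kept in $G'$. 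Each such $X_i$ together with $U$ induces a graph isomorphic (fixing $U$) to each kept element, so a solution that ``handles'' a kept element can be copied onto each reintroduced one.

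The key steps, in order, would be: (1) Pick an optimal $F'$ for $G'$. Among the $d_\F+|\Delta|$ bouquet elements surviving in $G'$, at most $|F'|\le ?$ — here I need the bound $|F'|\le|\Delta|$, which holds because $F'\cap E(G')$ where $F'$ could be taken to be $\Delta$ restricted to $G'$ — actually the cleanest argument: since $\Delta$ makes every component meeting $U\cup\bigcup X_i$ in $G$ become $\F$-free, in particular $\Delta\cap E(G')$ is a solution for (the relevant part of) $G'$; more carefully, one shows $\OPT(G')\le|\Delta|$ by noting $G'-\Delta$ has its components meeting the bouquet $\F$-free and the rest unchanged, hence $\F$-free overall since $G-\Delta$ was, so $|F'|\le|\Delta|$. (2) Therefore at least $d_\F$ of the surviving bouquet elements $X_i$ ($i\in I'$) are disjoint from $F'$, i.e. $F'$ contains no edge of $G[U\cup X_i]$ for these $i$; call one such clean index $i_0$, with $G[U\cup X_{i_0}]$ being the common isomorphism type. (3) Construct $F$ for $G$ by taking $F'$ and adding, for every reintroduced element $X_i$ ($i\in I\setminus I'$), a copy of $F'\cap E(G[U\cup X_{i'}])$ for each $i'\in I'$ transported via the isomorphism — but this could blow up the size. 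The correct construction: since $F'$ is disjoint from $G[U\cup X_{i_0}]$ for the clean index, and all elements are isomorphic fixing $U$, the solution $F'$ already leaves every copy $G[U\cup X_i]$ ($i$ clean or reintroduced) untouched and behaving identically; so I must instead argue that $F'$ itself, viewed as a subset of $E(G)$ (it is, since $E(G')\subseteq E(G)$), is already a solution for $G$. (4) Prove $G-F'$ is $\F$-free: suppose some $H\in\F$ immerses into $G-F'$. Its immersion model is a collection of edge-disjoint paths; by the bouquet argument (each path, being a path, visits each vertex of $U$ at most once, hence meets at most $3$ elements of the bouquet, so the whole model of $H$ meets at most $3\MX<d_\F$ elements) the model meets fewer than $d_\F$ bouquet elements, so by isomorphism it can be rerouted to meet only elements surviving in $G'$ and avoiding the $\le|F'|\le|\Delta|$ ``dirty'' ones — wait, we need $d_\F$ clean surviving ones, and we have at least $d_\F$ by step (2). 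Rerouting gives an immersion of $H$ into $G'-F'$, contradiction.

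The main obstacle I anticipate is step (1)/(2): establishing that $\OPT(G')\le|\Delta|$, i.e. that $\Delta$ restricted appropriately is a valid solution for $G'$, and hence that at least $d_\F$ of the $d_\F+|\Delta|$ surviving elements are untouched by the \emph{optimal} $F'$. One has to be careful: $\Delta$ might contain edges outside $E(G')$ (incident to removed bouquet elements), so one argues that $G'-\Delta$ is an induced subgraph of $G-\Delta$ with its bouquet-meeting components each being $\F$-free (they are induced subgraphs of the $\F$-free components of $G-\Delta$, and $\F$-freeness is preserved under subgraphs) and its remaining components identical to those of $G-\Delta$, which are $\F$-free by hypothesis. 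Hence $G'-(\Delta\cap E(G'))$ is $\F$-free and $\OPT(G')\le|\Delta\cap E(G')|\le|\Delta|$. The second subtlety is the rerouting/isomorphism bookkeeping in step (4): one must verify that replacing the portion of an immersion model inside one bouquet element by the corresponding portion inside an isomorphic element (via the $U$-fixing isomorphism) genuinely yields a valid immersion model — the images of $U$ are fixed, the paths inside different elements are vertex-disjoint so edge-disjointness is preserved, and the model's behaviour outside $\bigcup X_i$ is unchanged. This is routine but needs to be spelled out, essentially mimicking the counting paragraph after the definition of bouquets in Section~\ref{sec:approx}.
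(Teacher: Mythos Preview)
Your overall strategy---take an optimal solution $F'$ for $G'$, argue that enough bouquet elements in $G'$ are untouched by $F'$, and reroute any putative immersion model in $G-F'$ onto those clean elements---is the right shape and matches the paper's argument. However, step~(1) contains a genuine error that breaks the proof as written.

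You claim that $\OPT(G')\le|\Delta|$, arguing that $G'-\Delta$ is $\F$-free because ``its remaining components [are] identical to those of $G-\Delta$, which are $\F$-free by hypothesis.'' But the hypothesis does \emph{not} say that all components of $G-\Delta$ are $\F$-free: it says only that the components \emph{intersecting} $U\cup\bigcup_i X_i$ are $\F$-free. Components of $G-\Delta$ away from the bouquet may contain arbitrarily many $\F$-models, so $\OPT(G')$ can be much larger than $|\Delta|$. (Indeed, in the application in Lemma~\ref{lem:reduceFromApprox}, $\Delta$ is only a small local piece of the approximate solution.) Consequently you cannot conclude that at most $|\Delta|$ of the surviving bouquet elements are touched by $F'$.

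The fix is to bound only the \emph{local} part of $F'$. Let $C$ be the union of the components of $G-\Delta$ meeting the bouquet and let $E_C$ be the set of edges of $G$ incident to $C$. One shows $|F'\cap E_C|<|\Delta|$ by a swap argument: if not, then replacing $F'\cap E_C$ by $\Delta$ yields a set $(F'\setminus E_C)\cup\Delta$ of size at most $|F'|$ which one checks is a solution for $G$ (inside $C$ it kills everything by hypothesis; outside $C$ nothing changed from $G'-F'$), contradicting the assumption that $G$ has no solution of size $|F'|$. Now $F'\cap E_C$ touches fewer than $|\Delta|$ of the $d_\F+|\Delta|$ surviving elements, leaving more than $d_\F$ clean ones, and your rerouting step~(4) goes through. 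This local swap, rather than a global bound on $\OPT(G')$, is the key idea you are missing.
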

\begin{proof}
	Suppose that, to the contrary, $G'$ admits an edge subset $F\subseteq E(G')$ of size $k$ such that $G'-F$ is $\F$-free, but $G$ does not.
	Let $C\subseteq V(G)$ be the union of the vertex sets of those connected components of $G-\Delta$ that contain some vertices of $U \cup \bigcup_{i\in I} X_i$. 
	By assumption, $G[C]-\Delta$ is $\F$-free.
	Note that $\delta(C) \subseteq \Delta$.
	
	Let $E_C$ be the set of all the edges incident to vertices of $C$ in $G$.
	We claim that $|\Delta| > |F\cap E_C|$.
	To show this, define $F' := (F \setminus E_C) \cup \Delta$.
	Observe that $G-F'$ is $\F$-free: as all graphs in $\F$ are connected, an immersion model of one of them in $G-F' \subseteq G-\delta(C)$ would either be contained in $C$ or disjoint from it.
	The first case would contradict the assumption that $G[C]-F' \subseteq G[C]-\Delta$ is $\F$-free.
	In the second case, the immersion model would be contained in $V(G)\setminus C$, which is equal to $V(G')\setminus C$, because $C$ contains all the vertices of the bouquet.
	That is, it would be contained in in $G[V(G)\setminus C] - F' \subseteq G'[V(G')\setminus C] - (F\setminus E_C) = G'[V(G')\setminus C] - F$, which would contradict the assumption that $G'-F$ is $\F$-free. 
	By our supposition that $G$ does not admit a solution of size $k$, we infer that 
	$|F'|>k\geq |F|$, 
	and hence $|\Delta| > |F\cap E_C|$, as claimed.
	
	Since $G-F$ cannot be $\F$-free, by our assumption that $G$ has no solution of size $k$, it contains an immersion model of some graph in $\F$.
	As argued after the definition of a bouquet, this immersion model can intersect at most $d_\F$ elements of the bouquet $\{X_i\}_{i\in I}$.
	Since $G'$ still has $d_\F+|\Delta|$ isomorphic elements of this bouquet,
	$G'-F$ has at least $d_\F+|\Delta| - |F\cap E_C|> d_\F$ isomorphic elements that are not intersected by $F$.
	Therefore, even if the immersion model in $G-F$ intersected any elements removed from $G'$,
	they can be replaced by not intersected elements that remained unchanged in $G'-F$, thus yielding an immersion model of the same graph in $G'-F$.
	This means that $G'-F$ is not $\F$-free, a contradiction.
\end{proof}

The same reasoning can also be applied to limit the sizes of thetas. The proof is exactly the same  and hence we leave it to the reader.

\begin{lemma}\label{lem:thetaBound}
	Let $\{e_i\}_{i\in I}$ be a theta attached to $\{u,v\}=U$ in $G$.
	Suppose $\Delta\subseteq E(G)$ is such that all the connected components of $G-\Delta$ that contain some vertex of $U$ are $\F$-free.
	Then $\OPT(G)=\OPT(G')$, where $G'$ is obtained from $G$ by removing all edges of $\{e_i\}_{i\in I}$ except for $d_\F + |\Delta|$.
\end{lemma}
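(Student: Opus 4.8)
The plan is to follow the proof of Lemma~\ref{lem:bouquetBound} essentially verbatim, regarding a theta attached to $\{u,v\}$ as a degenerate bouquet in which every element is a single edge between $u$ and $v$, and hence carries no internal vertices. As in the bouquet case, one inequality is trivial: $G'$ is obtained from $G$ only by deleting edges (no vertices are removed), so for any optimal solution $F^\star$ of $G$ the set $F^\star\cap E(G')$ is a solution of $G'$ of size at most $\OPT(G)$, giving $\OPT(G')\leq \OPT(G)$. It remains to prove $\OPT(G)\leq \OPT(G')$, which I would establish by contradiction, exactly as in Lemma~\ref{lem:bouquetBound}.

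So suppose there is $F\subseteq E(G')$ with $|F|=k$ and $G'-F$ being $\F$-free, while $G$ admits no solution of size $k$. Let $C\subseteq V(G)$ be the union of the vertex sets of those connected components of $G-\Delta$ that contain a vertex of $U=\{u,v\}$; by hypothesis $G[C]-\Delta$ is $\F$-free (it is the disjoint union of $\F$-free components), and $\delta_G(C)\subseteq \Delta$. Let $E_C$ denote the set of edges of $G$ incident to $C$, and note that all edges of the theta lie in $E_C$ because $u,v\in C$. The first key step is the same counting claim as before: $F':=(F\setminus E_C)\cup\Delta$ is a solution of $G$. Indeed, since $G-F'$ has no edge of $\delta_G(C)\subseteq \Delta\subseteq F'$, any immersion model of a connected graph of $\F$ in $G-F'$ lies either in $G[C]-F'\subseteq G[C]-\Delta$, which is $\F$-free by hypothesis, or in $G[V(G)\setminus C]-F'\subseteq G'[V(G')\setminus C]-F$, a subgraph of the $\F$-free graph $G'-F$; both are impossible. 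Hence $|F'|\geq \OPT(G)>k=|F|$, and since $|F'|\leq |F|-|F\cap E_C|+|\Delta|$ this forces $|F\cap E_C|<|\Delta|$.

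The second step is the rerouting argument. Because $G$ has no solution of size $k$, the graph $G-F$ is not $\F$-free, so it contains an immersion model of some $H\in\F$. As observed in the discussion following the definition of a theta, every path that is the image of an edge of $H$ meets $u$ and $v$ at most once, so the model uses at most $3\MX<d_\F$ edges of the theta. On the other hand $G'$ retains $d_\F+|\Delta|$ edges of the theta, and $F$ can delete at most $|F\cap E_C|<|\Delta|$ of them (all of them lying in $E_C$), so $G'-F$ contains strictly more than $d_\F$ parallel copies of $uv$; by maximality of the theta these are the only edges between $u$ and $v$. Since these copies are pairwise interchangeable inside any immersion model, we can injectively reassign each theta edge used by the model but absent from $G'-F$ to one of the surplus copies present in $G'-F$ and not used by the model — there are enough because the number of available copies exceeds $d_\F$, which is larger than the number of theta edges touched by the model. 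Replacing the corresponding edges on the image paths while keeping the vertex map and all non-theta edges (which already lie in $G'-F$) unchanged yields an immersion model of $H$ in $G'-F$, contradicting that $G'-F$ is $\F$-free. This contradiction proves $\OPT(G)\leq \OPT(G')$, and together with the trivial inequality we obtain $\OPT(G)=\OPT(G')$.

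I expect the only genuinely delicate point to be making the last rerouting step rigorous: one has to verify that several image paths which previously used distinct theta edges can be simultaneously redirected to distinct surviving copies without creating edge conflicts, which is exactly the inequality "number of copies in $G'-F$ exceeds $d_\F$, which exceeds the number of theta edges touched by the model". Beyond this counting, the argument is a transcription of the proof of Lemma~\ref{lem:bouquetBound}, simplified by the fact that theta elements consist of a single edge and thus involve no internal vertices and no isomorphism bookkeeping — which is precisely why the authors leave it to the reader.
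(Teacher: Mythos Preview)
Your proposal is correct and follows exactly the approach the paper intends: the authors explicitly state that the proof of Lemma~\ref{lem:thetaBound} is ``exactly the same'' as that of Lemma~\ref{lem:bouquetBound} and leave it to the reader, and your write-up is a faithful transcription of that argument to the theta setting, with the natural simplifications (no internal vertices, no isomorphism bookkeeping, $V(G)=V(G')$). The counting and rerouting steps are carried out correctly.
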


The above lemmas allow us to reduce bouquets and sets of parallel edges effectively, given a local part of an approximate solution.
By appropriately amortizing bounds with the total size of the approximate solution, we finally get a linear bound on an irreducible equivalent instance. 

\begin{lemma}\label{lem:reduceFromApprox}
	Let $G$ be a connected graph with no excessive protrusions and let $F\subseteq E(G)$ be such that $G-F$ is $\F$-free.
	Then either $\|G\| \leq c \cdot |F|$ for some constant $c$ depending on $\F$ only, or given $G$ and $F$, 
	one can compute in time $\Oh(\|G\| \cdot |G|^2)$ a subgraph $G'$ of $G$ such that $\OPT(G)=\OPT(G')$ and $\|G'\|<\|G\|$.
\end{lemma}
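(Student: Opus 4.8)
The overall plan is a dichotomy based on \emph{bouquets} and \emph{thetas}: either $G$ contains one that is larger than a threshold determined by the portion of $F$ that ``touches'' it, in which case Lemma~\ref{lem:bouquetBound} (resp. Lemma~\ref{lem:thetaBound}) lets us prune it to obtain the desired $G'$; or it does not, and then $\|G\|$ must already be $\Oh(|F|)$. First I would dispose of the degenerate case: using Lemma~\ref{lem:ffreeFpt}, test in linear time whether $G$ is $\F$-free; if so and $\|G\|\geq 1$, output $G'=G-\{e\}$ for an arbitrary edge $e$ (then $\OPT(G')=0=\OPT(G)$ and $\|G'\|<\|G\|$), and if $\|G\|=0$ we are in the case $\|G\|\leq c\cdot|F|$ trivially. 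So assume $G$ is not $\F$-free, whence $|F|\geq\OPT(G)\geq 1$.

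To each bouquet $\mathcal{B}=\{X_i\}_{i\in I}$ attached to $U$ I associate its \emph{territory} $Z_{\mathcal{B}}:=U\cup\bigcup_{i\in I}X_i$ (for a theta attached to $\{u,v\}$, the territory is $\{u,v\}$) and a ``local solution'' $\Delta_{\mathcal{B}}\subseteq F$: namely, let $D_{\mathcal{B}}$ be the union of the vertex sets of all connected components of $G-F$ that meet $Z_{\mathcal{B}}$, and let $\Delta_{\mathcal{B}}$ consist of all edges of $F$ incident to $D_{\mathcal{B}}$. Since $G-F$ is $\F$-free, each component of $G-F$ meeting $Z_{\mathcal{B}}$ is $\F$-free; moreover every edge of $G$ with exactly one endpoint in $D_{\mathcal{B}}$ lies in $F$ (else it would merge two components of $G-F$), so $G-\Delta_{\mathcal{B}}$ has no edge leaving $D_{\mathcal{B}}$ and, restricted to $D_{\mathcal{B}}$, equals $(G-F)[D_{\mathcal{B}}]$. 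Hence every component of $G-\Delta_{\mathcal{B}}$ meeting $Z_{\mathcal{B}}$ is $\F$-free, so $\Delta_{\mathcal{B}}$ fulfils the hypothesis of Lemma~\ref{lem:bouquetBound}; the analogous statement with Lemma~\ref{lem:thetaBound} holds for thetas. Consequently, if some bouquet has more than $d_\F+|\Delta_{\mathcal{B}}|$ elements (resp. some theta more than $d_\F+|\Delta_\theta|$ edges), we may prune that single structure down to the threshold, obtaining $G'$ with $\OPT(G')=\OPT(G)$ and $\|G'\|<\|G\|$. The algorithm enumerates all bouquets and thetas as in Lemma~\ref{lem:findEdge} and computes every $D$ and $\Delta$ by a traversal of $G-F$; all of this fits within the allotted time $\Oh(\|G\|\cdot|G|^2)$. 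If a prunable structure is found, we are done, so it remains to handle the case where none exists.

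In that case I would show $\|G\|\leq c\cdot|F|$ by re-running the proof of Lemma~\ref{lem:nobouquetsLinear} almost verbatim. Take a neat tree-cut decomposition $(T,\mathcal{X})$ of $G-F$ with $\width'(T,\mathcal{X})\leq b_\F$ (Corollary~\ref{corol:neat}); let $M\subseteq V(T)$ be the lca-closure of the nodes whose bags contain endpoints of $F$, so $|M|\leq 4|F|$ and every component of $T-M$ has at most two neighbours in $T$ (Lemma~\ref{lem:lca}). The argument of Claim~\ref{cl:rest-protr}, which uses only that $G$ has no excessive protrusion, gives $\|G[X_{T'}]\|\leq 2b_\F c_\F$ for every component $T'$ of $T-M$, and at most $|M|-1$ such components have two neighbours in $M$. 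For those with a single neighbour $t$, Corollary~\ref{corol:neatAdhesions} leaves all but $\Oh(1)$ connected with neat adhesions to $t$, and grouping them by attachment (at most two vertices of $X_t$) and isomorphism type yields $\Oh(1)$ groups; a group with $\geq d_\F$ members sits inside a bouquet $\mathcal{B}$, hence — by the no-pruning assumption — has $\leq d_\F+|\Delta_{\mathcal{B}}|$ members. Carrying the edge count of Lemma~\ref{lem:nobouquetsLinear} through (bag-internal edges controlled by the no-pruning bound on thetas, $|F|$ solution edges, $\Oh(b_\F)$ cross-edges per decomposition edge, and the $\leq 2b_\F c_\F$-edge pieces) bounds $\|G\|$ by $\Oh(|M|)+\Oh\!\big(\sum_{\mathcal{B}}|\Delta_{\mathcal{B}}|\big)+\Oh\!\big(\sum_\theta|\Delta_\theta|\big)$, plus a contribution from the number of bouquets and thetas.

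The final step, and the one I expect to be the main obstacle, is the \emph{amortization}: proving that there are only $\Oh(|F|)$ bouquets and thetas in total and that $\sum_{\mathcal{B}}|\Delta_{\mathcal{B}}|+\sum_\theta|\Delta_\theta|=\Oh(|F|)$. For the first part one observes that a bouquet element containing no endpoint of $F$, or a pair of vertices joined by $\geq d_\F$ parallel edges all avoiding $F$, cannot lie inside a component $T'$ of $T-M$ (it would push $\|G[X_{T'}]\|$ above $2b_\F c_\F$), so every bouquet and theta is anchored near $M$, with only $\Oh(1)$ of them per node of $M$. The second part is where Lemma~\ref{lem:disjoint} is essential: distinct bouquets and thetas have disjoint edge sets, and their territories overlap only in attachment vertices, which forces the components of $G-F$ to be distributed among the sets $D_{\mathcal{B}}$, $D_\theta$ with bounded multiplicity, so that each edge of $F$ belongs to only $\Oh(1)$ of the sets $\Delta_{\mathcal{B}}$, $\Delta_\theta$; summing over $F$ then yields the bound. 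Pinning down this counting precisely — in particular accounting for the at most $2|F|$ bouquet and theta elements that do contain endpoints of $F$ and for attachment vertices shared by several bouquets — is the delicate part, and it is what fixes the (large, explicit) constant $c$ as a function of $b_\F$, $c_\F$, $d_\F$, and $\MX$; the running time $\Oh(\|G\|\cdot|G|^2)$ follows since no tree-cut decomposition is computed by the algorithm itself, only used in the analysis.
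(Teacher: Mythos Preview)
Your plan follows the right outline, but the amortization step has a genuine gap, and it is not just a matter of pinning down constants. Your local solution $\Delta_{\mathcal{B}}$ is defined as the set of \emph{all} $F$-edges incident to $D_{\mathcal{B}}$, where $D_{\mathcal{B}}$ is the union of the components of $G-F$ meeting the territory of $\mathcal{B}$. But nothing prevents $G-F$ from being connected (it only has to be $\F$-free); in that case $D_{\mathcal{B}}=V(G)$ and $\Delta_{\mathcal{B}}=F$ for \emph{every} bouquet $\mathcal{B}$. Then the ``no pruning'' threshold becomes $d_\F+|F|$ for each bouquet, and with $\Theta(|F|)$ bouquets you only get $\|G\|=\Oh(|F|^2)$. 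Your argument that ``territories overlap only in attachment vertices, which forces the components of $G-F$ to be distributed among the $D_{\mathcal{B}}$ with bounded multiplicity'' does not hold: Lemma~\ref{lem:disjoint} gives disjointness of edge sets of bouquets, but a single component of $G-F$ can contain the attachment sets of arbitrarily many bouquets, so one $F$-edge can belong to every $\Delta_{\mathcal{B}}$.

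The paper resolves this by choosing $\Delta$ quite differently, and in particular \emph{not} as a subset of $F$. It actually computes the neat tree-cut decomposition of $G-F$ (so the decomposition is part of the algorithm, not only the analysis), takes the lca-closure $M$, and for each $t\in M$ sets $\Delta=F(t)\cup\bigcup_{j\leq p(t)}\adh(tt_j)$, where $F(t)$ are the $F$-edges incident to $X_t$ and $T_1,\dots,T_{p(t)}$ are the components of $T-t$ containing another node of $M$. This $\Delta$ isolates the ``local'' set $Z=X_t\cup\bigcup_j X_{S_j}$ (the $S_j$ being the remaining components of $T-t$) and makes $G[Z]-\Delta=(G-F)[Z]$ $\F$-free, so Lemma~\ref{lem:bouquetBound}/\ref{lem:thetaBound} applies with threshold $f(t):=d_\F+|F(t)|+b_\F\,p(t)$. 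The amortization is then immediate: $\sum_t|F(t)|\leq 2|F|$ by handshaking and $\sum_t p(t)\leq 2(|M|-1)$ since $p(t)$ is a vertex degree in a forest on $M$, so $\sum_t f(t)=\Oh(|F|)$. In short, the trick is to localise with adhesion edges of the decomposition rather than with pieces of $F$; your definition of $\Delta_{\mathcal{B}}$ is correct for the hypothesis of Lemma~\ref{lem:bouquetBound} but cannot be summed linearly.
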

\begin{proof}
We begin as in the proof of Lemma~\ref{lem:nobouquetsLinear}, except that given $F$ we can now do the same effectively.
That is, using Corollary~\ref{corol:neat}, we compute a neat tree-cut decomposition $\T = (T, \mathcal{X})$ of $G-F$ with $\width'(\T)\leq b_\F$ in time $\Oh(\|G\|\cdot |G|^2)$.
For a node $t$ of $T$, by $X_t$ we denote the bag at~$t$.

Let $M\subseteq V(T)$ be the lca-closure of the set of bags containing a vertex incident to $F$; $M$ can be easily computed in linear time.
By Lemma~\ref{lem:lca}, $|M|\leq 4|F|$ and every connected component of $T-M$ has at most two neighbors in $T$.
Then Claim~\ref{cl:rest-protr} from the proof of Lemma~\ref{lem:nobouquetsLinear} can be argued exactly in the same manner.
We recall it for convenience and refer the reader to the proof of Claim~\ref{cl:rest-protr} for the argumentation.

\begin{claim}[Claim~\ref{cl:rest-protr}, restated] \label{cl:rest-protr2}
Take any connected component $T'$ of $T-M$ 
and let $X_{T'}$ be the union of the bags at the nodes of $T'$. Then
\[
\|G[X_{T'}]\| \leq 2b_\F \cdot c_\F.
\]
\end{claim}
\noindent We denote $c_1:=2b_\F \cdot c_\F$.

For a node $t \in M$, let $F(t)$ be the set of those edges of $F$ that are incident to some vertex of $X_t$.
A standard hand-shaking argument shows that 
\begin{equation}\label{eq:sum-F}
\sum_{t\in M} |F(t)| \leq 2|F|.
\end{equation}
Let $T_1,\ldots,T_{p(t)}$ be the connected components of $T-t$ which contain some other nodes of $M$ and let $S_{1},\dots,S_{{q(t)}}$ be those with none.
Again, by $X_{T_i}$, resp. $X_{S_{i}}$, we denote the union of bags at the nodes of $T_i$, respectively~$S_{i}$.

Consider the forest with vertex set $M$ defined as follows: put an edge between two nodes of $M$ if there is a component of $T-M$ that neighbors both of them.
Observe that $p(t)$ is the degree of $t$ in this forest. 
Therefore, as there are at most $|M|-1$ edges in any forest on $|M|$ vertices, we infer that 
\begin{equation}\label{eq:sum-p}
\sum_{t\in M} p(t) \leq 2(|M|-1).
\end{equation}
Define $f(t) := |F(t)|+b_\F \cdot p(t)+d_\F$.
By~\eqref{eq:sum-F} and~\eqref{eq:sum-p} we conclude that 
\begin{equation}\label{eq:sum-f}
\sum_{t\in M} f(t)\ \leq\ 2|F| + b_\F\cdot 2(|M|-1) + |M|\cdot d_\F\ \leq\ d_1 \cdot |F|,\qquad\mathrm{for}\ d_1 := 2 + 8b_\F +4 d_\F.
\end{equation}
We proceed similarly as in the proof of Lemma~\ref{lem:nobouquetsLinear}. Consider an arbitrary $t\in M$.
By Corollary~\ref{corol:neatAdhesions}, at least $q(t)-(2b_\F+1)$ of the trees $S_1,\ldots,S_{q(t)}$ are connected to $t$ via a neat adhesion; let $I_1\subseteq \{1,2,\ldots,q(t)\}$ be the set of their indices.
That is, for each $i\in I_1$ we have that $|\delta(X_{S_i})|\leq 2$ and $N(X_{S_i})$ is a subset of $X_t$ of size at most~$2$. 
Since there are at most $b_\F^2$ subsets of $X_t$ of size at most $2$, at least $(q(t)-2b_\F-1)/b_\F^2$ of subtrees $\{S_i\}_{i\in I_1}$ have the same neighborhood $N(X_{S_i})=U$, 
for some $U \subseteq X_t$ of size at most~$2$; let $I_2\subseteq I_1$ be the set of their indices.
By Claim~\ref{cl:rest-protr2}, for each $i\in I_2$ we have that $X_{S_i}$ induces in $G$ a subgraph with at most $c_1$ edges.
It follows that there are at most $c_2:=(2c_1+1)^{c_1+2}$ possible isomorphism types for graphs $G[X_{S_i}\cup U]$ for $i\in I_2$ (considering isomorphisms that fix $U$).
Therefore, if $(q(t)-2b_\F-1)/(b_\F^2\cdot c_2) \geq f(t)$,
then there is a subset $I_3\subseteq I_2$ of size at least $f(t)$ for which sets $\{X_{S_i}\}_{i\in I_3}$ are elements of a single bouquet $\mathcal{X}$ attached to~$U$.

Let $\Delta\subseteq E(G)$ be the set comprising of $F(t)$ and the adhesions corresponding to the edges connecting subtrees $T_1,\dots,T_{p(t)}$ with $t$ in~$T$.
Then $\Delta$ separates in $G$ the vertices of 
$$Z:=X_t \cup X_{\Ss_1} \cup \ldots \cup X_{\Ss_{q(t)}}$$ 
from the rest of the graph; that is, all the edges between $Z$ and $V(G)\setminus Z$ are contained in $\Delta$.
Since $\Delta$ contains $F(t)$ and none of the sets $X_{\Ss_i}$ is incident to any edge of $F$, we infer that $G[Z]-\Delta$ is $\F$-free.
Hence all the components of $G-\Delta$ that contain some vertex of the bouquet $\mathcal{X}$ are $\F$-free.
By applying Lemma~\ref{lem:bouquetBound}, we infer that either 
\begin{equation}\label{eq:small-boq}
|\mathcal{X}| \leq d_\F + |\Delta| \leq d_\F + |F(t)| + b_\F \cdot p(t)=f(t),
\end{equation}
or all except $f(t)$ elements of the bouquet can be deleted to obtain a strictly smaller subgraph $G'\subsetneq G$ with $\OPT(G')=\OPT(G)$. 
Hence, if~\eqref{eq:small-boq} does not hold, then the algorithm can output $G'$ and terminate.
We can thus conclude the proof, unless for all $t\in M$ the following holds:
\begin{equation*}
(q(t)-2b_\F-1)/(b_\F^2\cdot c_2) \leq f(t),
\end{equation*}
or equivalently
\begin{equation}\label{eq:bnd-q}
q(t) \leq f(t) \cdot c_2 \cdot b_\F^2 + 2b_\F+1.
\end{equation}
We henceforth assume that this is the case.

Similarly, if there are more than $f(t)$ edges with the same pair of endpoints $U \subseteq X_t$, by Lemma~\ref{lem:thetaBound} all but $f(t)$ of them can be deleted to obtain 
a strictly smaller subgraph $G'\subsetneq G$ with $\OPT(G')=\OPT(G)$.
We can thus conclude the proof unless, for each $t\in M$, there is no group of more than $f(t)$ parallel edges connecting the same pair of endpoints in $X_t$.
We henceforth assume that the latter alternative is the case.
Since $|X_t|\leq \width'(\T)\leq b_\F$, we have the following for each $t\in M$:
\begin{equation}\label{eq:bnd-bag}
\|G[X_t]\| \leq b_\F^2 \cdot f(t).
\end{equation}

We proceed with analyzing the size of the instance, with the goal of showing that it is bounded linearly in $|F|$. By~\eqref{eq:sum-f} and~\eqref{eq:bnd-q} we have
\begin{equation*}
\sum_{t\in M} q(t)\ \leq\ c_2 \cdot b_\F^2 \cdot d_1 \cdot |F| + |M| \cdot (2b_\F+1)\ \leq\ c_3 \cdot |F|,\qquad \mathrm{for}\ c_3:=c_2 \cdot b_\F^2 \cdot d_1 + 4(2b_\F+1).
\end{equation*}
That is, the total number of components of $T-M$ with exactly one neighbor in $T$ is at most $c_3 \cdot |F|$.
As we argued before, the number of components of $T-M$ with exactly two neighbors in $T$ is at most $|M|-1$.
Hence, 
\begin{equation}\label{eq:cc-M}
T-M\textrm{ has at most }c_3\cdot |F|+|M|-1\textrm{ connected components in total.}
\end{equation}

We now examine the edges of $G$. Every edge of $G$ is either: 
\begin{enumerate}[(i)]
\item in $F$, or
\item in $G[X_t]$ for some $t\in M$, or
\item in $G[X_{T'}]$ for a component $T'$ of $T-M$, or
\item in an adhesion corresponding to an edge of $T$ connecting a connected component of $T-M$ with a node of $M$.
\end{enumerate}
The total number of edges of each of these types is respectively bounded by: 
\begin{enumerate}[(i)]
\item $|F|$, 
\item $b_\F^2 \cdot d_1 \cdot |F|$ (by~\eqref{eq:sum-f} and~\eqref{eq:bnd-bag}),
\item $c_1 \cdot (c_3 \cdot |F| + |M|-1)$ (by Claim~\ref{cl:rest-protr2} and~\eqref{eq:cc-M}), and 
\item $2b_\F \cdot (c_3 \cdot |F| + |M|-1)$ (by $\width'(\T)\leq b_\F$,~\eqref{eq:cc-M}, and the fact that each component of $T-M$ neighbors at most two nodes of $M$).
\end{enumerate}
Therefore, we conclude that 
\begin{equation*}
\|G\|\leq c \cdot |F|\qquad\mathrm{for}\ c:=1+(c_1+2b_\F)\cdot(c_3 + 4) + b_\F^2 \cdot d_1
\end{equation*}
as required.
\end{proof}

We are ready to conclude the description of our kernelization algorithm, that is, to prove Theorem~\ref{thm:main-ker}. For convenience, we recall its statement and adjust it to the current notation.

\begin{theorem}[Theorem~\ref{thm:main-ker}, reformulated]\label{thm:main-ker-recall}
There is an algorithm that, given an instance $(G,k)$ of {\sc{$\F$-Immersion Deletion}}, runs in time $\Oh(\|G\|^4 \log \|G\| \cdot |G|^3)$ and either correctly concludes that $(G,k)$ is a NO-instance, or outputs an equivalent instance $(G',k)$ such that 
$G'$ has at most $c_{\mathrm{ker}}\cdot k$ edges, where $c_{\mathrm{ker}}$ is a constant depending on $\F$ only.
\end{theorem}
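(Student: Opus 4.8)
The plan is to assemble the kernelization algorithm from the ingredients already developed, and then use the constant-factor approximation together with the amortized bouquet/theta reduction as the final cleanup step.

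\medskip

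\noindent\textbf{Outline of the algorithm.} First I would handle a trivial sanity check: if $k$ is at least, say, $\|G\|$, output a trivial YES-instance; we may thus assume $k < \|G\|$. The main loop maintains a graph $G$ (abusing notation, repeatedly overwritten) with the invariant $\OPT(G)$ unchanged, and proceeds as follows. (1) Apply Exhaustive Protrusion Replacement (Lemma~\ref{lem:exhaustProtrusions}) to obtain a graph whose every connected component has no excessive protrusion, in time $\Oh(\|G\|^2\log\|G\|\cdot|G|^2)$, preserving $\OPT$. (2) If the current graph is $\F$-free, then $\OPT(G)=0$; return $(G',k)$ where $G'$ is the current graph (which, by Lemma~\ref{lem:nobouquetsLinear}-style reasoning or directly, is small — in fact empty of obstructions so we can just output any fixed $\F$-free graph, e.g. a single edge, since the instance is equivalent to a YES-instance). (3) Otherwise, run the constant-factor approximation of Theorem~\ref{thm:approx} on the current graph to get $F$ with $|F|\le c_{\mathrm{apx}}\cdot\OPT(G)$ and $G-F$ being $\F$-free. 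If $|F|>c_{\mathrm{apx}}\cdot k$, then $\OPT(G)>k$, so we may safely report NO. Otherwise $\OPT(G)\le k$. (4) Apply Lemma~\ref{lem:reduceFromApprox} to each connected component of the current graph together with the corresponding restriction of $F$: either the component already has at most $c\cdot|F_{\mathrm{component}}|$ edges, or we find a strictly smaller equivalent subgraph. If some component got strictly smaller, loop back to step (1). If no component shrank, then summing over components, $\|G\|\le c\cdot|F|\le c\cdot c_{\mathrm{apx}}\cdot k$, and we output $(G,k)$ with $c_{\mathrm{ker}}:=c\cdot c_{\mathrm{apx}}$.

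\medskip

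\noindent\textbf{Correctness.} Every operation we perform either preserves $\OPT$ exactly (protrusion replacement by Lemma~\ref{lem:basicReplacer}/\ref{lem:exhaustProtrusions}, the subgraph reduction of Lemma~\ref{lem:reduceFromApprox}) or only triggers a definitive NO-answer when a certified lower bound $\OPT(G)>k$ is in hand. Since $\OPT(G)\le k \iff (G,k)$ is a YES-instance, and $\OPT$ is invariant along the loop, the final instance is equivalent to the input. Note that Lemma~\ref{lem:reduceFromApprox} is stated for connected graphs with no excessive protrusion; after step (1) every component satisfies the latter, and restricting $F$ to a component $C$ gives a set $F_C$ with $C-F_C$ being $\F$-free (an induced subgraph of an $\F$-free graph), so the hypotheses are met. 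The bound $\|G\|\le c\cdot|F|$ produced when no component shrinks is exactly the first alternative in Lemma~\ref{lem:reduceFromApprox}, applied componentwise and summed, using that $\sum_C |F_C| = |F|$.

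\medskip

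\noindent\textbf{Running time.} Each iteration of the loop strictly decreases $\|G\|$ (either protrusion replacement in step~(1) already did, or Lemma~\ref{lem:reduceFromApprox} in step~(4) did; if neither did, we terminate). Hence there are at most $\|G\|$ iterations. Within an iteration, the dominating cost is step~(3), the approximation algorithm, running in $\Oh(\|G\|^3\log\|G\|\cdot|G|^3)$, so the total is $\Oh(\|G\|^4\log\|G\|\cdot|G|^3)$; steps (1), (2), (4) are cheaper. This matches the claimed bound.

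\medskip

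\noindent\textbf{Main obstacle.} The delicate point is bookkeeping the invariant across the loop: each call to Lemma~\ref{lem:reduceFromApprox} needs a \emph{fresh} approximate solution $F$ for the \emph{current} graph, since after shrinking the old $F$ need not even be an edge set of the new graph; hence the recomputation of $F$ inside the loop (step~(3)) rather than once at the start. One must also be slightly careful that Lemma~\ref{lem:reduceFromApprox} takes as input a single connected graph, so the componentwise application and the summation $\|G\|=\sum_C\|C\|\le c\sum_C|F_C|=c|F|$ needs to be spelled out. Everything else is a routine composition of the already-established lemmas.
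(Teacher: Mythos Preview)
Your proof is correct and takes essentially the same approach as the paper: exhaust protrusions, compute an approximate solution $F$, and if the graph is still too large, invoke Lemma~\ref{lem:reduceFromApprox} componentwise to shrink and iterate (the paper equivalently picks one component by averaging $\|G'\|/|F|>c$, which is the contrapositive of your ``no component shrank'' case). One harmless slip: from $|F|\le c_{\mathrm{apx}}\cdot k$ you cannot conclude $\OPT(G)\le k$, only $\OPT(G)\le|F|\le c_{\mathrm{apx}}\cdot k$; since your final bound only uses $|F|\le c_{\mathrm{apx}}\cdot k$, correctness is unaffected.
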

\begin{proof}
We fix the constant $c_{\mathrm{ker}}$ as
$$c_{\mathrm{ker}}:=c_{\mathrm{apx}}\cdot c,$$
where $c$ is the constant given by Lemma~\ref{lem:reduceFromApprox}.
Let $(G,k)$ be the input instance. 

We first apply the algorithm of Lemma~\ref{lem:exhaustProtrusions}, which runs in time $\Oh(\|G\|^2 \log \|G\| \cdot |G|^3)$
and yields a new graph $G'$ such that $\|G'\|\leq \|G\|$, $\OPT(G')=\OPT(G)$ and each connected component of $G'$ has no excessive protrusion.
From now on, we work on the graph $G'$ instead of $G$.
If we already have that $\|G'\|\leq c\cdot k$, then we can simply output $(G',k)$, so assume that this is not the case.

Apply the approximation algorithm of Lemma~\ref{thm:approx} to $G'$, yielding in time $\Oh(\|G\|^3 \log \|G\| \cdot |G|^3)$ a subset of edges $F$ such that $G'-F$ is $\F$-free and $|F|\leq c_{\mathrm{apx}}\cdot \OPT(G')$.
If $|F|>c_{\mathrm{apx}}\cdot k$, then we can infer that $\OPT(G)=\OPT(G')>k$ and thus we terminate the algorithm by concluding that $(G,k)$ is a NO-instance.
Hence, assume otherwise, that $|F|\leq c_{\mathrm{apx}}\cdot k$.
Because $\|G'\|>c_{\mathrm{apx}}\cdot c\cdot k$, we have that 
\begin{equation}\label{eq:sol-density}
\|G'\|/|F|>c.
\end{equation}

For each connected component $H$ of $G'$, let $F_H=F\cap E(H)$.
Obviously $H-F_H$ is $\F$-free, as it is an induced subgraph of $G'-F$.
By~\eqref{eq:sol-density}, there exists some connected component $H$ of $G'$ for which $\|H\|/|F_{H}|>c$, that is, $\|H\|>c\cdot |F_{H}|$.
Therefore, as $H$ is connected and has no excessive protrusions (as a connected component of $G'$), we can apply the algorithm of Lemma~\ref{lem:reduceFromApprox} to $H$.
This application takes $\Oh(\|H\| \cdot |H|^2)$ time and outputs a subgraph $H'$ of $H$ with $\|H'\|<\|H\|$ and $\OPT(H')=\OPT(H)$.
We can now replace $H$ with $H'$ in $G'$, thus obtaining a new graph $G''$, and restart the whole algorithm on the instance $(G'',k)$.
Since $\OPT(H')=\OPT(H)$ and every graph of $\F$ is connected, it follows that also $\OPT(G')=\OPT(G)$ and, hence, the instance $(G'',k)$ is equivalent to $(G,k)$.
Also, $\|H'\|<\|H\|$ implies $\|G''\|<\|G'\|\leq \|G\|$, so the number of edges is strictly smaller in the instance $(G'',k)$ that in the original instance $(G,k)$.

We conclude that the algorithm will either terminate by concluding that $(G,k)$ is a NO instance, or it will output $(G',k)$, provided $\|G'\|\leq c_{\mathrm{ker}}\cdot k$, or
it will restart on an equivalent instance $(G'',k)$ with $\|G''\|<\|G\|$. 
The number of iterations is bounded by the number of edges in the original graph and each iteration takes $\Oh(\|G\|^3 \log \|G\| \cdot |G|^3)$ time, so the running time bound follows.
\end{proof}

\section{Bounding the size of the obstructions}
\label{obstructions}

In order to prove the second part of Theorem~\ref{thm:main-ker} it is enough to prove that, in the statement of the first part, 
 the graph of the equivalent instance $(G',k)$ is an immersion of $G$.  To see this, assume that  $H\in {\cal O}_{k}^{\rm im}={\bf obs}_{\rm im}({\cal G}^{\rm im}_{k,{\cal F}})$. Clearly, $(H,k)$ is a NO-instance of  {\sc{$\F$-Immersion Deletion}}. If we run the kernelization algorithm on $(H,k)$ the result should be a  NO-instance $(H',k)$ where $H'$ is an immersion of $H$.
As $H$ is an immersion obstruction of ${\cal G}^{\rm im}_{k,{\cal F}}$, for every proper immersion of $H$, the pair $(H',k)$
should be a YES-instance. Therefore $H'=H$ and, according to the first statement of Theorem~\ref{thm:main-ker}, $H$ has  a linear, on $k$, number of edges.\medskip

It remains now to modify the kernelization algorithm of Theorem~\ref{thm:main-ker} so that, when 
it runs with input $(G,k)$, it outputs 
a pair $(G',k)$ where $G'$ is an immersion of $G$. Recall that the algorithm, during its execution, either 
applies replacements of replaceable protrusions 
(i.e., $2b_{\cal F}$-protrusions with more than $c_{\cal F}$ edges)  
with smaller ones (Lemma~\ref{lem:basicReplacer}), 
or removes edges from thetas and buckets (Lemma~\ref{lem:reduceFromApprox}).  
Therefore we need to modify the protrusion replacement in Lemma~\ref{lem:basicReplacer} so 
that  $G'$ is an immersion of $G$.  Before we explain this modification,  we first need  some 
definitions.\smallskip

Given two $r$-boundaried graphs $\Gf=(G,(u_1,\ldots,u_r))$
\textrm{and} $\Hf=(H,(v_1,\ldots,v_r)),$ we say that $\Hf$ is a {\em rooted immersion}
of $\Hf$ if $H$ is an immersion of $G$ where the corresponding mapping $\mu_{V}$ 
maps $v_{i}$ to $u_{i}$ for every $i\in\{1,\ldots,r\}$. 
We next argue  that for every $r$, rooted $r$-boundaried graphs are well-quasi-ordered with respect to the rooted-immersion relation. Indeed, Robertson and Seymour proved in~\cite{RobertsonS10} that the set of all colored (by a bounded number of colors) graphs is well-quasi-ordered  with respect to the colored immersion relation (here the function $\mu_{V}$ should additionally map vertices to vertices of the same color). 
By
assigning to the boundaries of the $r$-boundaried graphs $r$ different colors and using one more color for their non-boundary vertices, we deduce that $r$-boundaried graphs are well-quasi-ordered under  the rooted immersion relation.\smallskip

Let ${\cal B}_{r}$ be the set of all $r$-boundaried graphs.
The set ${\cal B}_{r}$ has a partition ${\cal C}^{(r)}=\{{\cal C}_{1}^{(r)},\ldots,{\cal C}_{q_{r}}^{(r)}\}$ such that 
two $r$-boundaried graphs belong in the same set if and only if they have the same signature.
According to Corollary~\ref{cor:num-sig}, $q_{r}\leq 2^{2^{2^{\poly(r,\MX)}}}$. 
For every $i\in\{1,\ldots,q_{r}\}$,  let $\overline{{\cal C}}_{i}$ be the set of rooted immersion-minimal elements of ${\cal C}_{i}$. As $r$-rooted graphs are well-quasi-ordered under rooted-immersions, 
we have that $\overline{\cal C}_{i}^{(r)}$ is a finite set. We now consider the following set $${\cal R}_{{\cal F}}=\bigcup_{r\leq 2b_{\cal F}}\bigcup_{i\in\{1,\ldots,q_{r}\}}\overline{C}_{i}^{(r)}.$$
Notice that, for each $r\in\{1,\ldots,2b_{\cal F}\}$, each $r$-boundaried graph $\Bbb{H}$  belongs in some, say ${\cal C}_{i}^{(r)}$,
of the classes in ${\cal C}^{(r)}$, therefore it should contain as a rooted immersion some of the rooted graphs in  $\overline{\cal C}_{i}^{(r)}$. Let $c_{\cal F}^*$ be the maximum number of edges 
in an $r$-boundaried graph of $R_{\cal F}$.
This implies that every  $r$-boundaried graph $\Bbb{H}$ of more than $c_{\cal F}^*$ edges  can be replaced by an equivalent (i.e., one with the same signature)  $r$-boundaried graph   $\Bbb{H}'$ that belongs in ${\cal R}_{{\cal F}}$ and  is a rooted immersion of $\Bbb{H}$. Therefore, if $(G,k)$ is an instance of  {\sc{$\F$-Immersion Deletion}}, $\Bbb{H}$ is an $2b_{\cal F}$-protrusion of $G$ of more than $c^*_{\cal F}$ edges, and $G=\Bbb{F}\oplus \Bbb{H}$, then $(G',k)$ is an instance equivalent to $(G,k)$
where $G'=\Bbb{F}\oplus \Bbb{H}'$. Moreover, as $\Bbb{H}'$ is a rooted immersion of $\Bbb{G}$,
it follows that $G'$ is an immersion of $G$ as required.\medskip

Notice that the above argumentation does not give any way to compute $c_{\cal F}^{*}$ as 
the, inherently non-constructive, proof in~\cite{RobertsonS10} does not provide any way to compute a  bound to the size of the graphs in $\overline{\cal C}_{i}^{(r)}$ (it only says that $|\overline{\cal C}_{i}^{(r)}|$ is  a finite number). We wish to report  that it is actually possible to prove 
a constructive version of the second statement of Theorem~\ref{thm:main-ker}.
This proof is postponed in later versions of this paper, as it resides on results 
that are currently under developement.

\section{Conclusions}\label{sec:conc}
In this work we have proved that the protrusion machinery, introduced in~\cite{BodlaenderFLPST09,BodlaenderFLPST09meta,FominLMS12}, can be applied to immersion-related problems in a similar manner as to minor-related problems.
In particular, we have given a constant-factor approximation algorithm and a linear kernel for the {\sc{$\F$-Immersion Deletion}} problem, which on one hand mirrors and on the other surpasses
the results of Fomin et al.~\cite{FominLMS12} for {\sc{$\F$-Minor Deletion}}. 
Namely, while the exponent of the polynomial bounding the kernel size provably has to depend on the family $\F$ in the minor setting~\cite{GiannopoulouJLS15},
in the immersion setting we were able to give a linear kernel, with only the multiplicative constant depending on $\F$. We consider this apparent difference of complexity interesting and worth studying further.

The immediate next goal is to lift the technical assumption that all graphs from $\F$ are connected. 
While this assumption plays an important role in several of our proofs, we expect that it is not necessary and can be lifted using the techniques of Kim et al.~\cite{KimLPRRSS13} or
Fomin et al.~\cite{FominLMS12,abs-1204-4230}
that worked in the minor setting.
In fact, a constant-factor approximation, without the assumption on the connectivity of $\F$, can easily be obtained in the following way, as in the full version of the work of Fomin et al.~\cite{abs-1204-4230}.
Since Theorem~\ref{thm:Ffree_tctw} works just as well in the case of $\F$ containing disconnected graphs, 
a set of edges whose deletion makes a graph $\F$-free also makes it a graph tree-cut width bounded by some constant $a_\F$.
Thus, $\OPT_\F(G)$ is not smaller than the optimum size of a set of edges whose deletion turns $G$ into a graph of tree-cut width at most $a_\F$.
Tree-cut width is a graph parameter satisfying the conditions of Corollary~\ref{cor:par-meta}, 
hence given a graph $G$, we can construct in polynomial time a set of edges $F\subseteq E(G)$
of size at most a constant factor larger than $\OPT_\F(G)$, such that $G-F$ has tree-cut width at most $a_\F$.
A standard application of the optimization variant of Courcelle's theorem, due to Arnborg et al.~\cite{ArnborgLS91},
then gives a set $F'\subseteq E(G-F)$ such that $G-F-F'$ is $\F$-free and $|F'| =\OPT_\F(G-F) \leq \OPT_\F(G)$.
Hence by outputting $F \cup F'$, one achieves a constant factor approximation for {\sc{$\F$-Immersion Deletion}}.
For the linear kernel for {\sc{$\F$-Immersion Deletion}}, we so far do not see how to avoid the assumption on the connectivity of $\F$.

We believe that an important conceptual insight that is given by this paper is the confirmation of usefulness of the notions of tree-cut width and tree-cut decompositions.
Our work, together with a few other recent ones~\cite{Ganian0S15,MarxW14,Wollan15,KimOPST15}, shows that tree-cut width is often the right parameter to study in the 
context of problem related to immersions and edge-disjointness, and plays a similar role as treewidth for minors and vertex-disjointness.
We expect that more results of this kind will appear in future.

Clearly, the remaining insisting problem on the study of  {\sc{$\F$-Immersion Deletion}} problem 
is to consider cases where none of the graphs in ${\cal F}$ is 
planar subcubic. This comes as an analogue to instantiations of the  {\sc{$\F$-Minor Deletion}}
problem where ${\cal F}$ contain only non-planar graphs.  
In both cases the existence of a polynomial kernel can been seen as a major challenge in parameterized algorithms. Especially, for  {\sc{$\F$-Immersion Deletion}},  
further advances are necessary on the structure of graphs excluding non-planar 
or non-subcubic immersions. While some  results in this direction have appeared in~\cite{DvorakW15astru,MarxW14,Wollan15,BelmonteGLT2016thes}, it is still unclear 
whether  the current combinatorial insight can produce  general algorithmic 
results on  {\sc{$\F$-Immersion Deletion}}.

\noindent \paragraph*{Acknowledgements.} The authors wish to thank an anonymous referee for suggesting a more direct approach to finding excessive protrusions, as well as Ignasi Sau, Petr Golovach, Eun Jung Kim,
and Christophe Paul for preliminary discussions on the {\sc{$\F$-Immersion Deletion}} problem.


\end{document}